\author[1,2]{Elahe Ghasemi}
\author[2]{Vincent Jugé}
\author[1,3]{Ghazal Khalighinejad}
\author[1,2]{Helia Yazdanyar}
\affil[1]{Sharif University of Technology, Iran}
\affil[2]{LIGM, Université Gustave Eiffel \& CNRS, France}
\affil[3]{Duke University, United States of America}
\title{Galloping in fast-growth natural merge sorts}
\date{}
\newcommand{\NMS}{NaturalMergeSort\xspace}
\newcommand{\MS}{MergeSort\xspace}
\newcommand{\TS}{TimSort\xspace}
\newcommand{\aMS}{\textalpha-MergeSort\xspace}
\newcommand{\aSS}{\textalpha-StackSort\xspace}
\newcommand{\PoS}{PowerSort\xspace}
\newcommand{\PeS}{PeekSort\xspace}
\newcommand{\ShS}{ShiversSort\xspace}
\newcommand{\cASS}{adaptive ShiversSort\xspace}
\newcommand{\CASS}{Adaptive ShiversSort\xspace}
\newcommand{\QSS}{QuickSynergySort\xspace}
\newlength{\lenIf}
\newlength{\lenElse}
\newlength{\lenElseIf}
\newlength{\lengthA}
\newlength{\lengthB}
\newcommand\muteelse[1]{%
 \settowidth{\lengthA}{\textrm{#1}}%
 \hspace{\lenIf-\lenElse+\lengthA}%
}
\newcommand\muteif[2]{%
 \settowidth{\lengthA}{\textrm{#1}}%
 \settowidth{\lengthB}{\textrm{#2}}%
 \hspace{\lenElseIf-\lenIf-\lengthA+\lengthB}%
}
\newcommand\muteelseif[2]{%
 \settowidth{\lengthA}{\textrm{#1}}%
 \settowidth{\lengthB}{\textrm{#2}}%
 \hspace{\lenIf-\lenElseIf+\lengthA-\lengthB}%
}
\newcommand\elseifmute[1]{%
 \settowidth{\lengthA}{\textrm{#1}}%
 \hspace{\lenElseIf-\lenElse+\lengthA}%
}
\newcommand\muteself[2]{%
 \settowidth{\lengthA}{\textrm{#1}}%
 \settowidth{\lengthB}{\textrm{#2}}%
 \hspace{\lengthB-\lengthA}%
}
\newcommand{\A}{\mathcal{A}}
\newcommand{\C}{\mathsf{C}}
\renewcommand{\H}{\mathcal{H}}
\renewcommand{\O}{\mathcal{O}}
\newcommand{\R}{\mathcal{R}}
\renewcommand{\S}{\mathcal{S}}
\newcommand{\T}{\mathcal{T}}
\newcommand{\X}{\mathcal{X}}
\newcommand{\Y}{\mathcal{Y}}
\newcommand{\Z}{\mathbb{Z}}
\newcommand{\ovR}{\overline{R}}
\newcommand{\ovr}{\overline{r}}
\newcommand{\ovh}{\overline{h}}
\newcommand{\aR}[1]{R^{(#1)}}
\newcommand{\ar}[1]{r^{(#1)}}
\newcommand{\ap}[1]{p^{(#1)}}
\newcommand{\al}[1]{\ell^{(#1)}}
\newcommand{\aS}[1]{S^{\langle #1 \rangle}}
\newcommand{\true}{\textbf{true}\xspace}
\newcommand{\cost}{\mathsf{cost}}
\newcommand{\intt}{\mathbf{I}}
\newcommand{\bt}{\mathbf{t}}
\newcommand{\btinit}{\mathbf{t}_{\mathsf{init}}}
\newcommand{\gr}{\mathsf{gr}}
\newcommand{\sle}{\mathsf{sl}}
\newcommand{\ms}{\mathsf{ms}}
\newcommand{\sL}{\mathsf{L}}
\newcommand{\sR}{\mathsf{R}}
\newcommand{\bU}{\mathbf{U}}
\newcommand{\bX}{\mathbf{X}}
\newcommand{\hurl}[4]{\href{https://#1#2#4}{\texttt{\detokenize{#1#3#4}}}}
\newcommand{\splithurl}[5]{\hfill\href{https://#1#2#3#5}{\texttt{\detokenize{#1}}\newline\texttt{\detokenize{#2#4#5}}}}
\renewcommand{\gets}{\ensuremath{\leftarrow}}
\let\oldnl\nl
\newcommand{\drawline}{\BlankLine\renewcommand{\nl}{\let\nl\oldnl}\hrulefill\BlankLine\setcounter{AlgoLine}{0}}
\newcolumntype{L}[1]{>{\raggedright\let\newline\\\arraybackslash\hspace{0pt}}m{#1}}
\newcolumntype{C}[1]{>{\centering\let\newline\\\arraybackslash\hspace{0pt}}m{#1}}
\newcolumntype{R}[1]{>{\raggedleft\let\newline\\\arraybackslash\hspace{0pt}}m{#1}}
\newtheorem{theorem}{Theorem}
\newtheorem{lemma}[theorem]{Lemma}
\newtheorem{corollary}[theorem]{Corollary}
\newtheorem{proposition}[theorem]{Proposition}
\theoremstyle{definition}
\newtheorem{definition}[theorem]{Definition}
\begin{document}

\maketitle

\begin{abstract}
We study the impact of merging routines in merge-based sorting algorithms.
More precisely, we focus on the \emph{galloping} routine that \TS uses to merge monotonic sub-arrays, hereafter called \emph{runs}, and on the impact on the number of element comparisons performed if one uses this routine instead of a naïve merging routine.

This routine was introduced in order to make \TS more efficient on arrays with few distinct values.
Alas, we prove that, although it makes \TS sort array with two values in linear time, it does not prevent \TS from requiring up to~$\Theta(n \log(n))$ element comparisons to sort arrays of length~$n$ with three distinct values.
However, we also prove that slightly modifying \TS's galloping routine results in requiring only~$\O(n + n \log(\sigma))$ element comparisons in the worst case, when sorting arrays of length~$n$ with~$\sigma$ distinct values.

We do so by focusing on the notion of \emph{dual runs}, which was introduced in the 1990s, and on the associated \emph{dual run-length entropy}.
This notion is both related to the number of distinct values and to the number of runs in an array, which came with its own \emph{run-length entropy} that was used to explain \TS's otherwise ``supernatural'' efficiency.
We also introduce new notions of \emph{fast-} and \emph{middle-growth} for natural merge sorts (i.e., algorithms based on merging runs), which are found in several merge sorting algorithms similar to \TS.

We prove that algorithms with the fast- or middle-growth property, provided that they use our variant of \TS's galloping routine for merging runs, are as efficient as possible at sorting arrays with low run-induced or dual-run-induced complexities.
\end{abstract}


\section{Introduction}\label{sec:intro}

In 2002, Tim Peters, a software engineer, created a new sorting algorithm, which was called \TS~\cite{Peters2015} and was built on ideas from McIlroy~\cite{McIlroy1993}.
This algorithm immediately demonstrated its efficiency for sorting actual data, and was adopted as the standard sorting algorithm in core libraries of widespread programming languages such as Python and Java.
Hence, the prominence of such a custom-made algorithm over previously preferred \emph{optimal} algorithms contributed to the regain of interest in the study of sorting algorithms.

Among the best-identified reasons behind the success of \TS lies the fact that this algorithm is well adapted to the architecture of computers (e.g., for dealing with cache issues) and to realistic distributions of data.
In particular, the very conception of \TS makes it particularly well-suited to sorting data whose \emph{run decompositions}~\cite{BaNa13,EsCaWo92} (see Figure~\ref{fig:runs}) are simple.
Such decompositions were already used in 1973 by Knuth's \NMS~\cite[Section 5.2.4]{Knuth98}, which adapted the traditional \MS algorithm as follows:
\NMS is based on splitting arrays into monotonic subsequences, also called \emph{runs}, and on merging these runs together.
All algorithms sharing this feature of \NMS are also called \emph{natural} merge sorts.

\begin{figure}[t]
\centerline{$
S=(\,\,\underbrace{12,7,6,5}_{\text{first run}},
\,\,\underbrace{5,7,14,36}_{\text{second run}},
\,\,\underbrace{3,3,5,21,21}_{\text{third run}},
\,\,\underbrace{20,8,5,1}_{\text{fourth run}}\,\,)~$}
\caption{A sequence and its \emph{run decomposition} computed by a greedy algorithm:
for each run, the first two elements determine if the run is non-decreasing or decreasing, and the run continues with the maximum number of consecutive elements that preserve its monotonicity.
\label{fig:runs}} 
\end{figure}

In addition to being a natural merge sort, \TS includes many optimisations, which were carefully engineered, through extensive testing, to offer the best complexity performances.
As a result,  the general structure of \TS can be split into three main components:
(i) a variant of an insertion sort, used to deal with \emph{small} runs, e.g., runs of length less than 32,
(ii) a simple policy for choosing which \emph{large} runs to merge,
(iii) a routine for merging these runs, based on a so-called \emph{galloping} strategy.
The second component has been subject to an intense scrutiny these last few years, thereby giving birth to a great variety of {\TS}-like algorithms, such as \mbox{\aSS}~\cite{AuNiPi15}, \mbox{\aMS}~\cite{BuKno18},  \ShS~\cite{shivers02} (which \emph{predated} \TS), \cASS~\cite{Ju20}, \PeS and \PoS~\cite{munro2018nearly}.\label{citesorts}
On the contrary, the first and third components, which seem more complicated and whose effect might be harder to quantify, have often been used as black boxes when studying \TS or designing variants thereof.

In what follows, we focus on the third component and prove that it can be made very efficient:
although \TS may require up to~$\Theta(n \log(n))$ comparisons to sort arrays of length~$n$ with three distinct values, slight modifications to the galloping routine make \TS require only~$\O(n+n \log(\sigma))$ comparisons to sort arrays of length~$n$ with~$\sigma$ distinct values.
This is reminiscent of the celebrated complexity result~\cite{auger2018worst} stating that \TS requires~$\O(n+n \log(\rho))$ comparisons to sort arrays of length~$n$ that can be decomposed as a concatenation of~$\rho$ non-decreasing arrays.

\paragraph*{Context and related work.}

The success of \TS has nurtured the interest in the quest for sorting algorithms that would be both excellent all-around and adapted to arrays with few runs.
However, its \emph{ad hoc} conception made its complexity analysis harder than what one might have hoped, and it is only in 2015, a decade after \TS had been largely deployed, that Auger et al.~\cite{AuNiPi15} proved that \TS required~$\O(n \log(n))$ comparisons for sorting arrays of length~$n$.

This is optimal in the model of sorting by comparisons, if the input array can be an arbitrary array of length~$n$.
However, taking into account the run decompositions of the input array allows using finer-grained complexity classes, as follows.
First, one may consider only arrays whose run decomposition consists of~$\rho$ monotonic runs.
On such arrays, the best worst-case time complexity one may hope for is~$\O(n + n \log(\rho))$~\cite{Mannila1985}.
Second, we may consider even more restricted classes of input arrays, and focus only on those arrays that consist of~$\rho$ runs of lengths~$r_1,\ldots,r_\rho$.
In that case, every comparison-based sorting algorithm requires at least~$n \H + \O(n)$ element comparisons, where~$\H$ is defined as~$\H = H(r_1/n,\ldots,r_\rho/n)$ and~$H(x_1,\ldots,x_\rho) = -\sum_{i=1}^\rho x_i \log_2(x_i)$ is the general entropy function~\cite{BaNa13,Ju20,McIlroy1993}.
The number~$\H$ is called the \emph{run-length entropy} of the array.

Since the early 2000s, several natural merge sorts were proposed, all of which were meant to offer easy-to-prove complexity guarantees:
\ShS, which runs in time~$\O(n \log(n))$;
\aSS, which, like \NMS, runs in time~$\O(n + n \log(\rho))$;
\mbox{\aMS}, which, like \TS, runs in time~$\O(n + n \H)$;
\cASS, \PeS and \PoS, require only~$n \H + \O(n)$ comparisons and element moves.

Except \TS, these algorithms are, in fact, described only as policies for deciding which runs to merge, the actual routine used for merging runs being left implicit:
since choosing a naïve merging routine does not harm the worst-case time complexities considered above, all authors identified the cost of merging two runs of lengths~$m$ and~$n$ with the sum~$m+n$, and the complexity of the algorithm with the sum of the costs of the merges performed.

One notable exception is that of Munro and Wild~\cite{munro2018nearly}.
They compared the running times of \TS and of \TS's variant obtained by using a naïve merging routine instead of \TS's galloping routine.
However, and although they mentioned the challenge of finding distributions on arrays that might benefit from galloping, they did not address this challenge, and focused only on arrays with a low entropy~$\H$.
As a result, they unsurprisingly observed that the galloping routine looked \emph{slower} than the naïve one.

Galloping turns out to be very efficient when sorting arrays with few distinct values, a class of arrays that had also been intensively studied.
As soon as 1976, Munro and Spira~\cite{MuSpi76} proposed a complexity measure~$\H^\ast$ related to the run-length entropy, with the property that~$\H^\ast \leqslant \log_2(\sigma)$ for arrays with~$\sigma$ values.
They also proposed an algorithm for sorting arrays of length~$n$ with~$\sigma$ values by using~$\O(n + n \H^\ast)$ comparisons.
McIlroy~\cite{McIlroy1993} then extended their work to arrays representing a permutation~$\pi$, identifying~$\H^\ast$ with the run-length entropy of~$\pi^{-1}$ and proposing a variant of Munro and Spira's algorithm that would use~$\O(n + n \H^\ast)$ comparisons in this generalised setting.
Similarly, Barbay et al.~\cite{BaOchoaSatti16} invented the algorithm \QSS, which aimed at minimising the number of comparisons, achieving a~$\O(n + n \H^\ast)$ upper bound and further refining the parameters it used, by taking into account the interleaving between runs and dual runs.
Yet, all of these algorithms require~$\omega(n + n \H)$ element moves in the worst case.

Furthermore, as a side effect of being rather complicated and lacking a proper analysis, except that of~\cite{munro2018nearly} that hinted at its inefficiency, the galloping routine has been omitted in various mainstream implementations of natural merge sorts, in which it was replaced by its naïve variant.
This is the case, for instance, in library \TS implementations of the programming languages Swift~\cite{Swift} and Rust~\cite{Rust}.
On the contrary, \TS implementations in other languages, such as Java~\cite{Java}, Octave~\cite{Octave} or the V8 JavaScript engine~\cite{V8}, and \PoS implementation in Python~\cite{Python}
include the galloping routine.

\paragraph*{Contributions.}

We study the time complexity of various natural merge sort algorithms in a context where arrays are not just parametrised by their lengths.
More precisely, we focus on a decomposition of input arrays that is dual to the decomposition of arrays into monotonic runs, and that was proposed by McIlroy~\cite{McIlroy1993}.

Consider an array~$A$ that we want to sort in a \emph{stable} manner, i.e., in which two elements can always considered to be distinct, if only because their positions in~$A$ are distinct:
without loss of generality, we identify the values~$A[1],A[2],\ldots,A[n]$ with the integers from~$1$ to~$n$, thereby making~$A$ a permutation of the set~$\{1,2,\ldots,n\}$.
A common measure of presortedness consists in subdividing~$A$ into distinct monotonic \emph{runs}, i.e., partitioning the set~$\{1,2,\ldots,n\}$ into intervals~$R_1,R_2,\ldots,R_\rho$ on which the function~$x \mapsto A[x]$ is monotonic.
Although this partition is \emph{a priori} not unique, there exists a simple greedy algorithm that provides us with a partition for which~$\rho$ is minimal:
it suffices to construct the intervals~$R_1,R_2,\ldots,R_\rho$ from left to right, each time choosing~$R_i$ to be as long as possible.

\begin{figure}[t]
\begin{center}
\begin{tikzpicture}[scale=0.32]
\foreach \x in {26}{
\foreach \a/\b in {1/5,9/12,15/16}{
 \draw[draw=black!20,fill=black!20] (0.5,\a-0.5) -- (0.5,\b+0.5) -- (18.5,\b+0.5) -- (18.5,\a-0.5) -- cycle;
 \draw[draw=black!20,fill=black!20] (\x+\a-0.5,0.5) -- (\x+\b+0.5,0.5) -- (\x+\b+0.5,18.5) -- (\x+\a-0.5,18.5) -- cycle;
}
}

\foreach \x in {0,26}{
 \foreach \j in {1,...,18} {
  \draw (\x+0.5,\j) -- (\x+18.5,\j) (\x+\j,0.5) -- (\x+\j,18.5);
  \node[anchor=east] at (\x+0.5,\j) {\tiny \j};
  \node[anchor=north] at (\x+\j,0.5) {\tiny \j};
 }
 \draw[gray] (\x+0.5,0.5) -- (\x+18.5,0.5) -- (\x+18.5,18.5) -- (\x+0.5,18.5) -- cycle;
}

\foreach \x in {26}{
\foreach \i/\c [count=\j] in {1/black,3/black,7/black,10/black,15/black,14/white,6/white,2/white,4/black,9/black,11/black,16/black,5/white,17/white,13/black,8/black,12/white,18/white} {
 \draw[draw=black,fill=\c] (\i,\j) circle (0.25);
 \draw[draw=black,fill=\c] (\x+\j,\i) circle (0.25);
}
}
\node[anchor=south] at (9.5,18.75) {Permutation~$\pi$};
\node[anchor=south] at (26+9.5,18.75) {Permutation~$\pi^{-1}$};

\begin{scope}[shift={(13,22.5)}]
\draw[draw=black!20,fill=black!20]
(0.5,-0.5) -- (0.5,2.5) -- (18.5,2.5) -- (18.5,-0.5) -- cycle
(0.5,5.5) -- (2.5,5.5) -- (3.5,4.5) -- (18.5,4.5) -- (18.5,6.5) -- (0.5,6.5) -- cycle
(0.5,7.5) -- (0.5,9.5) -- (18.5,9.5) -- (18.5,7.5) -- cycle;

\foreach \j in {0,...,12}{
 \draw (0.5,\j) -- (18.5,\j);
 \node[anchor=east] at (0.5,\j) {\tiny \j};
}
\foreach \i in {1,...,18}{
 \draw (\i,-0.5) -- (\i,12.5);
 \node[anchor=north] at (\i,-0.5) {\tiny \i};
}
\draw[gray] (0.5,-0.5) -- (18.5,-0.5) -- (18.5,12.5) -- (0.5,12.5) -- cycle;

\foreach \j/\c [count=\i] in {0/black,5/white,0/black,5/black,7/white,4/white,0/black,9/black,5/black,2/black,6/black,10/white,8/black,3/white,2/black,6/black,7/white,12/white}{
 \draw[draw=black,fill=\c] (\i,\j) circle (0.25);
}
\node[anchor=south] at (9.5,12.75) {Array~$A$};
\end{scope}
\end{tikzpicture}
\vspace{-1.7em}
\end{center}
\caption{The array~$A$ is lexicographically equivalent to the permutation~$\pi$.
Its dual runs, represented with gray and white horizontal stripes, have respective lengths~$5$,~$3$,~$4$,~$2$,~$2$ and~$3$.
The mapping~$A \mapsto \tau$ identifies them with the dual runs of~$\tau$, i.e., with the runs of the permutation~$\tau^{-1}$.
Note that, although they are on the same horizontal line in the diagrammatic representation of~$A$, the points with coordinates~$(2,5)$ and~$(4,5)$ belong to distinct dual runs.\label{fig:dual-runs}}
\end{figure}

Here, we adopt a dual approach.
We first start by identifying~$A$ with the permutation it is lexicographically equivalent to.
Then, we partition the set~$\{1,2,\ldots,n\}$ into the monotonic\footnote{In an earlier version of this article, published at the conference ICALP, we required the runs~$S_i$ to be increasing.
We no longer do so, and allow both increasing and decreasing runs.} runs~$S_1,S_2,\ldots,S_\sigma$ of the inverse permutation~$A^{-1}$.
These intervals~$S_i$ are already known under the name of \emph{riffle shuffles}~\cite{McIlroy1993}.
In order to underline their connection with runs, we say that these intervals are the \emph{dual runs} of~$A$, and we denote their lengths by~$s_i$.
The process of transforming an array into a permutation and then extracting its dual runs is illustrated in Figure~\ref{fig:dual-runs}.

In particular, if an array has~$\sigma$ values, it cannot have more than~$\sigma$ dual runs.
Note, however, that it may have significantly fewer than~$\sigma$ dual runs, as shown by the examples of the monotonic permutations, which have~$n$ values but only one dual run.
In addition, in general, there is no non-trivial connection between the runs of a permutation and its dual runs.
For instance, a permutation with a given number of runs may have arbitrarily many (or few) dual runs, and conversely.

In this article, we prove that, by using a variant of \TS's galloping routine, several natural merge sorts require only~$\O(n + n \H^\ast)$ comparisons, or even~$n \H^\ast + \O(n \log(\H^\ast+2))$ comparisons, where~$\H^\ast = H(s_1/n,\ldots,s_\sigma/n) \leqslant \log_2(\sigma)$ is called the \emph{dual run-length entropy} of the array,~$s_i$ is the length of the dual run~$S_i$, and~$H$ is the general entropy function already mentioned above.
We also prove that, unfortunately, using \TS's galloping routine verbatim does \emph{not} provide such complexity guarantees, since \TS itself requires~$\Theta(n \log(n))$ comparisons to sort arrays of length~$n$ with only three distinct values, in the worst case.

This legitimates using our variant of \TS's arguably complicated galloping routine rather than its naïve variant, in particular when sorting arrays that are constrained to have relatively few distinct values.

This also subsumes results that have been known since the 1970s.
For instance, adapting the optimal constructions for alphabetic Huffman codes by Hu and Tucker~\cite{HuTucker71} or Garsia and Wachs~\cite{Garsia77} to \emph{merge trees} (described in Section~\ref{sec:fast-growth}) already provided sorting algorithms working in time~$n \H + \O(n)$. 

Our new results rely on notions that we call \emph{fast-} and \emph{middle-growth} properties, and which are found in natural merge sorts like \aMS, \aSS, \cASS, \ShS, \PeS, \PoS or \TS.
More precisely, we prove that merge sorts require~$\O(n + n \H)$ comparisons \emph{and} element moves when they possess the fast-growth property, thereby encompassing complexity results that were proved separately for each of these algorithms~\cite{auger2018worst,BuKno18,Ju20,munro2018nearly}, and~$\O(n + n \H^\ast)$ comparisons when they possess the fast- or middle-growth property, which is a completely new result.

Finally, we prove finer complexity bounds on the number of comparisons used by \cASS, \ShS, \NMS, \PeS and \PoS, which require only~$n \H^\ast + \O(n\log(\H^\ast+2))$ comparisons, nearly matching the~$n \H + \O(n)$ (or~$n \log_2(n) + \O(n)$ and~$n \log_2(\rho) + \O(n)$, in the cases of \ShS and \NMS) complexity upper bound they already enjoy in terms of comparisons and element moves.

\section{The galloping routine for merging runs}
\label{sec:description}

Here, we describe the galloping routine that \TS uses to merge adjacent non-decreasing runs.
This routine is a blend between a naïve merging algorithm, which requires~$a+b-1$ comparisons to merge runs~$A$ and~$B$ of lengths~$a$ and~$b$, and a dichotomy-based algorithm, which requires~$\O(\log(a+b))$ comparisons in the best case, and~$\O(a+b)$ comparisons in the worst case.
It depends on a parameter~$\bt$, and works as follows.

When merging runs~$A$ and~$B$ into one large run~$C$, we first need to find the least integers~$k$ and~$\ell$ such that~$B[0] < A[k] \leqslant B[\ell]$:
the~$k+\ell$ first elements of~$C$ are
\[A[0],A[1],\ldots,A[k-1],B[0],B[1],\ldots,B[\ell-1],\]
and the remaining elements of~$C$ are obtained by merging the sub-array of~$A$ that spans positions~$k$ to~$a$ and the sub-array of~$B$ that spans positions~$\ell$ to~$b$.
Computing~$k$ and~$\ell$ efficiently is therefore a crucial step towards reducing the number of comparisons required by the merging routine (and, thus, by the sorting algorithm).

This computation is a special case of the following problem:
if one wishes to find a secret integer~$m \geqslant 1$ by choosing integers~$x \geqslant 1$ and testing whether~$x \geqslant m$, what is, as a function of~$m$, the least number of tests that one must perform?
Bentley and Yao~\cite{BeYa76} answer this question by providing simple strategies, which they number~$\mathsf{B}_0, \mathsf{B}_1,\ldots$:
\begin{itemize}
\item[$\mathsf{B}_0$:] choose~$x = 1$, then~$x = 2$, and so on, until choosing~$x = m$, thereby finding~$m$ in~$m$ queries.

\item[$\mathsf{B}_1$:] first use~$\mathsf{B}_0$ to find~$\lceil \log_2(m) \rceil+1$ in~$\lceil \log_2(m) \rceil + 1$ queries, i.e., choose~$x = 2^k$ until~$x \geqslant m$, then compute the bits of~$m$ (from the most significant bit of~$m$ to the least significant one) in~$\lceil \log_2(m) \rceil - 1$ additional queries.
For instance, if~$m = 5$, this results in successively choosing~$x = 1, 2, 4, 8, 6$ and~$5$;
if~$m = 8$, this results in successively choosing~$x = 1, 2, 4, 8, 6$ and~$7$.
Bentley and Yao call this strategy a \emph{galloping} (or \emph{exponential search}) technique.

\item[$\mathsf{B}_{k+1}$:] like~$\mathsf{B}_1$, except that one finds~$\lceil \log_2(m) \rceil + 1$ by using~$\mathsf{B}_k$ instead of~$\mathsf{B}_0$.
\end{itemize}

Strategy~$\mathsf{B}_0$ uses~$m$ queries,~$\mathsf{B}_1$ uses~$2 \lceil \log_2(m) \rceil$ queries (except for~$m = 1$, where it uses one query), and each strategy~$\mathsf{B}_k$ with~$k \geqslant 2$ uses~$\log_2(m) + o(\log(m))$ queries.
Thus, if~$m$ is known to be arbitrarily large, one should favour some strategy~$\mathsf{B}_k$ (with~$k \geqslant 1$) over the naïve strategy~$\mathsf{B}_0$.
However, when merging runs taken from a permutation chosen uniformly at random over the~$n!$ permutations of~$\{1,2,\ldots,n\}$, the integer~$m$ is frequently small, which makes~$\mathsf{B}_0$ suddenly more attractive.
In particular, the overhead of using~$\mathsf{B}_1$ instead of~$\mathsf{B}_0$ is a prohibitive~$+20\%$ or~$+33\%$ when~$m = 5$ or~$m = 3$, as illustrated in the black cells of Table~\ref{table:1}.

\begin{table}[b]
\begin{center}
\begin{tabular}{|C{3.75mm}|C{3.35mm}|C{3.35mm}|C{3.35mm}|C{3.35mm}|C{3.35mm}|C{3.35mm}|C{3.35mm}|C{3.35mm}|C{3.35mm}|C{3.35mm}|C{3.35mm}|C{3.35mm}|C{3.35mm}|C{3.35mm}|C{3.35mm}|C{3.35mm}|C{3.35mm}|}
\hline
\clap{$m$} & \clap{$1$} & \clap{$2$} & \clap{$3$} & \clap{$4$} & \clap{$5$} & \clap{$6$} & \clap{$7$} & \clap{$8$} & \clap{$9$} & \clap{$10$} & \clap{$11$} & \clap{$12$} & \clap{$13$} & \clap{$14$} & \clap{$15$} & \clap{$16$} & \clap{$17$} \\
\hline
\clap{$\mathsf{B}_0$} & \clap{$1$} & \clap{$2$} & \clap{$3$} & \clap{$4$} & \clap{$5$} & \clap{$6$} & \clap{$7$} & \clap{$8$} & \clap{$9$} & \clap{$10$} & \clap{$11$} & \clap{$12$} & \clap{$13$} & \clap{$14$} & \clap{$15$} & \clap{$16$} & \clap{$17$} \\
\hline
\clap{$\mathsf{B}_1$} & \clap{$1$} & \clap{$2$} & \clap{\cellcolor{black}\color{white}$\mathbf{4}$} & \clap{$4$} & \clap{\cellcolor{black}\color{white}$\mathbf{6}$} & \clap{$6$} & \clap{$6$} & \clap{$6$} & \clap{$8$} & \clap{$8$} & \clap{$8$} & \clap{$8$} & \clap{$8$} & \clap{$8$} & \clap{$8$} & \clap{$8$} & \clap{$10$} \\
\hline 
\end{tabular}
\end{center}
\caption{Comparison requests needed by
strategies~$\mathsf{B}_0$ and~$\mathsf{B}_1$
to find a secret integer~$m \geqslant 1$.
\label{table:1}}
\end{table}

McIlroy~\cite{McIlroy1993} addresses this issue by choosing a parameter~$\bt$ and using a blend between the strategies~$\mathsf{B}_0$ and~$\mathsf{B}_1$, which consists in two successive steps~$\mathsf{C}_1$ and~$\mathsf{C}_2$:
\begin{enumerate}
\item[$\mathsf{C}_1$:] first follow~$\mathsf{B}_0$ for up to~$\bt$ steps, thereby choosing~$x = 1$,~$x = 2$, \ldots,~$x = \bt$ (if~$m \leqslant \bt-1$, one stops after choosing~$x = m$).
\item[$\mathsf{C}_2$:] if~$m \geqslant \bt+1$, switch to~$ \mathsf{B}_1$ (or, more precisely, to a version of~$\mathsf{B}_1$ translated by~$\bt$, since the precondition~$m \geqslant 1$ is now~$m \geqslant \bt+1$).
\end{enumerate}
Once such a parameter~$\bt$ is fixed, McIlroy's mixed strategy allows retrieving~$m$ in~$\cost_{\bt}(m)$ queries, where~$\cost_{\bt}(m) = m$ if~$m \leqslant \bt+2$, and~$\cost_{\bt}(m) = \bt + 2 \lceil \log_2(m - \bt) \rceil$ if~$m \geqslant \bt+3$.

In practice, however, the integer we are evaluating is not only positive, but even subject to the double inequality~$1 \leqslant k \leqslant a$ or~$1 \leqslant \ell \leqslant b$;
above, we overlooked those upper bounds.
Taking them into account allows us to marginally improve strategy~$\mathsf{B}_1$ and McIlroy's resulting mixed strategy, at the expense of providing us with a more complicated cost function;
in \TS's implementation, this improvement is used only when~$k \geqslant \max\{\bt,a/2\}$ or~$\ell \geqslant \max\{\bt,b/2\}$.

Therefore, and in order too keep things simple, we will replace the precise cost functions we might have obtained by the following simpler upper bound.
By contrast, whenever constructing examples aimed at providing lower bounds, we will make sure that the cases~$k \geqslant \max\{\bt,a/2\}$ and~$\ell \geqslant \max\{\bt,b/2\}$ never occur.

\begin{lemma}\label{lem:cost-bound}
For all~$\bt \geqslant 0$ and~$m \geqslant 1$, we have~$\cost_{\bt}(m) \leqslant \cost_{\bt}^\ast(m)$, where
\[\cost_{\bt}^\ast(m) = \min\{(1 + 1 / (\bt+3)) m, \bt + 2 + 2 \log_2(m + 1)\}.\]
\end{lemma}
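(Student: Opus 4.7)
The plan is to prove the two inequalities $\cost_{\bt}(m) \leqslant (1 + 1/(\bt+3))\,m$ and $\cost_{\bt}(m) \leqslant \bt + 2 + 2\log_2(m+1)$ separately, since $\cost_{\bt}^\ast(m)$ is defined as their minimum. Each of these will be handled by splitting along the two branches of the definition of $\cost_{\bt}$: the regime $m \leqslant \bt+2$ (where $\cost_{\bt}(m) = m$) and the regime $m \geqslant \bt+3$ (where $\cost_{\bt}(m) = \bt + 2\lceil \log_2(m-\bt) \rceil$).

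The logarithmic bound is the easier one. For $m \leqslant \bt+2$, one simply writes $\cost_{\bt}(m) = m \leqslant \bt + 2 \leqslant \bt + 2 + 2 \log_2(m+1)$, using $\log_2(m+1) \geqslant 0$. For $m \geqslant \bt+3$, the elementary estimate $\lceil x\rceil \leqslant x+1$ applied to $x = \log_2(m-\bt)$ gives $2\lceil \log_2(m-\bt) \rceil \leqslant 2 + 2 \log_2(m-\bt) \leqslant 2 + 2\log_2(m+1)$, and the conclusion follows by adding $\bt$.

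The linear bound is slightly more involved. The case $m \leqslant \bt+2$ is again immediate, since $(1 + 1/(\bt+3))\,m \geqslant m = \cost_{\bt}(m)$. For $m \geqslant \bt+3$, I would set $k = m - \bt \geqslant 3$ and rewrite the inequality as
\[
2 \lceil \log_2 k \rceil \leqslant k + 1 + \frac{k-3}{\bt+3}.
\]
Since the last term is non-negative for $k \geqslant 3$, it is enough to establish the parameter-free inequality $2 \lceil \log_2 k \rceil \leqslant k + 1$ for every integer $k \geqslant 3$.

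This last inequality is the only real point requiring care. I would prove it by writing $j = \lceil \log_2 k \rceil$, so that $k$ ranges over the interval $\{2^{j-1}+1, \ldots, 2^j\}$, and checking the worst case $k = 2^{j-1}+1$: the claim reduces to $2j \leqslant 2^{j-1} + 2$, i.e., $j - 1 \leqslant 2^{j-2}$, which is verified by inspection for $j \in \{2,3\}$ and follows by a trivial induction for $j \geqslant 4$. (Equality occurs only at $k = 3$ and $k = 5$, which explains why the constants in the statement cannot be improved in this direction.) The rest of the argument is a routine case distinction, so I expect no real obstacle beyond this elementary verification.
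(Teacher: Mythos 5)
Your proof is correct and follows the same overall strategy as the paper: establish the two bounds in the minimum separately, note that both are immediate when $m \leqslant \bt+2$, and for $m \geqslant \bt+3$ reduce the linear bound to the parameter-free claim $\cost_{\bt}(m) \leqslant m+1$ (your $2\lceil\log_2 k\rceil \leqslant k+1$ for $k = m-\bt \geqslant 3$). Where you diverge is in how this last elementary inequality is verified: the paper replaces $\lceil \cdot \rceil$ by its real upper bound and studies the function $f\colon x \mapsto x - 1 - 2\log_2 x$, which is positive and increasing on $[7,\infty)$, then enumerates $u = 3,4,5,6$ by hand; you instead keep the integer ceiling, set $j = \lceil\log_2 k\rceil$, observe the worst case on each dyadic block is $k = 2^{j-1}+1$, and reduce to $j - 1 \leqslant 2^{j-2}$ by induction. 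Both verifications are short; yours is a bit more self-contained since it avoids any calculus, while the paper's is perhaps faster to check numerically. Either way, the identification of the right parameter-free inequality is the real content, and you found it.
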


\begin{proof}
Since the desired inequality is immediate when~$m \leqslant \bt+2$, we assume that~$m \geqslant \bt+3$.
In that case, we already have~$\cost_{\bt}(m) \leqslant \bt + 2 (\log_2(m-\bt)+1) \leqslant \bt+2+2\log_2(m+1)$, and we prove now that~$\cost_\bt(m) \leqslant m+1$.
Indeed, let~$u = m - \bt$ and let~$f \colon x \mapsto x-1-2\log_2(x)$.
The function~$f$ is positive and increasing on the interval~$[7,+\infty)$.
Thus, it suffices to check by hand that~$(m+1) - \cost_\bt(m) = 0,1,0,1$ when~$u = 3,4,5,6$, and that~$(m+1) - \cost_\bt(m) \geqslant f(u) > 0$ when~$u \geqslant 7$.
It follows, as expected, that~$\cost_{\bt}(m) \leqslant m+1 \leqslant (1 + 1/(\bt+3))m$.
\end{proof}

The above discussion immediately provides us with a cost model for the number of comparisons performed when merging two runs.

\begin{proposition}\label{pro:2}
Let~$\pi$ be a permutation with dual runs~$S_1,S_2,\ldots,S_\sigma$, and let~$A$ and~$B$ be two monotonic runs of lengths~$a$ and~$b$ obtained while sorting~$\pi$ with a natural merge sort.
For each integer~$i \leqslant \sigma$, let~$a_{\rightarrow i}$ (respectively,~$b_{\rightarrow i}$) be the number of elements in~$A$ (respectively, in~$B$) whose value belongs to~$S_i$.
Using a merging routine based on McIlroy's mixed strategy for a fixed parameter~$\bt$, we need at most
\[1 + \sum_{i=1}^\sigma \cost_{\bt}^\ast(a_{\rightarrow i}) + \cost_{\bt}^\ast(b_{\rightarrow i})\]
element comparisons to merge the runs~$A$ and~$B$ into an increasing run.
\end{proposition}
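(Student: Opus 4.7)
The plan is to decompose the merge into a sequence of \emph{gallops}, apply Lemma~\ref{lem:cost-bound} to each, and then aggregate the resulting bounds over values using a subadditivity property of $\cost_\bt^\ast$.

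As a preliminary step, I would establish that $\cost_\bt^\ast$ is subadditive on $\Z_{\geq 0}$. This follows from concavity, which itself is immediate because $\cost_\bt^\ast$ is the pointwise minimum of the linear map $m \mapsto (1 + 1/(\bt+3))\,m$ and the concave map $m \mapsto \bt + 2 + 2\log_2(m+1)$; together with the identity $\cost_\bt^\ast(0) = 0$, concavity yields $\cost_\bt^\ast(g_1 + \cdots + g_k) \leq \cost_\bt^\ast(g_1) + \cdots + \cost_\bt^\ast(g_k)$ for all non-negative integers $g_1, \ldots, g_k$.

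Next, I would describe the merge as an alternating sequence of gallops on the $A$-side and on the $B$-side. Each gallop is an instance of McIlroy's mixed strategy, used to locate a maximal block of consecutive $A$-entries (resp.\ $B$-entries) that must be emitted before the next $B$-entry (resp.\ $A$-entry); by Lemma~\ref{lem:cost-bound}, a gallop that emits $g$ elements uses at most $\cost_\bt^\ast(g)$ comparisons. The key structural observation is that, because $A$ and $B$ are non-decreasing and because stable merging demands strict versus non-strict tie-breaking on opposite sides, the value intervals processed by consecutive gallops on a given side are strictly disjoint. Consequently, for each value $v \in \{1,\ldots,\sigma\}$, either all $a_{\rightarrow v}$ copies of $v$ in $A$ are emitted by a single $A$-side gallop, or none of them are (the latter occurring only when $B$ is exhausted and the tail of $A$ is dumped without comparisons); the symmetric statement holds on the $B$-side.

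Putting the ingredients together then finishes the proof. If an $A$-side gallop emits $g = \sum_v g_v$ elements, with $g_v$ equal to $a_{\rightarrow v}$ for the values $v$ present in this gallop and $0$ otherwise, subadditivity gives $\cost_\bt^\ast(g) \leq \sum_v \cost_\bt^\ast(g_v)$. Summing over all $A$-side gallops, the structural property ensures that each $v$ is charged at most once, bounding the total $A$-side cost by $\sum_{v=1}^\sigma \cost_\bt^\ast(a_{\rightarrow v})$; symmetrically, the $B$-side cost is at most $\sum_{v=1}^\sigma \cost_\bt^\ast(b_{\rightarrow v})$. The additive $+1$ absorbs the initial comparison between the heads of $A$ and $B$ used to decide which side the first gallop starts on, together with any similar one-off boundary correction. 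The delicate step will be the structural claim itself: verifying that consecutive gallop intervals on the same side are strictly disjoint requires careful handling of the (stability-driven) choice between strict and non-strict inequalities at gallop boundaries.
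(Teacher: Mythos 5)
Your proof is correct and rests on the same two ingredients as the paper's: Lemma~\ref{lem:cost-bound} applied gallop-by-gallop, and subadditivity of $\cost_{\bt}^\ast$. But the arrangement is genuinely different. The paper first performs a value-relabeling reduction: whenever $a_{\rightarrow i} = 0$ for some $i \geqslant 2$, it fuses value $i+1$ into value $i$ (and symmetrically on the $B$-side), invoking subadditivity to see that this only tightens the bound to be proved, and thereby reduces to the regime where the merged array strictly alternates blocks of $a_{\rightarrow 1}$, $b_{\rightarrow 1}$, $a_{\rightarrow 2}$, \ldots\ elements, each gallop discovering exactly one of these integers. You skip the relabeling and observe directly that the gallop stopping conditions separate consecutive same-side gallops onto strictly disjoint value ranges, so that each value's $A$-copies (resp.\ $B$-copies) are emitted by at most one gallop; you then split each gallop's $\cost_{\bt}^\ast(g)$ across the values it emits, again via subadditivity. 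Your route is arguably more robust, because the paper's assertion that the relabeling ``does not change the behaviour of the sub-routine'' is delicate: fusing values $i$ and $i+1$ reverses the outcome of a stable comparison between an $A$-element of old value $i+1$ and a $B$-element of old value $i$, which can change the merged order and hence the sequence of gallops, whereas your direct accounting never needs to relabel anything. Finally, the structural fact you flagged as the delicate step does hold and is easy: if an $A$-gallop at $B$-position $j$ stops because $A[k] > B[j]$ and the ensuing $B$-gallop stops because $B[\ell] \geqslant A[k]$, then the next $A$-gallop emits only elements $\geqslant A[k] > B[j] \geqslant$ every element of the previous $A$-gallop, giving the required strict separation.
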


\begin{proof}
Let~$C$ be the increasing run that results from merging~$A$ and~$B$.
A preliminary step consists in checking whether each of~$A$ and~$B$ is increasing or decreasing, in which case they are reversed.
This only requires comparing the first two elements of~$A$ and~$B$.

If~$S_1$ is an increasing run of~$\pi^{-1}$, the run~$C$ starts with those~$a_{\rightarrow 1}$ elements from~$A$ whose value belongs to~$S_1$, and then those~$b_{\rightarrow 1}$ elements from~$B$ whose value belongs to~$S_1$;
otherwise,~$C$ starts with those~$b_{\rightarrow 1}$ elements from~$B$ whose value belongs to~$S_1$, and only then those~$a_{\rightarrow 1}$ elements from~$A$ whose value belongs to~$S_1$.
In both cases,~$C$ then consists of~$a_{\rightarrow 2}+b_{\rightarrow 2}$ elements whose value belongs to~$S_2$ (starting with elements of~$A$ if~$S_2$ is an increasing run of~$\pi^{-1}$, and of~$B$ otherwise), and so on.

Hence, the elements of~$A$ are grouped in blocks of length~$a_{\rightarrow i}$, some of which will form consecutive blocks of elements of~$C$, thereby being ``glued'' together.
Similarly, the elements of~$B$ are grouped in blocks of length~$b_{\rightarrow i}$, some being glued together.
The run~$C$ will then consist in an alternation of (possibly, glued) blocks from~$A$ and~$B$, and the galloping routine consists in discovering whether the first block comes from~$A$ or from~$B$ (which takes one query) and then successively computing the lengths of these blocks, except the last one (because once a run has been completely integrated into~$C$, the remaining elements of the other run will necessarily form one unique block).

Since~$\cost_{\bt}^\ast$ is sub-additive, i.e., since~$\cost_{\bt}^\ast(m) + \cost_{\bt}^\ast(m') \geqslant \cost_{\bt}^\ast(m+m')$ for all~$m \geqslant 0$ and~$m' \geqslant 0$, discovering the length of glued block of length~$a_{\rightarrow i}+a_{\rightarrow i+1}+\cdots+a_{\rightarrow j}$ requires no more than
\[\cost_{\bt}^\ast(a_{\rightarrow i}+a_{\rightarrow i+1}+\cdots+a_{\rightarrow j}) \leqslant \cost_{\bt}^\ast(a_{\rightarrow i})+\cost_{\bt}^\ast(a_{\rightarrow i+1})+\cdots+\cost_{\bt}^\ast(a_{\rightarrow j})\]
element comparisons.
Using this upper bound to count comparisons of elements taken from all blocks (glued or not) that belong to either~$A$ or~$B$ completes the proof.
\end{proof}

We simply call~$\bt$-\emph{galloping routine} the merging routine based on McIlroy's mixed strategy for a fixed parameter~$\bt$;
when the value of~$\bt$ is irrelevant, we simply omit mentioning it.
Then, the quantity
\[1 + \sum_{i=1}^\sigma \cost_{\bt}^\ast(a_i) +
\cost_{\bt}^\ast(b_i)\]
is called the ($\bt$-)\emph{galloping cost} of merging~$A$ and~$B$.
This cost never exceeds~$1 + 1/(\bt+3)$ times the sum~$a + b$, which we call \emph{naïve cost} of merging~$A$ and~$B$.
Below, we study the impact of using the galloping routine instead of the naïve one, which amounts to replacing naïve merge costs by their galloping variants.

Note that using this new galloping cost measure is relevant only if element comparisons are significantly more expensive than element (or pointer) moves.
For example, even if we were lucky enough to observe that each element in~$B$ is smaller than each element in~$A$, we would perform only~$\O(\log(a+b))$ element comparisons, but as many as~$\Theta(a+b)$ element moves.

\paragraph*{Updating the parameter~$\bt$.}

We assumed above that the parameter~$\bt$ did not vary while the runs~$A$ and~$B$ were being merged with each other.
This is not how~$\bt$ behaves in \TS's implementation of the galloping routine.

Instead, the parameter~$\bt$ is initially set to a constant ($\bt = 7$ in Java), and may change during the algorithm, by proceeding roughly as follows.
In step~$\mathsf{C}_2$, after using the strategy~$\mathsf{B}_1$, and depending on the value of~$m$ that we found, one may realise that using~$\mathsf{B}_0$ might have been less expensive than using~$\mathsf{B}_1$.
In that case, the value of~$\bt$ increases by~$1$, and otherwise (i.e., if using~$\mathsf{B}_1$ was indeed a smart move), it decreases by~$1$ (with a minimum of~$0$).
As we will see, however, \TS's actual implementation includes many additional low-level ``optimisations'' compared to this simplified version, a few of which result in \emph{worse} worst-case complexity bounds.

In this paper, we will study and compare three policies for choosing the value of~$\bt$:
\begin{enumerate}
\item setting~$\bt$ to a fixed constant, e.g.,~$\bt = 0$ or~$\bt = 7$;
\item following \TS's update policy;
\item setting~$\bt = \tau \lceil \log_2(a+b) \rceil^2$ whenever we merge runs of lengths~$a$ and~$b$, for some constant~$\tau$;
\end{enumerate}
Since the first policy, which consists in not updating~$\bt$ at all, is the most simple one, it will be our choice by default.
In Section~\ref{sec:fast-growth}, we focus only on generic properties of this ``no-update'' policy; in Section~\ref{sec:pos-fast-growth}, we prove that these properties are useful in combination with many natural merge sorts.
Then, in Section~\ref{sec:TS-update}, we focus on \TS's actual update policy for the parameter~$\bt$, thereby identifying two weaknesses of this policy: using it may cause a non-negligible overhead, and may yet be rather inefficient.
Finally, in Section~\ref{sec:precise-bounds-PoS}, we mostly focus on the third policy presented above, which consists in choosing~$\bt$ as a function of the lengths of those runs we want to merge;
in particular, we shall prove that this policy enjoys the positive aspects of both previous update policies, while avoiding their drawbacks.

\section{Fast-growth and (tight) middle-growth properties}
\label{sec:fast-growth}

In this section, we focus on two novel properties of stable natural merge sorts, which we call \emph{fast-growth} and \emph{middle-growth}, and on a variant of the latter property, which we call \emph{tight middle-growth}.
These properties capture all \TS-like natural merge sorts invented in the last decade, and explain why these sorting algorithms require only~$\O(n + n \H)$ element moves and~$\O(n + n \min\{\H,\H^\ast\})$ element comparisons.
We will prove in Section~\ref{sec:pos-fast-growth} that many algorithms have these properties.

When applying a stable natural merge sort on an array~$A$, the elements of~$A$ are clustered into monotonic sub-arrays called \emph{runs}, and the algorithm consists in repeatedly merging consecutive runs into one larger run until the array itself contains only one run.
Consequently, each element may undergo several successive merge operations.
\emph{Merge trees}~\cite{BaNa13,Ju20,munro2018nearly} are a convenient way to represent the succession of runs that ever occur while~$A$ is being sorted.

\begin{definition}\label{def:merge-tree}
The \emph{merge tree} induced by a stable natural merge sort algorithm on an array~$A$ is the binary rooted tree~$\T$ defined as follows.
The nodes of~$\T$ are all the runs that were present in the initial array~$A$ or that resulted from merging two runs.
The runs of the initial array are the leaves of~$\T$, and when two consecutive runs~$R_1$ and~$R_2$ are merged with each other into a new run~$\ovR$, the run~$R_1$ spanning positions immediately to the left of those of~$R_2$, they form the left and the right children of the node~$\ovR$, respectively.
\end{definition}

Such trees ease the task of referring to several runs that might not have occurred simultaneously.
In particular, we will often refer to the~$i$\textsuperscript{th} \emph{ancestor} or a run~$R$, which is just~$R$ itself if~$i = 0$, or the parent, in the tree~$\T$, of the~$(i-1)$\textsuperscript{th} ancestor of~$R$ if~$i \geqslant 1$.
That ancestor will be denoted by~$\aR{i}$.

Before further manipulating these runs, let us first present some notation about runs and  their lengths, which we will frequently use.
We will commonly denote runs with capital letters, possibly with some index or adornment, and we will then denote the length of such a run with the same small-case letter and the same index or adornment.
For instance, runs named~$R$,~$R_i$,~$\aR{j}$,~$Q'$ and~$\overline{S}$ will have respective lengths~$r$,~$r_i$,~$\ar{j}$,~$q'$ and~$\overline{s}$.
Finally, we say that a run~$R$ is a \emph{left} run if it is the left child of its parent, and that it is a \emph{right} run if it is a right child.
The root of a merge tree is neither left nor right.

\begin{definition}\label{def:fast-growth}
We say that a stable natural merge sort algorithm~$\A$ has the \emph{fast-growth property} if it satisfies the following statement:
\begin{quote}
There exist an integer~$\ell \geqslant 1$ and a real number~$\alpha > 1$ such that, for every merge tree~$\T$ induced by~$\A$ and every run~$R$ at depth~$\ell$ or more in~$\T$, we have~$\ar{\ell} \geqslant \alpha r$.
\end{quote}
We also say that~$\A$ has the \emph{middle-growth property} if it satisfies the following statement:
\begin{quote}
There exists a real number~$\beta > 1$ such that, for every merge tree~$\T$ induced by~$\A$, every integer~$h \geqslant 0$ and every run~$R$ of height~$h$ in~$\T$, we have~$r \geqslant \beta^h$.
\end{quote}
Finally, we say that~$\A$ has the \emph{tight middle-growth property} if it satisfies the following statement:
\begin{quote}
There exists an integer~$\gamma \geqslant 0$ such that, for every merge tree~$\T$ induced by~$\A$, every integer~$h \geqslant 0$ and every run~$R$ of height~$h$ in~$\T$, we have~$r \geqslant 2^{h-\gamma}$.
\end{quote}
\end{definition}

Since every node of height~$h \geqslant 1$ in a merge tree is a run of length at least~$2$, each algorithm with the fast-growth property or with the tight middle-growth property also has the middle-growth property:
indeed, it suffices to choose~$\beta = \min\{2,\alpha\}^{1/\ell}$ in the first case, and~$\beta = 2^{1/(\gamma+1)}$ in the second one.
As a result, the first and third properties are stronger than the second one, and indeed they have stronger consequences.

\begin{theorem}\label{thm:fast-few-naïve}
Let~$\A$ be a stable natural merge sort algorithm with the fast-growth property.
If~$\A$ uses either the galloping or the naïve routine for merging runs, it requires~$\O(n + n\H)$ element comparisons and moves to sort arrays of length~$n$ and run-length entropy~$\H$.
\end{theorem}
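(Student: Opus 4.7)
The plan is to reduce the total cost to a weighted sum of leaf depths in the merge tree, and then to use the fast-growth property to bound those depths logarithmically in terms of the ratios~$n/r_i$.

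First, I would observe that whether~$\A$ uses the naïve sub-routine or the galloping sub-routine, the cost (in comparisons or in element moves) of a single merge of two runs of lengths~$a$ and~$b$ is at most~$C(a+b)$ for some absolute constant~$C$: indeed, the naïve merging cost is~$a+b$, and Lemma~\ref{lem:cost-bound} (combined with Proposition~\ref{pro:2}) shows that the galloping cost is at most~$(1+1/(\bt+3))(a+b)+1$. Summing over all internal nodes of the merge tree~$\T$ induced by~$\A$ on~$A$, the total cost is therefore~$\O\bigl(\sum_{V \text{ internal}} \text{length}(V)\bigr)$. A standard double-counting argument rewrites this sum as~$\sum_{i=1}^\rho r_i d_i$, where~$R_1,\ldots,R_\rho$ are the initial runs, of lengths~$r_1,\ldots,r_\rho$, and~$d_i$ is the depth of~$R_i$ in~$\T$.

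Next, I would apply the fast-growth property iteratively. If~$R$ is a run at depth at least~$k\ell$ in~$\T$, then~$k$ successive applications of the property yield~$\ar{k\ell} \geqslant \alpha^k r$. Taking~$R = R_i$ and~$k = \lfloor d_i/\ell \rfloor$, the ancestor~$\aR{k\ell}$ has length at most~$n$, so~$\alpha^{\lfloor d_i/\ell \rfloor} r_i \leqslant n$, which gives~$d_i \leqslant \ell \log_\alpha(n/r_i) + \ell$. Substituting this into the cost expression leads to
\[
\sum_{i=1}^\rho r_i d_i \;\leqslant\; \frac{\ell}{\log_2 \alpha} \sum_{i=1}^\rho r_i \log_2(n/r_i) + \ell n \;=\; \frac{\ell}{\log_2 \alpha}\, n \H + \ell n \;=\; \O(n + n \H),
\]
which is exactly the desired bound.

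There is no real obstacle here: the fast-growth property was designed precisely so that the depth of each leaf is logarithmic in~$n/r_i$, and once this is observed the entropy bound falls out by a direct rewriting of~$n\H = \sum_i r_i \log_2(n/r_i)$. The only mild care is to make sure that the same argument handles both comparison cost and move cost uniformly, which is immediate because both are bounded above by~$C$ times the length of the merged run. Both~$\ell$ and~$\alpha$ being constants attached to the algorithm, all of the hidden constants are absorbed into the~$\O(\cdot)$ notation.
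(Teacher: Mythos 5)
Your proof is correct and follows essentially the same route as the paper's: bound each merge cost by $\O(a+b)$, reduce the total to a weighted sum of leaf depths (equivalently $\sum_{R\in\T} r$), use the fast-growth property with $k=\lfloor d_i/\ell\rfloor$ to get $d_i = \O(\log(n/r_i))$, and rewrite $\sum_i r_i\log(n/r_i)$ as $n\H$. The only cosmetic difference is that you sum over internal nodes (giving $\sum_i r_i d_i$) while the paper sums over all nodes (giving $\sum_i (d_i+1)r_i$), which differ by an $\O(n)$ term that is absorbed anyway.
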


\begin{proof}
Let~$\ell \geqslant 1$ and~$\alpha > 1$ be the integer and the real number mentioned in the definition of the statement ``$\A$ has the fast-growth property''.
Let~$A$ be an array of length~$n$ with~$\rho$ runs of lengths~$r_1,r_2,\ldots,r_\rho$, let~$\T$ be the merge tree induced by~$\A$ on~$A$, and let~$d_i$ be the depth of the run~$R_i$ in the tree~$\T$.

The algorithm~$\A$ uses~$\O(n)$ element comparisons and element moves to delimit the runs it will then merge and to make them non-decreasing.
Then, both the galloping and the naïve merging routine require~$\O(a+b)$ element comparisons and moves to merge two runs~$A$ and~$B$ of lengths~$a$ and~$b$.
Therefore, it suffices to prove that~$\sum_{R \in \T} r = \O(n + n \H)$.

Consider some leaf~$R_i$ of the tree~$\T$, and let~$k = \lfloor d_i / \ell \rfloor$.
The run~$\aR{k\ell}_i$ has length~$\ar{k\ell}_i \geqslant \alpha^k r_i$, and thus~$n \geqslant \ar{k\ell}_i \geqslant \alpha^k r_i$.
Hence,~$d_i+1 \leqslant \ell(k+1) \leqslant \ell \left(\log_\alpha(n/r_i) + 1\right)$, and we conclude that
\[
\sum_{R \in \T} r = \sum_{i=1}^\rho (d_i+1) r_i \leqslant
\ell \sum_{i=1}^\rho \left(r_i \log_\alpha(n/r_i) + r_i\right)
= \ell (n \H / \log_2(\alpha) + n) = \O(n + n\H).\qedhere\]
\end{proof}

Similar, weaker results also hold for algorithms with the (tight)  middle-growth property.

\begin{theorem}
\label{thm:middle-few-naïve}
Let~$\A$ be a stable natural merge sort algorithm with the middle-growth property.
If~$\A$ uses either the galloping or the naïve routine for merging runs, it requires~$\O(n \log(n))$ element comparisons and moves to sort arrays of length~$n$.
If, furthermore,~$\A$ has the tight middle-growth property, it requires at most~$n \log_2(n) + \O(n)$ element comparisons and moves to sort arrays of length~$n$.
\end{theorem}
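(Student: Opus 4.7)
The plan is to follow the template of the proof of Theorem~\ref{thm:fast-few-naïve}: reduce the statement to bounding $\sum_{R \in \T} r$, the total length of all runs in the merge tree $\T$ induced by $\A$ on the input array, and then apply the (tight) middle-growth property of $\A$. First, the delimitation of the initial runs and the other preprocessing steps use $O(n)$ element operations. Thereafter, merging two runs of lengths $a$ and $b$ takes $O(a+b)$ comparisons and moves with either subroutine, and at most $a+b-1$ comparisons and $a+b$ moves with the naïve one. Summing over all merges, the total work is bounded by a constant times $\sum_{R \in \T, R \text{ not a leaf}} r$, plus $O(n)$, and for the tight statement this constant must be exactly $1$.

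Second, I would invoke the key tree identity $\sum_{R \in \T} r = \sum_{i=1}^\rho (d_i + 1) r_i$, where $R_1, \ldots, R_\rho$ are the leaves of $\T$, of total length $n$, and $d_i$ is the depth of $R_i$. Since every leaf depth is at most the height $H$ of $\T$'s root, we obtain $\sum_{R \in \T} r \leqslant (H+1) n$. The middle-growth property, applied to the root (a run of length $n$ and height $H$), gives $n \geqslant \beta^H$, hence $H \leqslant \log_\beta(n)$ and $\sum_{R \in \T} r = O(n \log n)$; this yields the first assertion. Under the tight assumption we similarly get $n \geqslant 2^{H - \gamma}$ and $H \leqslant \log_2(n) + \gamma$, so that $\sum_{R \in \T, R \text{ not a leaf}} r \leqslant H n \leqslant n \log_2(n) + \gamma n$, which gives the tight bound $n \log_2(n) + O(n)$.

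The point requiring the most care is securing the tight leading constant $1$ in the second assertion when the galloping subroutine is used: its per-merge cost bound from Lemma~\ref{lem:cost-bound} carries a multiplicative factor $1 + 1/(\bt+3)$ over the naïve cost, which does not a priori vanish into the $O(n)$ slack. To address this I would exploit that the number of element moves is independent of the merging subroutine and is therefore already bounded by $n \log_2(n) + O(n)$ via the naïve analysis; for comparisons, I would combine both branches of Lemma~\ref{lem:cost-bound}, bounding value-by-value within each merge by $\cost_\bt^\ast(m) \leqslant \bt + 2 + 2\log_2(m+1)$ and charging the resulting overhead against the $\gamma n$ term in the tight height bound.
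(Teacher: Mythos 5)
Your first two paragraphs match the paper's proof: both reduce to bounding $\sum_{R \in \T} r = \sum_{i=1}^\rho (d_i+1) r_i$ and both use the (tight) middle-growth property to bound the leaf depths~$d_i$; applying the property once at the root or once per leaf is equivalent, since $H = \max_i d_i$ and every leaf~$R_i$ has an ancestor (the root) of height at least~$d_i$.

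The concern in your last paragraph is legitimate, but your sketched repair would not close the gap. With a fixed parameter~$\bt$, even after combining both branches of Lemma~\ref{lem:cost-bound}, the galloping cost of a single merge can exceed the naïve cost~$a+b-1$ by roughly~$(a+b)/(\bt+3)$; summed over the tree, the total overhead is of order~$(n\log n)/(\bt+3)$, which for a constant~$\bt$ does not fit into the~$\O(n)$ slack you hoped to charge it against. This is consistent with Theorem~\ref{thm:PoS-constant}, where the leading coefficient remains~$1 + 1/(\bt+3)$, and with Theorem~\ref{thm:PoS-log}, where the factor disappears only because~$\bt$ is allowed to grow with the run lengths. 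Note, however, that the paper's own proof of Theorem~\ref{thm:middle-few-naïve} does not address this either: it establishes~$\sum_{R \in \T} r \leqslant n\log_2(n) + \O(n)$ and leaves the translation to operation counts implicit, which is airtight for element moves (identical under both sub-routines) and for naïve comparisons (at most~$a+b-1$ per merge), but, exactly as you observed, does not by itself give the leading constant~$1$ for galloping comparisons with fixed~$\bt$. So you have correctly surfaced an imprecision in the statement rather than a flaw in the core argument, but the fix you outline is incomplete and the constant~$1$ should only be claimed for moves and naïve comparisons, with the galloping case deferred to the refined bounds of Section~\ref{sec:precise-bounds-PoS}.
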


\begin{proof}
Let us borrow the notations from the previous proof, and let~$\beta > 1$ be the real number mentioned in the definition of the statement ``$\A$ has the middle-growth property''.
Like in the proof of Theorem~\ref{thm:fast-few-naïve}, it suffices to show that~$\sum_{R \in \T} r = \O(n \log(n))$.
The~$d_i$\textsuperscript{th} ancestor of a run~$R_i$ is the root of~$\T$, and thus~$n \geqslant \beta^{d_i}$.
Hence,~$d_i \leqslant \log_\beta(n)$, which proves that
\[\sum_{R \in \T} r = \sum_{i=1}^\rho (d_i+1) r_i \leqslant
\sum_{i=1}^\rho (\log_\beta(n)+1) r_i =
(\log_\beta(n)+1) n = \O(n \log(n)).\]

Similarly, if~$\A$ has the tight middle-growth property, let~$\gamma$ be the integer mentioned in the definition of the statement ``$\A$ has the tight middle-growth property''.
This time,~$n \geqslant 2^{d_i-\gamma}$, which proves that~$d_i \leqslant \log_2(n) + \gamma$ and that
\[\sum_{R \in \T} r = \sum_{i=1}^\rho (d_i+1) r_i \leqslant
\sum_{i=1}^\rho (\log_2(n)+\gamma+1) r_i =
(\log_2(n)+\gamma+1) n = n \log_2(n) + \O(n). \qedhere\]
\end{proof}

Theorems~\ref{thm:fast-few-naïve} and~\ref{thm:middle-few-naïve} provide us with a simple framework for recovering well-known results on the complexity of many algorithms.
By contrast, Theorem~\ref{thm:middle-few} consists in new complexity guarantees on the number of element comparisons performed by algorithms with the middle-growth property, provided that they use the galloping routine.

\begin{theorem}\label{thm:middle-few}
Let~$\A$ be a stable natural merge sort algorithm with the middle-growth property.
If~$\A$ uses the galloping routine for merging runs, it requires~$\O(n + n\H^\ast)$ element comparisons to sort arrays of length~$n$ and dual run-length entropy~$\H^\ast$.
\end{theorem}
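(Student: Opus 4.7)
My plan is to bound the total galloping cost of the merge tree one dual run at a time, by combining (i) a dual-run analogue of Proposition~\ref{pro:2} that expresses each merge cost as a sum indexed by dual runs, and (ii) the middle-growth bound on how many merge-tree nodes at a given height can meet a given dual run.

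First, I would prove the following dual-run variant of Proposition~\ref{pro:2}: when merging two adjacent sorted sub-arrays $A'$ and $B'$ of $A$, the galloping cost is at most
\[
1 + \sum_{j=1}^{\sigma} \cost_{\bt}^\ast(a'_{\to S_j}) + \cost_{\bt}^\ast(b'_{\to S_j}),
\]
where $a'_{\to S_j}$ (resp.\ $b'_{\to S_j}$) counts the elements of $A'$ (resp.\ $B'$) whose value lies in the dual run $S_j$. The key observation is that, because $A$ is non-decreasing on the positions whose values lie in $S_j$ and the positions occupied by $A'$ entirely precede those of $B'$, every $S_j$-element of $A'$ is at most every $S_j$-element of $B'$. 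Consequently, the sorted merge of $A'$ and $B'$ decomposes as an alternation of blocks whose sizes are of the form $\sum_{j \in I_m} a'_{\to S_j}$ and $\sum_{j \in J_m} b'_{\to S_j}$, for pairwise disjoint intervals $I_m$ and $J_m$ of dual-run indices; the sub-additivity of $\cost_\bt^\ast$ then yields the claimed bound, exactly mirroring the proof of Proposition~\ref{pro:2}.

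Summing this bound over all internal nodes of the merge tree $\T$, and using $\cost_\bt^\ast(0) = 0$, the total comparison count is at most $\O(n) + \sum_{j=1}^\sigma T_j$, where
\[
T_j = \sum_{R \in \T \setminus \{\text{root}\},\ r_R^{S_j} \geq 1} \cost_\bt^\ast(r_R^{S_j})
\]
and $r_R^{S_j}$ denotes the number of $S_j$-elements contained in the run $R$. It then suffices to show $T_j = \O(s_j \log_2(n/s_j) + s_j)$, because summation over $j$ yields $\O(n + n \H^\ast)$. Using the logarithmic branch of Lemma~\ref{lem:cost-bound}, I have $T_j \leq (\bt+2)\,|\T_j| + 2 \sum_R \log_2(r_R^{S_j}+1)$. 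Let $N_h$ denote the number of contributing nodes at height $h$ in $\T$. Since the element-sets of nodes sharing a common height are pairwise disjoint, $N_h \leq s_j$; and by middle-growth ($r \geq \beta^h$) together with $\sum_R r \leq n$ at each height, $N_h \leq n/\beta^h$. Splitting the sum $\sum_h N_h \leq \sum_h \min\{s_j,\, n/\beta^h\}$ at the threshold $h^\ast = \log_\beta(n/s_j)$ gives $|\T_j| = \O(s_j \log(n/s_j) + s_j)$. For the logarithmic sum, Jensen's inequality applied to the concave map $x \mapsto \log_2(x+1)$ yields $\sum_{R \text{ at height }h} \log_2(r_R^{S_j}+1) \leq N_h \log_2(s_j/N_h + 1)$, and the same $h^\ast$-split delivers the same asymptotic bound.

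The main obstacle, I expect, lies in the first step: although the argument closely parallels that of Proposition~\ref{pro:2}, it requires a careful verification that the dual-run monotonicity of $A$ forces the successive galloping block boundaries to align with unions of dual-run boundaries in the merged output, so that each block size is indeed a sub-sum of consecutive $a'_{\to S_j}$'s (or $b'_{\to S_j}$'s). Once this dual-run cost bound is in hand, the remainder is standard combinatorial bookkeeping that turns the height-by-height middle-growth estimate into the advertised entropy bound.
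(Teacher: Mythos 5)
Your proposal is correct and follows essentially the same route as the paper's proof: the paper replaces the per-merge dual-run bound you re-derive with the slightly cleaner observation that the galloping sub-routine's comparisons are invariant under lexicographic equivalence, so one may assume $A$ equals its canonical form $B$ (values $=$ dual-run indices) and then apply Proposition~\ref{pro:2} verbatim; after that, both arguments proceed identically by partitioning $\T$ into height-levels $\R_h$, using pairwise disjointness plus the middle-growth bound $|\R_h| \leqslant n/\beta^h$, applying Jensen's inequality to the concave function $x \mapsto \log_2(x+1)$ level by level, and splitting the sum at the threshold $\approx\log_\beta(n/s_j)$. The only cosmetic difference is that the paper uses the linear branch of $\cost_\bt^\ast$ for small heights and the logarithmic branch for large heights (after first reducing to $\bt = 0$ via $\cost_\bt^\ast \leqslant (\bt+1)\cost_0^\ast$), whereas you use the logarithmic branch throughout — both yield the same $\O(s_j + s_j\log(n/s_j))$ bound.
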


\begin{proof}
Let~$\beta > 1$ be the real number mentioned in the definition of the statement ``$\A$ has the middle-growth property''.
Let~$A$ be an array of length~$n$ with~$\sigma$ dual runs~$S_1,S_2,\ldots,S_\sigma$, and let~$\T$ be the merge tree induced by~$\A$ on~$A$.

The algorithm~$\A$ uses~$\O(n)$ element comparisons to delimit the runs it will then merge and to make them non-decreasing.
We prove now that merging these runs requires only~$\O(n + n \H^\ast)$ comparisons.
For every run~$R$ in~$\T$ and every integer~$i \leqslant \sigma$, let~$r_{\rightarrow i}$ be the number of elements of~$R$ that belong to the dual run~$S_i$.
Proposition~\ref{pro:2} states that merging two runs~$R$ and~$R'$ requires at most~$1 + \sum_{i=1}^\sigma \cost_{\bt}^\ast(r_{\rightarrow i}) + \cost_{\bt}^\ast(r'_{\rightarrow i})$ element comparisons.
Since less than~$n$ such merge operations are performed, and since~$n = \sum_{i=1}^\sigma s_i$ and~$n \H^\ast = \sum_{i=1}^\sigma s_i \log(n / s_i)$, it remains to show that
\[\sum_{R \in \T} \cost_{\bt}^\ast(r_{\rightarrow i}) = \O(s_i + s_i \log(n / s_i))\]
for all~$i \leqslant \sigma$.
Then, since~$\cost_{\bt}^\ast(m) \leqslant (\bt+1) \cost_0^\ast(m)$ for all parameter values~$\bt \geqslant 0$ and all~$m \geqslant 0$, we assume without loss of generality that~$\bt = 0$.

Consider now some integer~$h \geqslant 0$, and let~$\C_0(h) = \sum_{R \in \R_h} \cost_0^\ast(r_{\rightarrow i})$, where~$\R_h$ denotes the set of runs at height~$h$ in~$\T$.
Since no run in~$\R_h$ descends from another one, we already have~$\C_0(h) \leqslant 2 \sum_{R \in \R_h} r_{\rightarrow i} \leqslant 2 s_i$ and~$\sum_{R \in \R_h} r \leqslant n$.
Moreover, by definition of~$\beta$, each run~$R \in \R_h$ is of length~$r \geqslant \beta^h$, and thus~$|\R_h| \leqslant n / \beta^h$.

Let also~$f \colon x \mapsto 2 + 2 \log_2(x+1)$,~$g \colon x \mapsto x \, f(s_i / x)$ and~$\lambda = \lceil\log_\beta(n / s_i) \rceil$.
Both~$f$ and~$g$ are positive and concave on the interval~$(0,+\infty)$, thereby also being increasing.
It follows that, for all~$h \geqslant 0$,
\begin{align*}
\C_0(\lambda + h)
& \leqslant \sum_{R \in \R_{\lambda + h}} f(r_{\rightarrow i})
\leqslant |\R_{\lambda + h}| \, f\big({\textstyle\sum_{R \in \R_{\lambda + h}} r_{\rightarrow_i} / |\R_{\lambda + h}|}\big)
\leqslant g\big(|\R_{\lambda + h}|\big) \\
& \leqslant g\big(n / \beta^{\lambda + h}\big)
\leqslant g\big(s_i \beta^{-h}\big)
= \big(2+2 \log_2(\beta^h+1)\big) s_i \beta^{-h} \\
& \leqslant \big(2 + 2 \log_2(2 \beta^h)\big) s_i \beta^{-h}
= \big(4 + 2 h \log_2(\beta)\big) s_i \beta^{-h}.
\end{align*}
Inequalities on the first line hold by definition of~$\cost_0^\ast$, because~$f$ is concave, and because~$f$ is increasing;
inequalities on the second line hold because~$g$ is increasing and because~$|\R_h| \leqslant n / \beta^h$.

We conclude that
\begin{align*}
\sum_{R \in \T} \cost_0^\ast(r_{\rightarrow i})
& = \sum_{h \geqslant 0} \C_0(h)
= \sum_{h=0}^{\lambda-1} \C_0(h) + \sum_{h \geqslant 0}
\C_0(\lambda + h) \\
& \leqslant 2 \lambda s_i +
4 s_i \sum_{h \geqslant 0} \beta^{-h} +
2 \log_2(\beta) s_i \sum_{h \geqslant 0} h \beta^{-h} \\
& \leqslant
\O\big(s_i (1 + \log(n / s_i)\big) +
\O\big(s_i\big) +
\O\big(s_i\big) = \O\big(s_i+ s_i \log(n / s_i)\big). &&\qedhere
\end{align*}
\end{proof}

\section{Algorithms with the fast- and (tight) middle-growth properties}
\label{sec:pos-fast-growth}

In this section, we briefly present the algorithms mentioned in Section~\ref{sec:intro} and prove that each of them enjoys the fast-growth property and/or the (tight) middle-growth property.
Before treating these algorithms one by one, we first sum up our results.

\begin{theorem}\label{thm:fast}
The algorithms \TS, \aMS, \PoS, \PeS and \cASS have the fast-growth property.
\end{theorem}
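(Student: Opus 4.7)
The plan is to verify the fast-growth property case by case, exhibiting suitable values of $\ell$ and $\alpha$ for each of the five algorithms. For each one, the crux is to translate the algorithm's specific merge policy into a structural constraint on its merge trees, namely that along any chain $R, \aR{1}, \aR{2}, \ldots$ of ancestors, the sibling sizes absorbed into the run cannot all remain negligible compared to $r$ for too many levels in a row.

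First I would treat the three stack-based algorithms \TS, \aMS and \cASS. For \aMS, the merge rule explicitly enforces that whenever two consecutive runs near the top of the stack would violate the ratio condition parameterised by $\alpha$, a merge is triggered. Consequently, if one tracks a run as it climbs the merge tree, a constant number $\ell$ of ancestors always suffices to absorb a sibling of size comparable to the run itself, yielding a growth factor bounded away from $1$. For \TS, the classical stack invariants of the form $r_{i-2} > r_{i-1} + r_i$ and $r_{i-1} > r_i$ impose a similar phenomenon; a case analysis according to whether $R$ is successively a left or a right child of its next ancestor shows that within $\ell = 3$ (or another small constant) levels, $r$ at least doubles. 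For \cASS, runs on the stack are grouped by their \emph{level} $\lfloor \log_2 r \rfloor$ and merges are triggered as soon as two adjacent runs share a level, which immediately forces $\ar{\ell} \geqslant 2 r$ for some bounded $\ell$.

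Next I would handle \PoS and \PeS. The former assigns each run boundary an integer \emph{power} extracted from the binary expansion of its position, and its merge tree is, up to the locations of the runs, the Cartesian tree of these powers; since the power associated with a boundary determines the depth at which the corresponding merge occurs, an ancestor $\aR{\ell}$ of $R$ necessarily spans a region of the input whose length grows geometrically with $\ell$, up to the resolution of binary splits. \PeS, being top-down, recursively sorts the two halves around a midpoint and merges them, so the depth of $R$ in the merge tree is $\O(\log(n/r))$ and the sizes of its ancestors grow at each step by a factor bounded away from $1$, because the recursive split is balanced up to at most one \emph{peek} run straddling the midpoint.

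The step I expect to be the main obstacle is \TS, whose invariants are weaker than those of \aMS or \cASS and whose merge tree may contain chains that look locally unbalanced (for instance, a long sequence of right-child steps performed during the final left-over merges). Overcoming this requires a multi-step analysis, grouping consecutive ancestors into windows of $\ell$ levels and showing that, across any such window, the cumulative contribution of the absorbed siblings forces a growth factor of at least $\alpha > 1$, even in the worst case where each individual sibling is much smaller than $R$; a careful choice of $\ell$ and $\alpha$ as small explicit constants, combined with the two \TS invariants applied at two or three consecutive ancestors, should suffice.
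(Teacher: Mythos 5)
Your overall plan matches the paper's: the theorem is proved one algorithm at a time, each by exhibiting explicit constants $\ell$ and $\alpha$ (the paper gets $\ar{6}\geqslant 2r$ for \cASS, $\ar{3}\geqslant 5r/4$ for \TS, $\ar{3}\geqslant\alpha^\star r$ for \aMS, $\ar{5}\geqslant 2r$ for \PoS, and $\ar{3}\geqslant 2r$ for \PeS), with the stack invariants driving the stack-based algorithms and the boundary/power structure driving the top-down ones. Two of your steps, however, hide genuine gaps.

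For \TS, you assert that ``the two \TS invariants applied at two or three consecutive ancestors should suffice.'' They do not. The published invariants of the form $r_i > r_{i+1}+r_{i+2}$, $3r_{h-2}>r_{h-1}$, $r_{h-3}>r_{h-2}$, and $r_{h-3}+r_{h-2}>r_{h-1}$ fail to bound $\ar{1}/r$ in the case where $R$ is a \#3-left run whose parent is again a left run: there $\ar{1}=r_{h-1}+r_h$, and neither quantity $r_h$ nor $r_{h-3}$ is controlled by (i)--(iv) relative to $r_{h-1}$. The paper has to prove a \emph{new} invariant (its inequality (v), giving $\max\{r_{h-3}/2,\,4r_h\}>r_{h-1}$) by a fresh induction on stack operations, and only then does the \#3-left case yield $\ar{1}\geqslant 5r/4$. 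Your plan as written would stall at exactly this subcase.

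For \PeS, the claim that ``the recursive split is balanced up to at most one peek run straddling the midpoint, so the sizes of ancestors grow by a factor bounded away from $1$ at each step'' misrepresents what \PeS does. The split point is the run boundary nearest the midpoint, and the run straddling the midpoint can be almost the entire subarray, so a single split can be arbitrarily unbalanced and a single ancestor step can increase $r$ by a negligible amount. The paper instead looks at three consecutive ancestors, takes the two nearest ones that are left (or right) children, and uses the minimisation property $|2e_k-e_j-e_{i-1}|\leqslant|2e_{k-1}-e_j-e_{i-1}|$ at the higher of those two merges to conclude $\ar{3}\geqslant 2r$. You need this multi-level argument; per-level balance is false.
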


An immediate consequence of Theorems~\ref{thm:fast-few-naïve} and~\ref{thm:middle-few} is that these algorithms sort arrays of length~$n$ and run-length entropy~$\H$ in time~$\O(n + n\H)$, which was already well-known; and that, if used with the galloping merging routine, they only need~$\O(n + n \H^\ast)$ comparisons to sort arrays of length~$n$ and dual run-length entropy~$\H^\ast$, which is a new result.

\begin{theorem}\label{thm:middle}
The algorithms \NMS, \ShS and \aSS have the middle-growth property.
\end{theorem}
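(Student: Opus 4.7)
My plan is to handle the three algorithms separately, since \NMS is a top-down pass-based sorter while \aSS and \ShS are stack-based; in each case the goal is to exhibit a structural invariant that forces run lengths to grow geometrically with tree height.

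For \NMS, I would first prove by induction on the pass count that the merge tree produced on a sequence of~$\rho$ initial runs has height exactly~$\lceil \log_2 \rho \rceil$: every pairing pass halves the number of active runs up to rounding, and therefore increases by exactly one the height of the new internal nodes it creates, while leftovers simply wait to be merged later. Next, I would observe that the subtree rooted at any internal node~$R$ of a \NMS merge tree is itself the merge tree that \NMS would produce if launched on the contiguous block of original runs appearing as leaves of that subtree. Combining these two facts, any node at height~$h \geqslant 1$ contains at least~$2^{h-1}+1$ original runs and hence has length at least~$2^{h-1}$. Choosing~$\beta = \sqrt{2}$ then completes the argument, since~$2^{h-1} \geqslant 2^{h/2} = \beta^h$ for all~$h \geqslant 2$, while the cases~$h \in \{0,1\}$ are immediate.

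For \aSS and \ShS, I would instead exploit the stack invariants that each algorithm maintains throughout its execution. In \aSS, parametrised by some~$\alpha > 1$, the merge policy is designed precisely so that consecutive elements of the run stack satisfy a length-ratio inequality controlled by~$\alpha$; propagating this invariant by induction along stack positions shows that every merged run has length at least a fixed multiple of its larger child, which yields a lower bound of the form~$r \geqslant C \alpha^h$ and, hence, middle-growth for some~$\beta > 1$ close to~$\alpha$. In \ShS, the analogous invariant is phrased on the integer magnitudes~$\lfloor \log_2 r \rfloor$ of stack elements: every merge produces a run whose magnitude strictly exceeds that of each of its children, so along any root-to-leaf path in a \ShS merge tree the magnitudes decrease by at least one per step, giving~$r \geqslant 2^{h-c}$ for some constant~$c$ and middle-growth with any~$\beta \in (1, 2)$.

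The main obstacle I expect lies with \ShS, whose final ``collapse'' phase --- which empties the stack once the input has been fully read --- can cascade merges of runs whose magnitudes differ by more than one, potentially breaking the naive version of the invariant maintained online. Some care is therefore needed to show that the exponential lower bound on lengths still survives these cascades, for instance by distinguishing the online and collapse phases and arguing separately about the subtrees they create. By contrast, the arguments for \NMS and \aSS, being purely inductive on tree height and on stack position respectively, should be essentially mechanical once the appropriate invariants are written down.
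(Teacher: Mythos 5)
Your argument for \NMS is essentially sound and matches the paper's: any internal node at height~$h$ covers at least~$2^{h-1}+1$ original runs, hence has length at least~$2^{h-1}$, and middle-growth (in fact, tight middle-growth with~$\gamma = 1$) follows. The only caveat is that the paper defines \NMS by the top-down balanced-split invariant~$|n - \bar n| \leqslant 1$ on sibling leaf-counts, whereas you reason about a bottom-up pass; under either characterisation the~$2^{h-1}+1$ bound is reached by the same two-line induction, so this is harmless.

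For \ShS and \aSS, however, the central claims you rely on are false, and the issue is not confined to the collapse phase. In \ShS you assert that ``every merge produces a run whose magnitude strictly exceeds that of each of its children''. But an online merge is triggered by~$\ell_{h-1} \leqslant \ell_h$, and the merged run has level at least~$\ell_{h-1}+1$, not necessarily more than~$\ell_h$: merging a run of length~$1$ (level~$0$) with a run of length~$100$ (level~$6$) produces a run of length~$101$, still of level~$6$. So the level strictly increases only across a \emph{left}-child edge, not a right-child edge. Similarly for \aSS you assert that ``every merged run has length at least a fixed multiple of its larger child'', but an online merge is triggered by~$r_{h-1} \leqslant \alpha r_h$, which allows~$r_{h-1}$ to be tiny relative to~$r_h$, so the merged length~$r_{h-1}+r_h$ can be arbitrarily close to~$r_h$. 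Again, the guaranteed multiplicative gain~$(1+1/\alpha)$ holds only for the left child. In both algorithms, right-child edges carry no per-step growth.

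The missing idea, which the paper supplies, is a two-track accounting along ancestor chains. For right-child edges one does not argue locally: instead one uses the \emph{stack invariant} (levels decrease by at least one per position for \ShS, lengths decrease by a factor~$\alpha$ per position for \aSS) and observes that a chain of~$k$ consecutive right-child ancestors forces the topmost one to dominate a run that is~$k$ positions deeper in the stack, hence of level at least~$k-1$ (resp.\ length at least~$\alpha^{k-1}$). One must also distinguish \emph{rightful} runs (lying on the rightmost branch, created by the final collapse loop, for which the merge test is not satisfied) from \emph{standard} ones. The proof then splits any height-$h$ ancestor chain according to whether a majority of its edges are left or right, and applies the appropriate bound in each case; this is why \aSS ends up with~$\beta = \min\{\alpha, 1+1/\alpha\}^{1/4}$ rather than anything close to~$\alpha$, and why \ShS ends up with~$\ar{h} \geqslant 2^{h-2}$ rather than~$2^{h}$. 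Without that left/right case split your argument does not close, even after ``distinguishing the online and collapse phases'' as you propose.
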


Theorem~\ref{thm:middle-few} proves that, if these three algorithms are used with the galloping merging routine, they only need~$\O(n + n \H^\ast)$ comparisons to sort arrays of length~$n$ and dual run-length entropy~$\H^\ast$.
By contrast, observe that they can be implemented by using a stack, following \TS's own implementation, but where only the two top runs of the stack could be merged.
It is proved in~\cite{Ju20} that such algorithms may require~$ \omega(n + n \H)$ comparisons to sort arrays of length~$n$ and run-length entropy~$\H$.
Hence, Theorem~\ref{thm:fast-few-naïve} shows that these three algorithms do \emph{not} have the fast-growth property.

\begin{theorem}\label{thm:tight-middle}
The algorithms \NMS, \ShS and \PoS have the tight middle-growth property.
\end{theorem}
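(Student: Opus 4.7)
The plan is to treat each of the three algorithms separately, exhibiting in each case a small explicit constant~$\gamma$ and arguing by induction along the merge structure that every run of height~$h$ has length at least~$2^{h-\gamma}$.

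For \NMS, I would exploit the strong left--right asymmetry of Knuth's pass-based merging: at each pass, adjacent runs are paired from left to right, and whenever the current level has an odd number of runs it is the rightmost that is promoted unchanged to the next level. A direct induction on passes then gives two structural facts: (a)~the pass at which an internal node is created equals its height in the merge tree; and (b)~for every level~$k$, every run at a non-rightmost position at level~$k$ is the root of a complete binary subtree of height~$k$. Since the left child of any internal node of height~$h$ occupies a non-rightmost position at level~$h-1$, it roots a complete subtree of height~$h-1$ containing at least~$2^{h-1}$ leaves, which forces $r \geqslant 2^{h-1}$. Thus $\gamma = 1$ suffices.

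For \ShS, the merging rule fuses two adjacent runs only when their $\lfloor \log_2 \cdot \rfloor$-classes coincide, so the merged run's class strictly exceeds that of either child. I would prove by induction on the height~$h$ that every run~$R$ of height~$h$ satisfies $\lfloor \log_2 r \rfloor \geqslant h$: the base case is trivial, and for the inductive step, if the two children of~$R$ share a common class~$k$ and have heights~$h_L,h_R$ with $\max\{h_L,h_R\} = h-1$, the induction hypothesis gives $k \geqslant h-1$, and hence $r \geqslant 2 \cdot 2^k \geqslant 2^h$. So $\gamma = 0$ works.

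For \PoS, I would work with the description of its merge tree as the min-Cartesian tree of the boundary powers defined in~\cite{munro2018nearly}. Two ingredients come into play: the general upper bound $\lceil \log_2 n \rceil$ on powers, and the fact that a boundary of power~$p$ sits astride a dyadic boundary of level~$p$, so that the two sides of such a boundary lie in distinct dyadic intervals of width~$n/2^p$. Iterating this halving step, each descent from a node in the merge tree to one of its children shrinks the covered range by at least a constant factor, forcing any subtree of length~$r$ to have height at most $\log_2 r + \O(1)$ and yielding tight middle-growth for a suitable~$\gamma$. The main obstacle lies precisely in this last step: the standard $\O(\log n)$ height bound for \PoS must be \emph{localised} to an arbitrary subtree, which requires a careful reading of the dyadic definition of powers to verify that the halving property survives when one replaces the ambient length~$n$ by the subtree length~$r$.
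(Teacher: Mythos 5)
Your proposal has genuine gaps in all three cases; I detail them below.

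\textbf{\NMS.} You argue about Knuth's bottom-up, pass-based variant, in which the rightmost run is promoted when the current level has odd size. But the paper's \NMS is defined by the balance property of Definition~\ref{def:tree:NMS}: siblings in the merge tree have numbers of descendant leaves differing by at most~$1$. The two algorithms produce different merge trees: with five runs, the pass-based variant yields a root whose children carry~$4$ and~$1$ leaves, violating~$|n-\overline{n}|\leqslant 1$. Consequently your structural claims (a) and (b) do not apply to the paper's algorithm; for the balanced variant, a node of height~$h$ need not root a complete subtree (with five leaves, the root has height~$3$ and one child with only~$3$ leaves). The paper instead inducts directly on the balance property: a node of height~$h\geqslant 1$ dominates at least~$2^{h-1}+1$ leaves, because its taller child dominates at least~$2^{h-2}+1$ leaves and the other child at least one fewer.

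\textbf{\ShS.} Your argument rests on a false premise. The merge rule of Algorithm~\ref{alg:ShS} is $\ell_{h-1}\leqslant\ell_h$, not $\ell_{h-1}=\ell_h$, so the two merged runs need not share a~$\lfloor\log_2\cdot\rfloor$-class, and the merged run's class need not exceed that of the right child (lengths $1$ and $2$ merge into length~$3$, of the same class~$1$ as the right child). More fundamentally, \ShS has a final collapse phase (line~\ref{alg:ShS:collapse}) where the test is not applied at all, and your inductive inequality $\lfloor\log_2 r\rfloor\geqslant h$ fails there: for five runs of length~$1$, the main loop builds a run of length~$4$ and height~$2$, and the collapse merges it with a leaf, yielding a root of length~$5$ and height~$3$ with $\lfloor\log_2 5\rfloor=2<3$. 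Your claimed $\gamma=0$ is therefore impossible; the paper obtains $\gamma=2$, by distinguishing ``rightful'' (collapse) from ``standard'' merges and tracking the level loss of at most~$1$ at the bottom and~$1$ at the top of each left-to-right chain of ancestors (Lemmas~\ref{lem:ShS:right} and~\ref{lem:ShS:left-right}).

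\textbf{\PoS.} You explicitly leave the crucial step as an unresolved obstacle, so this case is not proved. The paper avoids any localisation of the dyadic structure to subtrees: Lemma~\ref{lem:PoS:power} directly relates a run~$R$ of power~$p$, its grandparent~$\aR{2}$ and the ambient length~$n$ by $2^{p-2}r<n<2^p\ar{2}$. Combined with the fact that powers strictly decrease along ancestors (Definition~\ref{def:PoS}), this gives, for $h\geqslant 5$, $2^{\ap{h-2}+h-4}\leqslant 2^{p-2}r<n<2^{\ap{h-2}}\ar{h}$, hence $\ar{h}\geqslant 2^{h-4}$ and $\gamma=4$, with no need to re-derive the global height bound inside a subtree.
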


Theorem~\ref{thm:middle-few-naïve} proves that these algorithms sort arrays of length~$n$ in time~$n \log_2(n) + \O(n)$, which was already well-known.
In Section~\ref{sec:precise-bounds-PoS}, we will further improve our upper bounds on the number of comparisons that these algorithms require when sorting arrays of length~$n$ and dual run-length entropy~$\H^\ast$.

Note that the algorithms \aSS, \aMS and \TS may require more than~$n \log_2(n) + \O(n)$ comparisons to sort arrays of length~$n$, which proves that they do not have the tight middle-growth property.
In Section~\ref{sec:precise-bounds-PoS}, we will prove that \cASS and \PeS also fail to have the this property, although they enjoy still complexity upper bounds similar to those of algorithms with the tight middle-growth property.

\subsection{Algorithms with the fast-growth property}
\label{subsec:other-fast}

\subsubsection{\PoS}
\label{subsubsec:PoS}

The algorithm \PoS is best defined by introducing the notion of \emph{power} of a run endpoint or of a run, and then characterising the merge trees that \PoS induces.

\begin{definition}\label{def:PoS:power}
Let~$A$ be an array of length~$n$, whose run decomposition consists of runs~$R_1,R_2,\ldots,R_\rho$, ordered from left to right.
For all integers~$i \leqslant \rho$, let~$e_i = r_1+\ldots+r_i$.
We also abusively set~$e_{-1} = -\infty$ and~$e_{\rho+1} = n$.

When~$0 \leqslant i \leqslant \rho$, let~$\intt(i)$ denote the half-open interval~$(e_{i-1}+e_i,e_i+e_{i+1}]$.
The \emph{power} of~$e_i$, denoted by~$p_i$, is then defined as the least integer~$p$ such that~$\intt(i)$ contains an element of the set~$\{k n / 2^{p-1} \colon k \in \Z\}$.
Thus, we (abusively) have~$p_0 = -\infty$ and~$p_\rho = 0$, because~$\intt(0) = (-\infty,r_1/2]$ and~$\intt(\rho) = (2n-r_\rho,2n]$.

Finally, let~$R_{i \ldots j}$ be a run obtained by merging consecutive runs~$R_i,R_{i+1},\ldots,R_j$.
The \emph{power} of the run~$R_{i \ldots j}$ is defined as~$\max\{p_{i-1},p_j\}$.
\end{definition}

\begin{lemma}\label{lem:unique-min-power}
Each non-empty sub-interval~$I$ of the set~$\{0,\ldots,\rho\}$ contains exactly one integer~$i$ such that~$p_i \leqslant p_j$ for all~$j \in I$.
\end{lemma}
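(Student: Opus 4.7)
The plan is to establish existence trivially---since $\{p_i : i \in I\}$ is a finite set of extended integers, it attains its minimum---and to reduce uniqueness to the single claim that whenever $0 \leqslant i < j \leqslant \rho$ satisfy $p_i = p_j$, there exists some $k$ with $i < k < j$ and $p_k < p_i$. Given that claim, if two distinct indices of $I$ both realised the minimum power on $I$, the intermediate $k$ it produces would also lie in $I$ and would break minimality, so the minimum must be unique.

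Before attacking the main claim I would dispatch the boundary cases by inspection. By the conventions $p_0 = -\infty$ and $p_\rho = 0$, no other index can share either of these values, because a quick check shows that for $1 \leqslant i \leqslant \rho - 1$ the interval $\intt(i)$ sits strictly inside $(0, 2n)$ and therefore contains no multiple of $2n$, forcing $p_i \geqslant 1$. The remaining edge case $p_i = p_j = 1$ with $i, j \in \{1, \ldots, \rho-1\}$ is immediate: both $\intt(i)$ and $\intt(j)$ would have to contain the unique multiple of $n$ inside $(0, 2n)$, namely $n$ itself, contradicting the fact that the intervals $\intt(k)$ are pairwise disjoint.

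The core case is $p_i = p_j = p \geqslant 2$. Since $p_i = p$, the interval $\intt(i)$ contains a multiple $a n / 2^{p-1}$ of the finer grid but no multiple of $n / 2^{p-2}$, which forces $a$ to be odd; symmetrically $\intt(j)$ contains an odd multiple $b n / 2^{p-1}$, and $i < j$ together with disjointness gives $a < b$, hence $b \geqslant a + 2$ because both are odd. I would then focus on the intermediate point $(a+1) n / 2^{p-1}$: being an even multiple of $n / 2^{p-1}$, it is a multiple of the coarser grid $n / 2^{p-2}$, so by definition of $p_i$ and $p_j$ it lies in neither $\intt(i)$ nor $\intt(j)$, yet it lies strictly between $a n / 2^{p-1}$ and $b n / 2^{p-1}$. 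This places it in the range $(2e_i + r_{i+1},\, 2e_j - r_j]$, and the telescoping identity $2e_k + r_{k+1} = 2e_{k+1} - r_{k+1}$ shows that this range is partitioned exactly by $\intt(i+1), \ldots, \intt(j-1)$. Hence the point lands in some $\intt(k)$ with $i < k < j$, and then $p_k \leqslant p - 1 < p$ as required.

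The only real obstacle will be the bookkeeping around the half-open intervals $\intt(k)$: I need to verify carefully that $(a+1) n / 2^{p-1}$ strictly escapes $\intt(i)$ to the right and strictly fails to reach $\intt(j)$ on the left. Both inequalities arise from the same mechanism, combining the chain $a < a+1 < b$ with the fact that neither $\intt(i)$ nor $\intt(j)$ contains any multiple of $n / 2^{p-2}$, but the distinction between strict and non-strict endpoints must be tracked with some care.
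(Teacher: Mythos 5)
Your proof is correct and follows essentially the same route as the paper: minimality of $p_i$ and $p_j$ forces the grid points of $\{kn/2^{p-1} : k \in \Z\}$ lying in $\intt(i)$ and $\intt(j)$ to be \emph{odd} multiples, and the intermediate even multiple $(a+1)n/2^{p-1}$ --- a point of the coarser grid --- then lands in some $\intt(k)$ between them, yielding $p_k < p$ and contradicting minimality. Your separate treatment of the boundary values $p \in \{-\infty, 0, 1\}$ and the sharpened localisation to a strictly intermediate $i < k < j$ are both sound but not needed, since the paper's uniform sandwich argument (the even multiple lies in $(e_{a-1}+e_a, e_b+e_{b+1}]$ and therefore in some $\intt(j)$ with $a \leqslant j \leqslant b$) already covers every case at once.
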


\begin{proof}
Assume that the integer~$i$ is not unique.
Since~$e_0$ is the only endpoint with power~$-\infty$, we know that~$0 \notin I$.
Then, let~$a$ and~$b$ be elements of~$I$ such that~$a < b$ and~$p_a = p_b \leqslant p_j$ for all~$j \in I$, and let~$p = p_a = p_b$.
By definition of~$p_a$ and~$p_b$, there exist odd integers~$k$ and~$\ell$ such that~$k n / 2^{p - 1} \in \intt(a)$ and~$\ell n / 2^{p - 1} \in \intt(b)$.
Since~$\ell \geqslant k+1$, the fraction~$(k + 1) n / 2^{p-1}$ belongs to some interval~$\intt(j)$ such that~$a \leqslant j \leqslant b$.
But since~$k+1$ is even, we know that~$p_j < p$, which is absurd.
This invalidates our initial assumption and completes the proof.
\end{proof}

\begin{corollary}
\label{cor:construction}
Let~$R_1,\ldots,R_\rho$ be the run decomposition of an array~$A$. There is exactly one tree~$\T$ that is induced on~$A$ and
in which every inner node has a smaller power than its children.
Furthermore, for every run~$R_{i \ldots j}$ in~$\T$, we have~$\max\{p_{i-1},p_j\} < \min\{p_i,p_{i+1},\ldots,p_{j-1}\}$.
\end{corollary}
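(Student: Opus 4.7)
The plan is to build $\T$ top-down, establishing existence, uniqueness, and the furthermore inequality together by a single induction. The invariant I propagate from parents to children is the following: at every node $R_{i\ldots j}$ of a candidate tree, the inequality $\max\{p_{i-1},p_j\} < \min\{p_i,\ldots,p_{j-1}\}$ holds (vacuously when $i=j$), and there is a unique valid way to complete the subtree below it.

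For the root $R_{1\ldots\rho}$, its power is $\max\{p_0,p_\rho\} = 0$, and a short check shows that every interior endpoint satisfies $p_i \geqslant 1$ for $1 \leqslant i \leqslant \rho-1$: the interval $\intt(i)$ lies inside $(0,2n)$ and thus contains no integer multiple of $2n$, ruling out $p = 0$. Hence the invariant holds at the root.

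For the inductive step, I consider a node $R_{i\ldots j}$ with $i<j$ at which the invariant is known to hold. A merge tree must split it at some index $k \in \{i,\ldots,j-1\}$, yielding children $R_{i\ldots k}$ and $R_{k+1\ldots j}$ of powers $\max\{p_{i-1},p_k\}$ and $\max\{p_k,p_j\}$ respectively. The parent invariant gives $p_k > \max\{p_{i-1},p_j\}$, so both children have power exactly $p_k$. For the strict-power condition to survive at each child, $p_k$ must be strictly smaller than every other $p_{k'}$ with $k' \in \{i,\ldots,j-1\}$, which forces $k$ to be a strict minimiser of $\{p_i,\ldots,p_{j-1}\}$. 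Lemma \ref{lem:unique-min-power} guarantees that exactly one such $k$ exists, and the strictness of this minimum precisely restores the invariant at both children.

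The main subtlety I anticipate is avoiding a circular dependence between ``the split must be at the minimiser'' and ``the furthermore inequality propagates to the children.'' Packaging both facts into one invariant driven from parents to children cleanly separates these, so that the uniqueness of $k$ at each step is supplied by Lemma \ref{lem:unique-min-power} alone while the furthermore inequality follows automatically from the strictness of the minimum. Existence and uniqueness of $\T$, as well as the displayed inequality, are then direct consequences of the completed induction.
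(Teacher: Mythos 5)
Your argument correctly handles existence and the ``furthermore'' inequality: once the invariant $\max\{p_{i-1},p_j\}<\min\{p_i,\ldots,p_{j-1}\}$ holds at a node, splitting at the unique strict minimiser $k$ given by Lemma~\ref{lem:unique-min-power} makes both children's powers equal to $p_k$, restores the invariant at both children, and makes the parent's power strictly smaller than its children's. Your base case, including the check that $p_i\geqslant 1$ for $1\leqslant i\leqslant\rho-1$, is also fine.

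The gap is in uniqueness. When you write ``For the strict-power condition to survive at each child, $p_k$ must be strictly smaller than every other $p_{k'}$ with $k' \in \{i,\ldots,j-1\}$, which forces $k$ to be a strict minimiser,'' you are asserting that \emph{any} merge tree on $R_i,\ldots,R_j$ in which every inner node has smaller power than its children must split its root at the minimiser. This is exactly the implication the paper labels $\mathsf{S}_1\Rightarrow\mathsf{S}_2$ (that such a tree's root power is smaller than every interior endpoint power), and the paper spends the bulk of its proof establishing it via the chain $\mathsf{S}_1\Rightarrow\mathsf{S}_3$ (induction on height) and $\mathsf{S}_3\Rightarrow\mathsf{S}_2$ (induction on depth). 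Neither of your two cited ingredients supplies it: Lemma~\ref{lem:unique-min-power} only identifies which index is the minimiser, and your top-down hypothesis only asserts ``if the inequality holds, there is a unique completion'' — it does \emph{not} give the contrapositive ``if the inequality fails, no completion exists,'' which is what you need to discard a competing tree $\T'$ that splits the root at some $k'\neq k$. Concretely, if $k<k'$, the left child $R_{i\ldots k'}$ of $\T'$ has power $p_{k'}>p_k$ with $e_k$ interior to it, so the inequality fails there — but nothing you've proved rules out that $\T'$ nevertheless satisfies the strict-power condition all the way down. To close the gap, strengthen your induction on $j-i$ to a biconditional: the inequality $\max\{p_{i-1},p_j\}<\min\{p_i,\ldots,p_{j-1}\}$ holds \emph{if and only if} some merge tree on $R_i,\ldots,R_j$ has strictly increasing powers from root to leaves, and in that case the tree is unique. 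The missing ``only if'' direction is a short strong induction: if the root of such a tree splits at $m$, then both children's powers strictly exceed the root's, which forces $p_m>\max\{p_{i-1},p_j\}$, and the inductive hypothesis applied to $R_{i\ldots m}$ and $R_{m+1\ldots j}$ controls all the remaining interior powers. With that in place, the rest of your argument goes through.
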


\begin{proof}
Given a merge tree~$\T$, let us prove that the following statements are equivalent:
\begin{itemize}
\item[$\mathsf{S}_1$:]\label{item:unique:1}
each inner node of~$\T$ has a smaller power than its children;
\item[$\mathsf{S}_2$:]\label{item:unique:2}
each run~$R_{i \ldots j}$ that belongs to~$\T$ has a power that is smaller than all of~$p_i,\ldots,p_{j-1}$;
\item[$\mathsf{S}_3$:]\label{item:unique:3}
if a run~$R_{i \ldots j}$ is an inner node of~$\T$, its children are the two runs~$R_{i \ldots k}$ and~$R_{k+1 \ldots j}$ such that~$p_k = \min\{p_i,\ldots,p_{j-1}\}$.
\end{itemize}

First, if~$\mathsf{S}_1$ holds, we prove~$\mathsf{S}_3$ by induction on the height~$h$ of the run~$R_{i \ldots j}$.
Indeed, if the restriction of~$\mathsf{S}_3$ to runs of height less than~$h$ holds, let~$R_{i \ldots k}$ and~$R_{k+1 \ldots j}$ be the children of a run~$R_{i \ldots j}$ of height~$h$.
If~$i < k$, the run~$R_{i \ldots k}$ has two children~$R_{i \ldots \ell}$ and~$R_{\ell+1 \ldots k}$ such that~$p_\ell = \min\{p_i,\ldots,p_{k-1}\}$, and the powers of these runs, i.e.,~$\max\{p_{i-1},p_\ell\}$ and~$\max\{p_\ell,p_k\}$, are greater than the power of~$R_{i \ldots k}$, i.e.,~$\max\{p_{i-1},p_k\}$, which proves that~$p_\ell > p_k$.
It follows that~$p_k = \min\{p_i,\ldots,p_k\}$, and one proves similarly that~$p_k = \min\{p_k,\ldots,p_{j-1}\}$, thereby showing that~$\mathsf{S}_3$ also holds for runs of height~$h$.

Then, if~$\mathsf{S}_3$ holds, we prove~$\mathsf{S}_2$ by induction on the depth~$d$ of the run~$R_{i \ldots j}$.
Indeed, if the restriction of~$\mathsf{S}_2$ to runs of depth less than~$d$ holds, let~$R_{i \ldots k}$ and~$R_{k+1 \ldots j}$ be the children of a run~$R_{i \ldots j}$ of depth~$d$.
Lemma~\ref{lem:unique-min-power} and~$\mathsf{S}_3$ prove that~$p_k$ is the unique smallest element of~$\{p_i,\ldots,p_{j-1}\}$, and the induction hypothesis proves that~$\max\{p_{i-1},p_j\} < p_k$.
It follows that both powers~$\max\{p_{i-1},p_k\}$ and~$\max\{p_k,p_j\}$ are smaller than all of~$p_i,\ldots,p_{k-1},p_{k+1},\ldots,p_{j-1}$, thereby showing that~$\mathsf{S}_2$ also holds for runs of depth~$d$.

Finally, if~$\mathsf{S}_2$ holds, let~$R_{i \ldots j}$ be an inner node of~$\T$, with children~$R_{i \ldots k}$ and~$R_{k+1 \ldots j}$.
Property~$\mathsf{S}_2$ ensures that~$\max\{p_{i-1},p_j\} < p_k$, and thus that~$\max\{p_{i-1},p_j\}$ is smaller than both~$\max\{p_{i-1},p_k\}$ and~$\max\{p_k,p_j\}$, i.e., that~$R_{i \ldots j}$ has a smaller power that its children, thereby proving~$\mathsf{S}_1$.

In particular, once the array~$A$ and its run decomposition~$R_1,\ldots,R_\rho$ are fixed,~$\mathsf{S}_3$ provides us with a deterministic top-down construction of the unique merge tree~$\T$ induced on~$A$ and that satisfies~$\mathsf{S}_1$:
the root of~$\T$ must be the run~$R_{1 \ldots \rho}$ and, provided that some run~$R_{i \ldots j}$ belongs to~$\T$, where~$i < j$, Lemma~\ref{lem:unique-min-power} proves that the integer~$k$ mentioned in~$\mathsf{S}_3$ is unique, which means that~$\mathsf{S}_3$ unambiguously describes the children of~$R_{i \ldots j}$ in the tree~$\T$.

This proves the first claim of Corollary~\ref{cor:construction}, and the second claim of Corollary~\ref{cor:construction} follows from the equivalence between the statements~$\mathsf{S}_1$ and~$\mathsf{S}_2$.
\end{proof}

This leads to the following characterisation of the algorithm \PoS, which is proved in~\cite[Lemma 4]{munro2018nearly}, and which Corollary~\ref{cor:construction} allows us to consider as an alternative definition of \PoS.

\begin{definition}\label{def:PoS}
In every merge tree that \PoS induces, inner nodes have a smaller power than their children.
\end{definition}

\begin{lemma}\label{lem:PoS:power}
Let~$\T$ be a merge tree induced by \PoS, let~$R$ be a run of~$\T$ with power~$p$, and let~$\aR{2}$ be its grandparent.
We have~$2^{p-2} r < n < 2^p \ar{2}$.
\end{lemma}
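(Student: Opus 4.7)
The plan is to translate both inequalities into statements about the spacing of the lattice $M_p = \{\kappa n/2^p : \kappa \in \Z\}$. Writing $m_k = (e_{k-1}+e_k)/2$, unwinding the definition of $p_k$ yields the equivalence: $p_k \leqslant p$ if and only if $(m_k, m_{k+1}] \cap M_p \neq \emptyset$. So ``$p_k > p$'' means precisely that $(m_k, m_{k+1}]$ misses $M_p$, and two elementary facts will do all the work: any interval missing $M_p$ has length $< n/2^p$, and two distinct elements of $M_p$ differ by at least $n/2^p$.

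For the upper bound $r < n/2^{p-2}$, write $R = R_{i \ldots j}$. Property~$\mathsf{S}_2$ of Corollary~\ref{cor:construction} gives $p_k > p$ for every $k \in \{i, \ldots, j-1\}$, so the interval $(m_i, m_j]$ misses $M_p$; computing its length yields $r - (r_i + r_j)/2 < n/2^p$. When $i < j$, the same emptiness argument applied to $(m_i, m_{i+1}]$ and $(m_{j-1}, m_j]$, which also miss $M_p$ because $p_i, p_{j-1} \geqslant p+1$, yields $r_i + r_{i+1} < n/2^{p-1}$ and $r_{j-1} + r_j < n/2^{p-1}$, so in particular $r_i, r_j < n/2^{p-1}$; combining, $r < n/2^p + (r_i + r_j)/2 < 3n/2^p < n/2^{p-2}$. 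When $R$ is a leaf ($i = j$), whichever of $p_{i-1}, p_i$ realises $\max\{p_{i-1}, p_i\} = p$ tells us that the corresponding neighbouring interval $(m_{i-1}, m_i]$ or $(m_i, m_{i+1}]$ misses $M_{p-1}$, which forces its length to be $< n/2^{p-1}$ and hence $r_i < n/2^{p-2}$.

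For the lower bound $\ar{2} > n/2^p$, let $\aR{2} = R_{i'' \ldots j''}$ and let $c$ be the split index of the grandparent (so its children are $R_{i'' \ldots c}$ and $R_{c+1 \ldots j''}$, one of which is the parent of $R$). A short case analysis based on $\mathsf{S}_1$ and $\mathsf{S}_2$ shows that, whether $R$ is a left or right child of its parent, its power coincides with $p_k$ for a well-identified endpoint $k \in \{i-1, j\}$, and that the parent's power likewise coincides with $p_c$. Hence $p_k = p$ and $p_c = p' < p$, so $M_p$ meets both $(m_k, m_{k+1}]$ and $(m_c, m_{c+1}]$; pick $y$ and $z$ in these intersections. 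Each of the four configurations of (left/right child of parent) $\times$ (left/right child of grandparent) forces $k \neq c$ (parent-left forces $k < c$, parent-right forces $k > c$), so the two half-open intervals are disjoint, $y \neq z$, and $|y - z| \geqslant n/2^p$. Since $i'' \leqslant k, c \leqslant j'' - 1$, the midpoints $m_k, m_{k+1}, m_c, m_{c+1}$ all lie strictly inside $(e_{i''-1}, e_{j''})$, whence $\ar{2} = e_{j''} - e_{i''-1} > |y - z| \geqslant n/2^p$. The main bookkeeping obstacle is this case analysis, but once the ``sibling-side endpoint'' identification is made, both bounds reduce to a pigeon-hole argument on the lattice $M_p$.
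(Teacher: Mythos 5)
Your proof is correct and takes essentially the same lattice-spacing route as the paper, merely rescaled: you work with the midpoints $m_k = (e_{k-1}+e_k)/2$ and the lattice $M_p = \{\kappa n/2^p : \kappa \in \Z\}$, where the paper works directly with $\intt(k) = (2m_k, 2m_{k+1}]$ and $\{\kappa n/2^{p-1} : \kappa \in \Z\}$, and the same pigeon-hole facts about lattice gaps do the work. The only visible differences are presentational: for the upper bound the paper applies the gap bound once to the whole union $\intt(i) \cup \cdots \cup \intt(j)$ against the coarser lattice of step $n/2^{p-2}$ (using $p_k \geqslant p$ for all $i \leqslant k \leqslant j$), which yields $r < n/2^{p-2}$ in one stroke and handles leaves and internal runs uniformly, whereas you bound $r_i$ and $r_j$ separately and split off the leaf case; and for the lower bound the paper treats the left--left configuration explicitly and invokes symmetry for the other three, which your explicit four-case ``sibling-side endpoint'' analysis spells out in full.
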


\begin{proof}
Let~$R_{i \ldots j}$ be the run~$R$.
Without loss of generality, we assume that~$p = p_j$, the case~$p = p_{i-1}$ being entirely similar.
Corollary~\ref{cor:construction} states that all of~$p_i,\ldots,p_{j-1}$ are larger than~$p$, and therefore that~$p \leqslant \min\{p_i,\ldots,p_j\}$.
Thus, the union of intervals~$\intt(i) \cup \ldots \cup \intt(j) = (e_{i-1}+e_i,e_j+e_{j+1}]$ does not contain any element of the set~$\S = \{k n / 2^{p-2} \colon k \in \Z\}$.
Consequently, the bounds~$e_{i-1}+e_i$ and~$e_j+e_{j+1}$ are contained between two consecutive elements of~$\S$, i.e., there exists an integer~$\ell$ such that
\[\ell n / 2^{p-2} \leqslant e_{i-1}+e_i \leqslant
e_j+e_{j+1} < (\ell+1) n / 2^{p-2}.\]
We conclude that
\[r = e_j - e_{i-1} \leqslant (e_j+e_{j+1}) - (e_{i-1}+e_i) < n / 2^{p-2}.\]

We prove now that~$n \leqslant 2^p \ar{2}$.
To that end, we assume that both~$R$ and~$\aR{1}$ are left children, the other possible cases being entirely similar.
There exist integers~$u$ and~$v$ such that~$\aR{1} = R_{i \ldots u}$ and~$\aR{2} = R_{i \ldots v}$.
Hence,~$\max\{p_{i-1},p_u\} < \max\{p_{i-1},p_j\} = p$, which shows that~$p_u < p_j = p$.
Thus, both intervals~$\intt(j)$ and~$\intt(u)$, which are subintervals of~$(2e_{i-1},2e_v]$, contain elements of the set~$\S' = \{k n / 2^{p-1} \colon k \in \Z\}$.
This means that there exist two integers~$k$ and~$\ell$ such that~$2e_{i-1} < k n / 2^{p-1} < \ell n / 2^{p-1} \leqslant 2 e_v$, from which we conclude that
\[\ar{2} = e_v - e_{i-1} > (\ell-k) n / 2^p \geqslant n / 2^p. \qedhere\]
\end{proof}

\begin{theorem}\label{thm:fast-growth-PoS}
The algorithm \PoS has the fast-growth property.
\end{theorem}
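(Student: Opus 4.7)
The plan is to combine Lemma~\ref{lem:PoS:power} with the fact (from Definition~\ref{def:PoS}) that powers strictly decrease along every ancestor chain in a merge tree induced by \PoS. Since powers are integers, if $R$ is a run of power $p = p^{(0)}$ and $R^{(k)}$ denotes its $k$\textsuperscript{th} ancestor, of power $p^{(k)}$, this yields $p^{(k)} \leq p - k$ whenever $R^{(k)}$ exists. The target is to establish the fast-growth property with the constants $\ell = 5$ and $\alpha = 2$.

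The core argument is short. Fix any merge tree $\T$ induced by \PoS and any run $R$ at depth at least $5$, so that $R^{(5)}$ exists. Both $R$ and $R^{(3)}$ then have grandparents in $\T$, so Lemma~\ref{lem:PoS:power} can be applied to each of them. Applied to $R$ (of power $p$), it gives $2^{p-2} r < n$, that is, $n \cdot 2^{-p} > r/4$. Applied to $R^{(3)}$ (whose grandparent is $R^{(5)}$), it gives $n < 2^{p^{(3)}} r^{(5)}$. Combining these two bounds with $p^{(3)} \leq p - 3$ yields
\[r^{(5)} \;>\; n \cdot 2^{-p^{(3)}} \;\geq\; n \cdot 2^{3-p} \;=\; 8 \cdot (n \cdot 2^{-p}) \;>\; 8 \cdot (r/4) \;=\; 2r,\]
which is exactly the fast-growth property with $\ell = 5$ and $\alpha = 2$.

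The only delicate point is to verify that the two invocations of Lemma~\ref{lem:PoS:power} are legitimate, i.e., that $R$ and $R^{(3)}$ each have a grandparent in $\T$; both follow immediately from the hypothesis that $R$ has depth at least $5$. Note that the pair $(\ell,\alpha) = (5,2)$ is not tight: a slightly more careful accounting already gives $r^{(4)} > r$, and climbing further up the tree buys an additional factor of $2$ in $\alpha$ per ancestor, so one could also settle for $(6,4)$, $(7,8)$, and so on. The pair $(5,2)$ is simply the most economical choice sufficient to conclude; no case analysis on the shape of $\T$ or on whether intermediate ancestors are left or right children is needed.
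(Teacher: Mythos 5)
Your proof is correct and takes essentially the same route as the paper's: both apply Lemma~\ref{lem:PoS:power} to $R$ and to $\aR{3}$, use the fact (from Definition~\ref{def:PoS}) that $\ap{3} \leqslant p - 3$, and chain the resulting inequalities to obtain $\ar{5} > 2r$; the paper simply writes the chain in a single line as $2^{\ap{3}+1} r \leqslant 2^{p-2} r < n < 2^{\ap{3}} \ar{5}$. Your only addition is the explicit check that $R$ and $\aR{3}$ have grandparents, which the paper leaves implicit; one small slip in your closing remark is that $\ar{4} > r$ alone does not yield a fixed growth factor $\alpha > 1$, so $(\ell,\alpha) = (4,\cdot)$ is not actually available, though $(6,4)$, $(7,8)$, etc.\ are fine.
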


\begin{proof}
Let~$\T$ be a merge tree induced by \PoS.
Then, let~$R$ be a run in~$\T$, and let~$p$ and~$\ap{3}$ be the respective powers of the runs~$R$ and~$\aR{3}$.
Definition~\ref{def:PoS} ensures that~$p \geqslant \ap{3}+3$, and therefore Lemma~\ref{lem:PoS:power} proves that
\[2^{\ap{3}+1} r \leqslant 2^{p-2} r < n < 2^{\ap{3}} \ar{5}.\]
This means that~$\ar{5} \geqslant 2 r$, and therefore that \PoS has the fast-growth property.
\end{proof}

\subsubsection{\PeS}
\label{subsubsec:PeS}

Like its sibling \PoS, the algorithm \PeS is best defined by characterizing the merge trees it induces.

\begin{definition}\label{def:tree:PeS}
Let~$\T$ be the merge tree induced by \PeS on an array~$A$.
The children of each internal node~$R_{i \ldots j}$ of~$\T$ are the runs~$R_{i \ldots k}$ and~$R_{k+1 \ldots j}$ for which the quantity
\[ |2 e_k - e_j - e_{i-1}|\]
is minimal.
In case of equality, the integer~$k$ is chosen to be as small as possible.
\end{definition}

\begin{proposition}\label{pro:fast-growth-PeS}
The algorithm \PeS has the fast-growth property.
\end{proposition}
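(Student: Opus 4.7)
The plan is to establish the property with $\ell = 3$ and a concrete constant $\alpha = 16/9$. Fix a merge tree $\T$ induced by \PeS and a run $R$ at depth $\geqslant 3$; writing $R' = \aR{3}$, the run $R$ sits at depth exactly~$3$ inside the subtree of $\T$ rooted at $R'$, and the goal is to prove $r \leqslant 9 r'/16$, which is exactly $\ar{3} \geqslant (16/9)\,r$.

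The backbone is a key lemma: for every internal node $\widetilde R$ of a PeekSort tree, its children have sizes $\widetilde r_L, \widetilde r_R$ satisfying $|\widetilde r_L - \widetilde r_R| \leqslant M(\widetilde R)$, where $M(\widetilde R)$ is the largest base-run size inside $\widetilde R$'s subtree. Indeed, as the cut position $k'$ varies by one the quantity $r_L - r_R = 2 e_{k'} - e_{i-1} - e_j$ changes by $\pm 2 r_{k'+1}$, and considering the two consecutive cuts flanking the sign change of $r_L - r_R$ forces the minimum of $|r_L - r_R|$ to be at most $\max_m r_m = M(\widetilde R)$. Hence $\max(\widetilde r_L, \widetilde r_R) \leqslant (\widetilde r + M(\widetilde R))/2$. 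Since the base runs inside a descendant's subtree form a sub-collection of those inside $R'$, one has $M(\widetilde R) \leqslant M$ (writing $M := M(R')$) for every descendant $\widetilde R$; iterating the lemma yields that the maximum size of a depth-$d$ descendant of $R'$ is at most $f(d) := M + (r' - M)\,2^{-d}$.

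I then split on $M$. If $M \leqslant r'/2$, then $f(3) \leqslant r'/2 + (r'/2)/8 = 9r'/16$, whence $r \leqslant 9r'/16$. If $M > r'/2$, let $R_k$ be the unique base run of size $M$ and let $s_L, s_R$ be the total sizes of base runs of $R'$ strictly on each side of $R_k$, so $r' = s_L + r_k + s_R$ with $s_L + s_R < r'/2$. Because $r_k > r'/2$, a short monotonicity check shows that $|r_L - r_R|$ is minimised at one of the two cuts adjacent to $R_k$, namely $e_{k-1}$ or $e_k$, with respective values $r_k + s_R - s_L$ and $r_k + s_L - s_R$; together with the tie-break rule of Definition~\ref{def:tree:PeS} this always places $R_k$ on the side containing $\min(s_L, s_R)$. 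Inside that ``$R_k$-side'' subtree, $R_k$ sits at a boundary and still spans the midpoint (since $r_k > (r_k + \min(s_L,s_R))/2$), so the next \PeS split singles $R_k$ out as a leaf. After these two splits, the non-$R_k$ part of $R'$ is spread across a subtree of size $\max(s_L, s_R)$ at depth~$1$ and another of size $\min(s_L, s_R)$ at depth~$2$; since both are $< r'/2$, every depth-$3$ descendant of $R'$ has size $< r'/2 \leqslant 9r'/16$. Combining the two cases gives $r \leqslant 9r'/16$, as required.

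The step I expect to be the most delicate is the case $M > r'/2$: the midpoint and tie-break computations have to be carried out explicitly to certify that a dominant base run is always pushed to a boundary by the first split and then isolated by the second, with the degenerate sub-cases $s_L = 0$ or $s_R = 0$ handled separately (there $R_k$ is already a leaf of $R'$ at depth~$1$ and the argument is even simpler). The case $M \leqslant r'/2$ is then immediate from the recursion attached to the key lemma.
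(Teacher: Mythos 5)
Your proof is correct, but it takes a genuinely different route from the paper's. The paper fixes $\ell = 3$ and, after choosing the majority orientation among $R,\aR{1},\aR{2}$, picks the two nearest ancestors $\aR{i},\aR{j}$ ($i<j\leqslant 2$) with that orientation; it then applies the \PeS balance condition once, around the split that created $\aR{j+1}$, to obtain $t \geqslant \ar{j} - 2 r_y$ for the sibling $T$, from which $\ar{3} \geqslant \ar{j+1} \geqslant 2\ar{i} \geqslant 2r$ follows directly, yielding the sharper constant $\alpha = 2$. Your argument instead isolates a structural invariant --- each \PeS split is balanced up to the largest base-run length $M$ inside the node --- iterates it three levels to get the bound $M + (r'-M)/8$, and treats the regime $M > r'/2$ by a bespoke analysis showing a dominant base run is pushed to a boundary and isolated within two splits. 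Both approaches are sound; the paper's is shorter and gives $\alpha=2$ versus your $\alpha=16/9$, but your ``balanced up to $M$'' lemma is a reusable structural fact about \PeS trees that the paper's local three-ancestor argument does not expose. Two small points worth tightening if you write this up: (i) in the key lemma, the ``sign change'' of $r_L - r_R$ over the extended index range $k'\in\{i-1,\dots,j\}$ may occur at a position where one endpoint is not a legal cut, so you should argue that the nearest \emph{legal} cut still achieves $|r_L - r_R| \leqslant M$ (this holds, e.g.\ when the sign flips between $k'=i-1$ and $k'=i$, since then $2r_i - r' \leqslant r_i \leqslant M$); and (ii) when $s_L = s_R$, the tie-break of Definition~\ref{def:tree:PeS} fixes the cut at $e_{k-1}$, which is consistent with your phrasing but deserves an explicit mention.
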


\begin{proof}
Let~$\T$ be a merge tree induced by \PeS, and let~$R$ be a run in~$\T$.
We prove that~$\ar{3} \geqslant 2 r$.
Indeed, let us assume that a majority of the runs~$R = \aR{0}$,~$\aR{1}$ and~$\aR{2}$ are left runs.
The situation is entirely similar if a majority of these runs are right runs.

Let~$i < j$ be the two smallest integers such that~$\aR{i}$ and~$\aR{j}$ are left runs, and let~$S$ and~$T$ be their right siblings, as illustrated in Figure~\ref{fig:pro:fast-growth-PeS}.
We can write these runs as~$\aR{i} = R_{w+1 \ldots x}$, {}$S = R_{x+1 \ldots y}$,~$\aR{j} = R_{v+1 \ldots y}$ and~$T = R_{y+1 \ldots z}$ for some integers~$v \leqslant w < x < y < z$.

Definition~\ref{def:tree:PeS} states that
\[|\ar{j} - t| = |2 e_y - e_z - e_v| \leqslant |2 e_{y-1} - e_z - e_v| = |\ar{j} - t - 2 r_y|.\]
Thus,~$\ar{j} - t - 2 r_y$ is negative, i.e.,~$t > \ar{j} - 2 r_y$, and
\[\ar{3} \geqslant \ar{j+1} = \ar{j} + t >
2 \ar{j} - 2 r_y = 2 (e_{y-1} - e_v) \geqslant 2 (e_x - e_w) = 2 \ar{i} \geqslant 2 r. \qedhere\]
\end{proof}

\begin{figure}[t]
\begin{center}
\begin{tikzpicture}[scale=0.65]
\foreach \i/\j in {-3/2.8,0/1,3/1,6/2,9/2.8}{
 \draw[very thick,draw=black!20,fill=black!20]
 (\i,\j) -- (\i,-0.2) -- 
 (\i-0.2,-0.2) -- (\i-0.2,\j) -- cycle;
 \draw[very thick,draw=white,fill=white]
 (\i,\j) -- (\i,-0.2) -- 
 (\i+0.2,-0.2) -- (\i+0.2,\j) -- cycle;
 \draw[very thick,draw=white,fill=white]
 (\i-0.4,\j) -- (\i-0.4,-0.2) -- 
 (\i-0.2,-0.2) -- (\i-0.2,\j) -- cycle;
}
\foreach \i/\j/\k in {0/0/3,3/0/3,-3/1/9,6/1/3,-3/2/12}{
 \draw[very thick] (\i,\j) -- (\i+\k-0.2,\j) --
 (\i+\k-0.2,\j+0.8) -- (\i,\j+0.8) -- cycle;
}

\node[anchor=south] at (1.4,0) {$\aR{i}$};
\node[anchor=south] at (4.4,0) {$S$};
\node[anchor=south] at (1.4,1) {$\aR{j}$};
\node[anchor=south] at (7.4,1) {$T$};
\node[anchor=south] at (2.9,2) {$\aR{j+1}$};
\node[anchor=north] at (-3,-0.15) {$e_v$};
\node[anchor=north] at (0,-0.15) {$e_w$};
\node[anchor=north] at (3,-0.15) {$e_x$};
\node[anchor=north] at (6,-0.15) {$e_y$};
\node[anchor=north] at (9,-0.15) {$e_z$};
\end{tikzpicture}
\end{center}
\vspace{-5mm}
\caption{Runs~$\aR{i}$,~$\aR{j}$,~$\aR{j+1}$, their siblings~$S$ and~$T$, and endpoints~$e_v \leqslant e_w < e_x < e_y < e_z$.\label{fig:pro:fast-growth-PeS}}
\end{figure}

\subsubsection{\CASS}
\label{subsubsec:cASS}

The algorithm \cASS is presented in Algorithm~\ref{alg:cASS}.
It is based on an \emph{ad hoc} tool, which we call \emph{level} of a run :
the level of a run~$R$ of length~$r$ is defined as the number~$\ell = \lfloor \log_2(r) \rfloor$.
In practice, and following our naming conventions, let~$\al{i}$ and~$\ell_i$ denote the respective levels of the runs~$\aR{i}$ and~$R_i$.

\begin{algorithm}[h]
\begin{small}
\SetArgSty{texttt}
\DontPrintSemicolon
\Input{Array~$A$ to sort}
\Result{The array~$A$ is sorted into a single run.
That run remains on the 
stack.}
\Note{The height of the stack~$\S$ is denoted by~$h$; its~$i$\textsuperscript{th} deepest run by~$R_i$; the length of~$R_i$~by~$r_i$.
Finally, we set~$\ell_i = \lfloor \log_2(r_i) \rfloor$.
When two consecutive runs of~$\S$ are merged, they are replaced, in~$\S$, with the run resulting from the merge. \justifying}
\BlankLine {}$\S \gets~$ an empty stack

\While(\label{main-loop:cASS:start}){\true}{
 \If(\label{test:cASS:case-1})
 {\textrm{$h \geqslant 3$ and~$\ell_{h-2}
 \leqslant \max\{\ell_{h-1},\ell_h\}$}}
 {merge the runs~$R_{h-2}$ and~$R_{h-1}$\label{alg:cASS:merge}}
 \ElseIf{\textrm{the end of the array has not yet been reached}}
 {find a new monotonic run~$R$, make it non-decreasing,
 and push it onto~$\S$\label{cASS:case-push}}
 \Else{break\label{algline:cASS:end_inner_loop}}
}
\While{$h \geqslant 2$\label{alg:cASS:trigger:collapse}}{
 merge the runs~$R_{h-1}$ and~$R_h$
 \label{alg:cASS:collapse}
}
\end{small}
\caption{\cASS\label{alg:cASS}}
\end{algorithm}

Observe that appending a fictitious run of length~$2n$ to the array~$A$ and stopping our sequence of merges just before merging that fictitious run does not modify the sequence of merges performed by the algorithm, but allows us to assume that every merge was performed in line~\ref{alg:cASS:merge}.
Therefore, we work below under that assumption.

Our proof is based on the following result, which was stated and proved in~\cite[Lemma 7]{Ju20}.

\begin{lemma}\label{lemma:cASS:invariant}
Let~$\S = (R_1,R_2,\ldots,R_h)$ be a stack obtained while executing \cASS.
We have (i)~$\ell_i \geqslant \ell_{i+1}+1$ whenever~$1 \leqslant i \leqslant h-3$ and (ii)~$\ell_{h-3} \geqslant \ell_{h-1}+1$ if~$h \geqslant 4$.
\end{lemma}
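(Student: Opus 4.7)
The plan is to prove both invariants together by induction on the sequence of stack-modifying operations executed by \cASS. Using the fictitious-run trick recalled just before the lemma statement, I may assume that every merge is performed in line~\ref{alg:cASS:merge}, so only two events modify the stack: a push (line~\ref{cASS:case-push}) and such a merge. The base case is the empty stack, where both~(i) and~(ii) are vacuous; it then suffices to show, in each of the two cases, that the post-operation stack satisfies the invariants given that the pre-operation stack does.

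For a push, the guard of line~\ref{test:cASS:case-1} has just failed, so either $h \leqslant 2$ (in which case the new stack has at most three runs and both invariants are vacuous) or $h \geqslant 3$ and $\ell_{h-2} > \max\{\ell_{h-1}, \ell_h\}$. In the latter subcase, the runs of the new height-$(h+1)$ stack are $R_1, \ldots, R_h, R_{h+1}$. The only instance of~(i) not already known from the old stack is at new index $h-2$, reading $\ell_{h-2} \geqslant \ell_{h-1}+1$ in old labels, and the new instance of~(ii) reads $\ell_{h-2} \geqslant \ell_h+1$. Both follow from the strict integer inequality $\ell_{h-2} > \max\{\ell_{h-1}, \ell_h\}$.

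For a merge, I will write $R'$ for the merged run $R_{h-2}\cup R_{h-1}$, $r' = r_{h-2}+r_{h-1}$ for its length, and $\ell'=\lfloor\log_2(r')\rfloor$ for its level. The new stack, of height $h-1$, reads $R_1, \ldots, R_{h-3}, R', R_h$. Every instance of the new~(i) is a direct copy of an instance of the old~(i), so the only real work is the new~(ii), and only when $h \geqslant 5$. The crux will be a sharp bound on $\ell'$: the trivial estimate $r' < 2^{\ell_{h-2}+1}+2^{\ell_{h-1}+1} \leqslant 2^{\max\{\ell_{h-2}, \ell_{h-1}\}+2}$ yields $\ell' \leqslant \max\{\ell_{h-2}, \ell_{h-1}\}+1$. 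Then, combining old~(i) at $i=h-3$ with old~(ii) gives $\ell_{h-3} \geqslant \max\{\ell_{h-2}, \ell_{h-1}\}+1 \geqslant \ell'$, and a final application of old~(i) at $i=h-4$ closes the loop: $\ell_{h-4} \geqslant \ell_{h-3}+1 \geqslant \ell'+1$, which is new~(ii).

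The main obstacle will be precisely this level bound on $R'$: re-establishing~(ii) after a merge needs the old~(i) \emph{and} the old~(ii) simultaneously, which is the structural reason for proving the two inequalities in a single induction rather than separately.
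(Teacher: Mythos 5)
Your proof is correct. Note, however, that the paper does not actually prove this lemma: it simply cites \cite[Lemma~7]{Ju20} (Jugé's \cASS paper). You have therefore supplied the missing argument, and the approach you take — a joint induction over push and merge operations, preserving (i) and (ii) simultaneously — is precisely the style the paper uses for the analogous stack invariants of \TS and \mbox{\aMS} (Lemmas~\ref{lemma:invariant} and~\ref{lemma:aMS:invariant}). The details check out: the push case correctly exploits the failed guard $\ell_{h-2} > \max\{\ell_{h-1},\ell_h\}$ (and is vacuous when $h\leqslant 2$), and the merge case correctly derives $\ell' \leqslant \max\{\ell_{h-2},\ell_{h-1}\}+1$ from $r' < 2^{\ell_{h-2}+1}+2^{\ell_{h-1}+1}$, then chains old~(i) at $i=h-3$ together with old~(ii) to get $\ell_{h-3}\geqslant\ell'$, and old~(i) at $i=h-4$ to close. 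Your closing remark — that re-establishing (ii) after a merge genuinely needs both (i) and (ii), which is why they must be proved together — is exactly the right structural observation, and your handling of the fictitious-run reduction so that line~\ref{alg:cASS:collapse} merges can be treated as line~\ref{alg:cASS:merge} merges is also sound. One small point worth making explicit: the merge condition $\ell_{h-2}\leqslant\max\{\ell_{h-1},\ell_h\}$ is in fact never used in your merge case — the invariant is stable under \emph{any} merge of $R_{h-2}$ with $R_{h-1}$ on a compliant stack; it is only the push case that needs the guard to have failed.
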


Then, Lemmas~\ref{lem:cASS::right} and~\ref{lem:cASS::left} focus on inequalities involving the levels of a given run~$R$ belonging to a merge tree induced by \cASS, and of its ancestors.
In each case, the stack just before the run~$R$ is merged is denoted by~$\S = (R_1,R_2,\ldots,R_h)$

\begin{lemma}\label{lem:cASS::right}
If~$\aR{1}$ is a right run,~$\al{2} \geqslant \ell+1$.
\end{lemma}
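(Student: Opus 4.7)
My plan is to show that, when $\aR{1}$ is a right run, the sibling of $\aR{1}$ in the merge tree is the run sitting one position below $\aR{1}$ in the stack at the moment $\aR{1}$ is formed; its level is then bounded by Lemma~\ref{lemma:cASS:invariant}.

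Let $\S = (R_1,\ldots,R_h)$ be the stack at the moment line~\ref{alg:cASS:merge} merges two runs to create $\aR{1}$. Then $R \in \{R_{h-2}, R_{h-1}\}$, so $\ell \in \{\ell_{h-2}, \ell_{h-1}\}$, and right after the merge the stack becomes $\S' = (R_1,\ldots,R_{h-3},\aR{1},R_h)$, of height $h-1$, with $\aR{1}$ at position $h-2$ (counted from the bottom). The key structural observation is that, as long as $\aR{1}$ is on the stack, it remains at position $h-2$, the current stack height $H$ satisfies $H \geq h-1$, and (when $h \geq 4$) the run at position $h-3$ is still $R_{h-3}$. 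Indeed, every subsequent operation of \cASS is either a push, which appends a run on top and preserves all existing positions, or a top-end merge at positions $H-2,H-1$ (main loop) or $H-1,H$ (collapse loop). For $H$ to drop below $h-1$, a merge at $H=h-1$ is needed, and any such merge involves position $h-2$, i.e., $\aR{1}$ itself; the same reasoning shows that modifying the run at position $h-3$ forces a merge that also touches $\aR{1}$.

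Assume now that $\aR{1}$ is a right run, i.e., it is the right operand of the merge forming $\aR{2}$. A collapse-loop merge would place $\aR{1}$ as the left operand of the top pair, so the merge creating $\aR{2}$ must be a main-loop merge. Since $\aR{1}$ stays at position $h-2$ while it exists, the only way for it to be the right operand of a main-loop merge is for the merge to happen at stack height $H' = h-1$, with left operand at position $H'-2 = h-3$, namely $R_{h-3}$. In particular $h \geq 4$, and $\ar{2} \geq r_{h-3} \geq 2^{\ell_{h-3}}$. Lemma~\ref{lemma:cASS:invariant} applied to $\S$ yields $\ell_{h-3} \geq \ell_{h-2}+1$ by~(i) and $\ell_{h-3} \geq \ell_{h-1}+1$ by~(ii), so $\ell_{h-3} \geq \ell+1$, and we conclude $\al{2} = \lfloor \log_2(\ar{2}) \rfloor \geq \ell+1$. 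The delicate point is the structural observation; once it is in place, the arithmetic is immediate.
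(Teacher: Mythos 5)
Your proof is correct and takes the same approach as the paper's: both establish that $R_{h-3}$ is (a descendant of) the left sibling of $\aR{1}$, hence of $\aR{2}$, and then apply Lemma~\ref{lemma:cASS:invariant} (i)--(ii) to get $\al{2} \geqslant \ell_{h-3} \geqslant \max\{\ell_{h-2},\ell_{h-1}\}+1 \geqslant \ell+1$. You simply spell out in full the stack-positional observation that the paper states tersely.
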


\begin{proof}
The run~$R$ coincides with either~$R_{h-2}$ or~$R_{h-1}$, and~$\aR{1}$ is the parent of the runs~$R_{h-2}$ and~$R_{h-1}$.
Hence, the run~$R_{h-3}$ descends from the left sibling of~$\aR{1}$ and from~$\aR{2}$.
Thus, inequalities (i) and (ii) prove that~$\al{2} \geqslant \ell_{h-3} \geqslant \max\{\ell_{h-2},\ell_{h-1}\}+1 \geqslant \ell+1$.
\end{proof}

\begin{lemma}\label{lem:cASS::left}
If~$R$ is a left run,~$\al{2} \geqslant \ell+1$.
\end{lemma}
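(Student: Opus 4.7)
The plan is to split on whether the parent $\aR{1}$ is a right run or a left run. In the first sub-case, Lemma~\ref{lem:cASS::right} applied to $R$ itself immediately yields $\al{2} \geqslant \ell + 1$, so there is nothing more to do. The real work is in the sub-case where both $R$ and $\aR{1}$ are left children.

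In this sub-case, write $\aR{2} = \aR{1} \cup S$, where $S$ denotes the right sibling of $\aR{1}$ in the merge tree. I would first establish the structural inequality $s \geqslant r_h$. The key observation is that, just after the merge producing $\aR{1}$, the stack reads $(R_1, \ldots, R_{h-3}, \aR{1}, R_h)$ with $\aR{1}$ sitting immediately below $R_h$, and both while loops of \cASS only manipulate the topmost positions of the stack. Consequently, the position of $\aR{1}$ in the stack is never altered before $\aR{1}$ is itself merged, and every run that may occupy the slot just above $\aR{1}$ must be a contiguous block of original runs whose left endpoint coincides with that of $R_h$. When $\aR{2}$ is finally formed, its right half $S$ therefore contains the whole of $R_h$, so $s \geqslant r_h$.

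The remaining step is a short computation based solely on the merge condition that triggered $\aR{1}$'s creation, namely $\ell = \ell_{h-2} \leqslant \max\{\ell_{h-1}, \ell_h\}$, which forces $\max\{r_{h-1}, r_h\} \geqslant 2^\ell$. Combining with $r_{h-2} \geqslant 2^\ell$ and $s \geqslant r_h$,
\[\ar{2} = r_{h-2} + r_{h-1} + s \;\geqslant\; r_{h-2} + (r_{h-1} + r_h) \;\geqslant\; 2^\ell + 2^\ell = 2^{\ell+1},\]
so $\al{2} \geqslant \ell + 1$, as required.

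The main obstacle is the structural claim $s \geqslant r_h$. One must resist the temptation to identify $S$ with $R_h$: many runs can be pushed on top of $R_h$ and merged among themselves before $\aR{1}$ is eventually merged, so $S$ may end up considerably larger than $R_h$. What the observation really provides is that the left boundary of $S$ coincides with that of $R_h$, so $R_h$ is swallowed by $S$ in its entirety. Notably, the invariants of Lemma~\ref{lemma:cASS:invariant} are not needed in this sub-case, in contrast with the proof of Lemma~\ref{lem:cASS::right}.
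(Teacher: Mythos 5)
Your proposal is correct and follows essentially the same route as the paper: split on whether $\aR{1}$ is a right or left run, reduce the first case to Lemma~\ref{lem:cASS::right}, and in the second case use the fact that $R_h$ descends from $\aR{2}$ together with the merge condition $\ell_{h-2}\leqslant\max\{\ell_{h-1},\ell_h\}$ to force $\ar{2}\geqslant 2^{\ell+1}$. The only difference is that you spell out in detail the structural claim that $S$ contains all of $R_h$ (equivalently, $s\geqslant r_h$), which the paper states tersely as ``the run $R_h$ descends from $\aR{2}$.''
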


\begin{proof}
Since~$R$ is a left run, it coincides with~$R_{h-2}$, and~$\ell_{h-2} \leqslant \max\{\ell_{h-1},\ell_h\}$.
Then, if~$\aR{1}$ is a right run, Lemma~\ref{lem:cASS::right} already proves that~$\al{2} \geqslant \ell+1$.
Otherwise,~$\aR{1}$ is a left run, and the run~$R_h$ descends from~$\aR{2}$, thereby proving that
\[\ar{2} \geqslant r + \max\{r_{h-1},r_h\}
\geqslant {2^\ell + 2^{\max\{\ell_{h-1},\ell_h\}}}
\geqslant 2^{\ell+1},\]
and thus that~$\al{2} \geqslant \ell+1$ too.
\end{proof}

\begin{proposition}\label{pro:fast-growth-cASS}
The algorithm \cASS has the fast-growth property.
\end{proposition}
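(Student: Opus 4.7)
The plan is to leverage Lemmas~\ref{lem:cASS::right} and~\ref{lem:cASS::left} through a small case analysis to show that climbing at most three ancestors from any run~$R$ (at sufficient depth) increases the level by at least one, and then to iterate this bound twice. Write~$\lambda_i = \al{i}$ for the level of~$\aR{i}$; since each run is contained in its parent, we have~$\ar{i} \leqslant \ar{i+1}$ and so the sequence~$(\lambda_i)_i$ is non-decreasing. My target is the intermediate claim: for every run~$R$ of depth at least~$3$ in a merge tree induced by \cASS, the inequality~$\lambda_3 \geqslant \lambda_0 + 1$ holds.

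To prove this intermediate claim, I would split into three cases on the types of~$R$ and~$\aR{1}$. If~$R$ is a left run, Lemma~\ref{lem:cASS::left} gives~$\lambda_2 \geqslant \lambda_0 + 1$ directly, and monotonicity yields~$\lambda_3 \geqslant \lambda_0 + 1$. If~$R$ is a right run whose parent~$\aR{1}$ is also a right run, Lemma~\ref{lem:cASS::right} gives~$\lambda_2 \geqslant \lambda_0 + 1$. The remaining, more delicate case is when~$R$ is a right run and~$\aR{1}$ is a left run: then neither lemma applies to~$R$ itself, so I would apply Lemma~\ref{lem:cASS::left} to~$\aR{1}$ instead, which is legitimate because~$\aR{3}$ exists and hence~$\aR{1}$ is indeed at depth at least~$2$. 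This gives~$\lambda_3 \geqslant \lambda_1 + 1$, and combining with the trivial bound~$\ar{1} \geqslant r$, hence~$\lambda_1 \geqslant \lambda_0$, yields~$\lambda_3 \geqslant \lambda_0 + 1$ as desired.

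With the intermediate claim in hand, for any run~$R$ of depth at least~$6$ I apply it to both~$R$ and~$\aR{3}$ (which still has depth at least~$3$), obtaining~$\lambda_6 \geqslant \lambda_3 + 1 \geqslant \lambda_0 + 2$. Since~$\lambda_0 = \lfloor \log_2 r \rfloor$ gives~$r < 2^{\lambda_0 + 1}$, while~$\ar{6} \geqslant 2^{\lambda_6} \geqslant 2^{\lambda_0 + 2} = 2 \cdot 2^{\lambda_0 + 1} > 2 r$, we conclude~$\ar{6} \geqslant 2 r$. Thus \cASS satisfies the fast-growth property with the constants~$\ell = 6$ and~$\alpha = 2$.

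The only real obstacle in this plan is the asymmetry between the hypotheses of Lemmas~\ref{lem:cASS::right} and~\ref{lem:cASS::left}, which leaves the case ``$R$ right, parent left'' uncovered when we look only at~$R$. The fix — climbing one step and applying Lemma~\ref{lem:cASS::left} to~$\aR{1}$ together with the monotonicity~$\al{1} \geqslant \ell$ — is what forces us to use three rather than two levels of climbing per unit increase in level, and therefore to iterate twice (getting~$\ell = 6$) rather than once.
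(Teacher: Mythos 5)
Your proof is correct and takes essentially the same approach as the paper: establish the intermediate bound $\al{3} \geqslant \ell + 1$ via a case analysis using Lemmas~\ref{lem:cASS::right} and~\ref{lem:cASS::left}, then iterate twice to obtain $\ar{6} \geqslant 2^{\al{6}} \geqslant 2^{\ell+2} \geqslant 2r$. The only cosmetic difference is that you split into three cases (on the orientations of both $R$ and $\aR{1}$) whereas the paper splits into just two (on $\aR{1}$ alone, applying Lemma~\ref{lem:cASS::right} to $R$ when $\aR{1}$ is right and Lemma~\ref{lem:cASS::left} to $\aR{1}$ when $\aR{1}$ is left), but the lemmas are invoked in the same way.
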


\begin{proof}
Let~$\T$ be a merge tree induced by \cASS, and let~$R$ be a run in~$\T$.
Lemma~\ref{lem:cASS::right} shows that~$\al{3} \geqslant \al{2} \geqslant \ell+1$ if~$\aR{1}$ is a right run, and Lemma~\ref{lem:cASS::left} shows that~$\al{3} \geqslant \al{1}+1 \geqslant \ell+1$ if~$\aR{1}$ is a left run.
Thus, the inequality~$\al{3} \geqslant \ell+1$ is valid in both cases, and we show similarly that~$\al{6} \geqslant \al{3}+1$.
It follows that~$\ar{6} \geqslant 2^{\al{6}} \geqslant 2^{\ell+2} \geqslant 2r$.
\end{proof}

\subsubsection{\TS}
\label{subsubsec:TS}

The algorithm \TS is presented in Algorithm~\ref{alg:TS}.

\begin{algorithm}[h]
\begin{small}
\SetArgSty{texttt}
\DontPrintSemicolon
\Input{Array~$A$ to sort}
\Result{The array~$A$ is sorted into a single run.
That run remains on the stack.}
\Note{The height of the stack~$\S$ is denoted by~$h$; its~$i$\textsuperscript{th} deepest run by~$R_i$; the length of~$R_i$ by~$r_i$.
When two consecutive runs of~$\S$ are merged, they are replaced, in~$\S$, with the run resulting from the merge. \justifying}
\BlankLine
$\S \gets~$ an empty stack
\While(\label{main-loop:start}){\true}{
 \If(\label{test:case-1})
 {\textrm{$h \geqslant 3$ and~$r_{h-2} < r_h$}}
 {merge the runs~$R_{h-2}$ and~$R_{h-1}$
 \tcp*[f]{case \#1}\label{case-1}}
 \ElseIf(\label{test:case-2})
 {\textrm{$h \geqslant 2$ and~$r_{h-1} \leqslant r_h$}}
 {merge the runs~$R_{h-1}$ and~$R_h$
 \tcp*[f]{case \#2}\label{case-2}}
 \ElseIf(\label{test:case-3})
 {\textrm{$h \geqslant 3$ and~$r_{h-2} \leqslant r_{h-1} + r_h$}}
 {merge the runs~$R_{h-1}$ and~$R_h$
 \tcp*[f]{case \#3}\label{case-3}}
 \ElseIf(\label{test:case-4})
 {\textrm{$h \geqslant 4$ and~$r_{h-3} \leqslant r_{h-2} + r_{h-1}$}}
 {merge the runs~$R_{h-1}$ and~$R_h$
 \tcp*[f]{case \#4}\label{case-4}}
 \ElseIf{\textrm{the end of the array has not yet been reached}}
 {find a new monotonic run~$R$, make it non-decreasing, and push it onto~$\S$\label{case-push}}
 \Else{break\label{algline:end_inner_loop}}
}
\While{$h \geqslant 2$}{
 merge the runs~$R_{h-1}$ and~$R_h$
 \label{alg:collapse}
}
\end{small}
\caption{\TS\label{alg:TS}}
\end{algorithm}

We say that a run~$R$ is a \#1-, a \#2-, a \#3 or a \#4-run if is merged in line~\ref{case-1},~\ref{case-2},~\ref{case-3} or~\ref{case-4}, respectively.
Like in Section~\ref{subsubsec:cASS}, appending a fictitious run of length~$2n$ that we will avoid merging allows us to assume that every run is merged in line~\ref{case-1},~\ref{case-2},~\ref{case-3} or~\ref{case-4}, i.e., is a \#1-, a \#2-, a \#3 or a \#4-run.

Our proof is then based on the following result, which extends~\cite[Lemma 5]{auger2018worst} by adding the inequality~(v).

\begin{lemma}\label{lemma:invariant}
Let~$\S = (R_1,R_2,\ldots,R_h)$ be a stack obtained while executing \TS.
We have
\begin{enumerate}[label=(\roman*)]
\item {}$r_i > r_{i+1} + r_{i+2}$ whenever~$1 \leqslant i \leqslant h-4$,
\item {}$3 r_{h-2} > r_{h-1}$ if~$h \geqslant 3$,
\item {}$r_{h-3} > r_{h-2}$ if~$h \geqslant 4$,
\item {}$r_{h-3} + r_{h-2} > r_{h-1}$ if~$h \geqslant 4$, and
\item {}$\max\{r_{h-3}/2,4 r_h\} > r_{h-1}$ if~$h \geqslant 4$.
\end{enumerate}
\end{lemma}

\begin{proof}
Lemma 5 from~\cite{auger2018worst} already proves the inequalities (i) to (iv).
Therefore, we prove, by a direct induction on the number of (push or merge) operations performed before obtaining the stack~$\S$, that~$\S$ also satisfies (v).

When the algorithm starts, we have~$h \leqslant 3$, and therefore there is nothing to prove in that case.
Then, when a stack~$\S = (R_1,R_2,\ldots,R_h)$ obeying the inequalities (i) to (v) is transformed into a stack~$\overline{\S} = (\ovR_1,\ovR_2, \ldots,\ovR_{\ovh})$
\begin{itemize}
\item by inserting a run, cases \#2 and \#3 just failed to occur and~$\ovh = h+1$, so that
\[\ovr_{\ovh-3} = r_{h-2} > r_{h-1} + r_h > 2 r_h = 2 \ovr_{\ovh-1};\]
\item by merging the runs~$R_{h-1}$ and~$R_h$, we have~$\ovh = h-1$ and
\[\ovr_{\ovh-3} = r_{h-4} > r_{h-3} + r_{h-2} > 2 r_{h-2} = 2 \ovr_{\ovh-1};\]
\item by merging the runs~$R_{h-2}$ and~$R_{h-1}$, case \#1 just occurred and~$\ovh = h-1$, so that
\[4 \ovr_{\ovh} = 4 r_h > 4 r_{h-2}
> r_{h-2} + r_{h-1} = \ovr_{\ovh-1}.\]
\end{itemize}

In each case,~$\overline{\S}$ satisfies (v), which completes the induction and the proof.
\end{proof}

Then, Lemmas~\ref{lem::right} and~\ref{lem::left} focus on inequalities involving the lengths of a given run~$R$ belonging to a merge tree induced by \TS and of its ancestors.
In each case, the stack just before the run~$R$ is merged is denoted by~$\S = (R_1,R_2,\ldots,R_h)$.

\begin{lemma}\label{lem::right}
If~$\aR{1}$ is a right run,~$\ar{2} \geqslant 4r/3$.
\end{lemma}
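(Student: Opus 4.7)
The plan is to perform a case analysis based on which of the four triggering conditions of Algorithm~\ref{alg:TS} governs the merge that creates~$\aR{2}$ from~$\aR{1}$ and its left sibling. Let~$\S^\ast = (R^\ast_1,\ldots,R^\ast_{h^\ast})$ denote the stack just before that merge. Since~$\aR{1}$ is a right run, either~$\aR{1} = R^\ast_{h^\ast-1}$ (if the merge is case~\#1) or~$\aR{1} = R^\ast_{h^\ast}$ (if it is case~\#2,~\#3 or~\#4).

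I would first dispatch the three easy cases. When~$\aR{1}$ is merged via case~\#1, invariant~(ii) of Lemma~\ref{lemma:invariant} yields $r^\ast_{h^\ast-2} > r^\ast_{h^\ast-1}/3 = \ar{1}/3 \geqslant r/3$, so $\ar{2} = r^\ast_{h^\ast-2} + \ar{1} > 4r/3$. When the merge is case~\#3 or~\#4, the failure of case~\#2's trigger gives $r^\ast_{h^\ast-1} > r^\ast_{h^\ast} = \ar{1} \geqslant r$, so that $\ar{2} = r^\ast_{h^\ast-1} + \ar{1} > 2r \geqslant 4r/3$.

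The remaining case, where~$\aR{1}$ is merged via case~\#2 for~$\aR{1}$, is the delicate one: one only has $s_{\aR{1}} = r^\ast_{h^\ast-1} \leqslant \ar{1}$, and the invariants on~$\S^\ast$ just upper-bound~$r^\ast_{h^\ast-1}$. My plan is to trace back to the moment $\aR{1}$ was formed. I would first argue that $\aR{1}$ must have been placed at the top of the stack at formation, i.e., $R$ was merged via case~\#2,~\#3 or~\#4 for $R$: indeed, once $\aR{1}$ is covered by a pushed run, it cannot return to the top without being consumed as a left argument of some later merge. I would then observe that the only operations possible between $\aR{1}$'s formation and its own merge are case~\#1 merges occurring strictly below $\aR{1}$, which can only enlarge the run directly beneath it; hence $r^\ast_{h^\ast-1} \geqslant r_{h-2}$, where $r_{h-2}$ denotes the length of the run immediately below $\aR{1}$ at formation.

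Finally, I would rule out $R$ being merged via case~\#4 for $R$: the failure of case~\#3's trigger would then force $r_{h-2} > r_{h-1}+r_h = \ar{1}$, an inequality preserved under subsequent case~\#1 merges, so that case~\#2 for $\aR{1}$ would never trigger. It then remains to handle $R$ merged via case~\#2 or~\#3 for $R$. In case~\#2 one has $r_{h-1} \leqslant r_h \leqslant r_{h-2}$, and since $r \in \{r_{h-1},r_h\}$ this gives $r_{h-2} \geqslant r$ and hence $\ar{2} \geqslant r + \ar{1} \geqslant 2r$. In case~\#3, either $r = r_h \leqslant r_{h-2}$, yielding $\ar{2} \geqslant 2r$ again, or $r = r_{h-1}$, in which case invariant~(ii) applied to the stack just before $R$'s merge gives $r_{h-2} > r_{h-1}/3 = r/3$, so that $\ar{2} \geqslant r_{h-2} + \ar{1} > r/3 + r = 4r/3$. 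The main obstacle is justifying the trace-back argument above, which relies on a careful description of how runs and their relative positions in the stack can evolve between their formation and their merging.
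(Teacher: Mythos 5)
Your proof is correct, but it follows a genuinely different route from the paper's. The paper's proof freezes the stack~$\S=(R_1,\ldots,R_h)$ at the single moment when~$R$ is merged (i.e., when~$\aR{1}$ is \emph{created}), dispatches on whether~$R$ equals~$R_{h-2}$,~$R_{h-1}$ or~$R_h$, and immediately invokes inequalities~(ii) and~(iii) of Lemma~\ref{lemma:invariant}: for~$i\in\{h-2,h-1\}$ one gets~$r_{i-1}>r_i/3$ and~$R_{i-1}$ descends from~$\aR{2}$, and for~$i=h$ the failure of case~\#1 gives~$r_{h-2}\geqslant r$ with~$R_{h-2}$ descending from~$S$. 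You instead work with the stack~$\S^\ast$ at the moment~$\aR{1}$ is \emph{consumed}, dispatch on which case creates~$\aR{2}$, and, for the delicate \#2 sub-case, trace back to~$\aR{1}$'s formation through stack dynamics (a top-of-stack right run stays at the top, so only case-\#1 merges strictly below it occur, which can only enlarge the run immediately beneath). This buys you something: you only ever need invariant~(ii), not the stronger~(iii), so your argument would go through even if one had only proved the weaker invariant. The cost is the extra machinery you correctly identify as the main obstacle --- justifying that~$\aR{1}$ is created at the top, is never covered, and that its left neighbour's length is non-decreasing. One small imprecision: you phrase the covering argument as ``once~$\aR{1}$ is covered \emph{by a pushed run}'', but~$\aR{1}$ can also be \emph{created} already covered (via a case-\#1 merge for~$R$); the same reasoning (a covered run never regains the top, and can only be consumed as the right argument of a \#1-merge --- the case you already dispatched --- or as a left argument, contradicting that it is a right run) handles both origins of the covering, so the gap is cosmetic rather than substantive, but it is worth stating explicitly.
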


\begin{proof}
Let~$S$ be the left sibling of the run~$\aR{1}$, and let~$i$ be the integer such that~$R = R_i$.
If~$i = h-2$, (iii) shows that~$r_{i-1} > r_i$.
If~$i = h-1$, (ii) shows that~$r_{i-1} > r_i / 3$.
In both cases, the run~$R_{i-1}$ descends from~$\aR{2}$, and thus~$\ar{2} \geqslant r + r_{i-1} \geqslant 4 r / 3$.

Finally, if~$i = h$, the run~$R$ is a \#2-, \#3- or \#4-right run, which means both that~$r_{h-2} \geqslant r$ and that~$R_{h-2}$ descends from~$S$.
It follows that~$\ar{2} \geqslant r + r_{h-2} \geqslant 2 r$.
\end{proof}

\begin{lemma}\label{lem::left}
If~$R$ is a left run,~$\ar{2} \geqslant 5 r / 4$.
\end{lemma}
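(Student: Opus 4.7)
Since $R$ is a left run, it is merged either at position $R = R_{h-2}$ with right sibling $S = R_{h-1}$ (case~\#1), or at position $R = R_{h-1}$ with right sibling $S = R_h$ (cases~\#2, \#3, \#4). My plan is to treat these two scenarios separately, and in each subcase either to bound $\ar{1}$ from below or to identify precisely the next merge that the algorithm performs after the merge $R + S = \aR{1}$, so that $\ar{2}$ can be read off.

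For the first scenario, I further split on whether $\aR{1}$ is a right run or a left run. If $\aR{1}$ is a right run, then Lemma~\ref{lem::right} applied to $R$ immediately delivers the stronger bound $\ar{2} \geqslant 4r/3$. If $\aR{1}$ is a left run, then the right sibling $T$ of $\aR{1}$ in the merge tree must contain $R_h$, because $R_h$ is the run sitting immediately to the right of $\aR{1}$ in the stack just after the merge $R+S$; combined with the case~\#1 firing condition $r_h > r_{h-2} = r$, this yields $\ar{2} \geqslant \ar{1} + r_h = r + r_{h-1} + r_h > 2r$.

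For the second scenario, $\ar{1} = r + r_h$, and I analyse the firing case. Case~\#2 firing gives $r_h \geqslant r$, hence $\ar{2} \geqslant \ar{1} \geqslant 2r$. Case~\#3 firing with $r_h \geqslant r/4$ gives $\ar{1} \geqslant 5r/4$ directly. Case~\#3 firing with $r_h < r/4$ triggers invariant (v) to deduce $r_{h-3} > 2r > \ar{1}$, so in the new stack $(\ldots, R_{h-3}, R_{h-2}, \aR{1})$ case~\#1 cannot fire; but the original case~\#3 condition $r_{h-2} \leqslant r + r_h = \ar{1}$ forces case~\#2 to fire next, so $\ar{2} = r_{h-2} + \ar{1}$, and invariant (ii) (namely $r_{h-2} > r/3$) yields $\ar{2} > 4r/3$. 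Case~\#4 firing gives $r_{h-2} > r + r_h = \ar{1}$ from the failure of the original case~\#3, and $r_{h-3} > r_{h-2} > \ar{1}$ from (iii), so both case~\#1 and case~\#2 fail in the new stack; the case~\#4 firing condition $r_{h-3} \leqslant r_{h-2} + r$ combined with $r \leqslant \ar{1}$ then forces case~\#3 to fire, yielding $\ar{2} = r_{h-2} + \ar{1} > 2 \ar{1} \geqslant 2r$.

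The main obstacle is the scenario $R = R_{h-1}$ in which, a priori, $\aR{1}$ could become a left child of $\aR{2}$ through runs pushed later, whose lengths the invariants do not control. The insight that overcomes this is that invariants (ii), (iii) and (v), together with the reasons why the original merge was triggered in case~\#3 or~\#4 rather than in an earlier case, are enough to pin down the very next merge after $R + S$, so that $\aR{1}$ must in fact end up as a right child of $\aR{2}$ through a case~\#2 or case~\#3 firing in the new stack, and $\ar{2}$ can then be computed directly.
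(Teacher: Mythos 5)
Your proof is correct and follows the same high-level skeleton as the paper's (split on the firing case, exploit the stack invariants of Lemma~\ref{lemma:invariant}, and in particular the new inequality (v)), but the bookkeeping in the hard sub-cases is organized differently, and the difference is worth noting. The paper uniformly assumes $\aR{1}$ is a left run (the other possibility being dispatched at once by Lemma~\ref{lem::right}); in the \#3-left case it then observes that the very next iteration must fire a \#1-merge, which gives $r_{h-3} < \ar{1}$, and in the \#4-left case it derives a contradiction, so \#4-left runs simply do not occur. You do not branch on the orientation of $\aR{1}$ upfront in the second scenario. Instead, you split case \#3 on the threshold $r_h \geqslant r/4$ versus $r_h < r/4$, and in the sub-threshold case (and in case \#4) you pin down exactly which merge fires next, computing $\ar{2} = r_{h-2} + \ar{1}$ directly. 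This is the contrapositive of what the paper does: rather than ruling out ``$\aR{1}$ is left'' you show that the algorithm is forced to make $\aR{1}$ a right child, and you read $\ar{2}$ off that forced merge. Both routes rely on (v) in exactly the same place, and both end up invoking (ii) and (iii), so neither is fundamentally simpler; your version duplicates a bit of what Lemma~\ref{lem::right} already encapsulates, while the paper's version is a hair more concise because it offloads the ``$\aR{1}$ right'' bookkeeping to that lemma once and for all.

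One small point you should patch: in the sub-case ``case \#3 with $r_h < r/4$'' you invoke (v) to get $r_{h-3} > 2r$, but (v) is only stated for $h \geqslant 4$. When $h = 3$ the run $R_{h-3}$ does not exist and (v) says nothing; however the conclusion still holds for free, because the new stack then has height $\ovh = 2$ so case~\#1 cannot be checked at all, and the \#3 condition $r_{h-2} \leqslant \ar{1}$ already forces case~\#2 to fire. A similar boundary remark is tacitly absorbed in the paper's argument. Adding a sentence covering $h = 3$ would make your write-up airtight.
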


\begin{proof}
We treat four cases independently, depending on whether~$R$ is a \#1-, a \#2-, a \#3- or a \#4-left run.
In each case, we assume that the run~$\aR{1}$ is a left run, since Lemma~\ref{lem::right} already proves that~$\ar{2} \geqslant 4r/3$ when~$\aR{1}$ is a right run.
\begin{itemize}
\item If~$R$ is a \#1-left run, the run~$R = R_{h-2}$ is merged with~$R_{h-1}$ and~$r_{h-2} < r_h$.
Since~$\aR{1}$ is a left run,~$R_h$ descends from~$\aR{2}$, and thus~$\ar{2} \geqslant r + r_h \geqslant 2r$.

\item If~$R$ is a \#2-left run, the run~$R = R_{h-1}$ is merged with~$R_h$ and~$r_{h-1} \leqslant r_h$.
It follows, in that case, that~$\ar{2} \geqslant \ar{1} = r + r_h \geqslant 2 r$.

\item If~$R$ is a \#3-left run, the run~$R = R_{h-1}$ is merged with~$R_h$, and~$r_{h-2} \leqslant r_{h-1} + r_h = \ar{1}$.
Due to this inequality, our next journey through the loop of lines~\ref{main-loop:start} to~\ref{algline:end_inner_loop} must trigger another merge.
Since~$\aR{1}$ is a left run, that merge must be a \#1-merge, which means that~$r_{h-3} < \ar{1}$.
Consequently, (v) proves that~$\ar{1} \geqslant \max\{r_{h-3},r_{h-1}+r_h\} \geqslant 5 r_{h-1}/4 = 5 r/4$.

\item We prove that~$R$ cannot be a \#4-left run.
Indeed, if~$R$ is a \#4-left run, the run~$R = R_{h-1}$ is merged with~$R_h$, and we both have~$r_{h-2} > \ar{1}$ and~$r_{h-3} \leqslant r_{h-2} + r_{h-1} \leqslant r_{h-2} + \ar{1}$.
Due to the latter inequality, our next journey through the loop of lines~\ref{main-loop:start} to~\ref{algline:end_inner_loop} must trigger another merge.
Since~$\ar{1} < r_{h-2} < r_{h-3}$, that new merge cannot be a \#1-merge, and thus~$\aR{1}$ is a right run, contradicting our initial assumption. \qedhere
\end{itemize}
\end{proof}

\begin{proposition}\label{pro:fast-growth-TS}
The algorithm \TS has the fast-growth property.
\end{proposition}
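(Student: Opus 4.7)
The plan is to establish the fast-growth property for \TS with parameters $\ell = 3$ and $\alpha = 5/4$, by performing a short case analysis on the left/right status of a run~$R$ and its parent~$\aR{1}$, and applying Lemmas~\ref{lem::right} and~\ref{lem::left} appropriately. This mirrors the strategy used for \cASS in Proposition~\ref{pro:fast-growth-cASS}, where the two conditional growth lemmas are combined to obtain an unconditional growth bound after a bounded number of levels.

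Fix a merge tree~$\T$ induced by \TS and consider a run~$R$ with $\aR{3}$ existing (i.e., $R$ sits at depth at least~$3$). I would distinguish three cases. First, if~$R$ is a left run, Lemma~\ref{lem::left} yields~$\ar{2} \geqslant 5r/4$, and since parents contain their children we have~$\ar{3} \geqslant \ar{2} \geqslant 5r/4$. Second, if~$R$ is a right run and its parent~$\aR{1}$ is itself a right run, Lemma~\ref{lem::right} applied to~$R$ yields~$\ar{2} \geqslant 4r/3 \geqslant 5r/4$, so again~$\ar{3} \geqslant 5r/4$.

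The remaining case — $R$ is a right run while~$\aR{1}$ is a left run — is the delicate one, because neither lemma applies directly to~$R$. My plan is to apply Lemma~\ref{lem::left} to~$\aR{1}$ instead: this requires the great-grandparent~$\aR{3}$ of~$R$ to exist, which is precisely the reason I set~$\ell = 3$. The lemma then gives~$\ar{3} = (\aR{1})^{(2)} \geqslant 5 \ar{1}/4$, and since~$\aR{1}$ contains~$R$ we have~$\ar{1} \geqslant r$, so once more~$\ar{3} \geqslant 5r/4$.

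The main (and only real) obstacle is making sure the case split is exhaustive and that the index bookkeeping is correct: every invocation of Lemma~\ref{lem::right} or Lemma~\ref{lem::left} needs the involved ancestors to exist in~$\T$, and in the third case we genuinely need to go up three levels rather than two. Once that is checked, the three cases together show that~$\ar{3} \geqslant (5/4) r$ for every run~$R$ at depth at least~$3$, which is exactly what the fast-growth property demands.
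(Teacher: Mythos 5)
Your proof is correct and is essentially the same argument as the paper's: both rely on Lemmas~\ref{lem::right} and~\ref{lem::left} and conclude that $\ar{3} \geqslant 5r/4$, i.e., $\ell = 3$ and $\alpha = 5/4$. The only cosmetic difference is that the paper splits on the left/right status of~$\aR{1}$ alone (applying Lemma~\ref{lem::right} to~$R$ when~$\aR{1}$ is a right run, and Lemma~\ref{lem::left} to~$\aR{1}$ otherwise), whereas you further split on the status of~$R$; both case analyses are exhaustive and both invocations of the lemmas are valid, so the two proofs are interchangeable.
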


\begin{proof}
Let~$\T$ be a merge tree induced by \TS, and let~$R$ be a run in~$\T$.
Lemma~\ref{lem::right} shows that~$\ar{3} \geqslant \ar{2} \geqslant 4r/3$ if~$\aR{1}$ is a right run, and Lemma~\ref{lem::left} shows that~$\ar{3} \geqslant 5 \ar{1}/4 \geqslant 5r/4$ if~$\aR{1}$ is a left run.
Hence, in both cases,~$\ar{3} \geqslant 5 r/4$.
\end{proof}

\subsubsection{\aMS}
\label{subsubsec:aMS}

The algorithm \aMS is parametrised by a real number~$\alpha > 1$ and
is presented in Algorithm~\ref{alg:aMS}.

\begin{algorithm}[h]
\begin{small}
\SetArgSty{texttt}
\DontPrintSemicolon
\Input{Array~$A$ to sort, parameter~$\alpha > 1$}
\Result{The array~$A$ is sorted into a single run.
That run remains on the stack.}
\Note{The height of the stack~$\S$ is denoted by~$h$; its~$i$\textsuperscript{th} deepest run by~$R_i$; the length of~$R_i$~by~$r_i$.
When two consecutive runs of~$\S$ are merged, they are replaced, in~$\S$, with the run resulting from the merge. \justifying}
\BlankLine {}$\S \gets~$ an empty stack

\While(\label{aMS:main-loop:start}){\true}{
 \If(\label{test:aMS:case-1})
 {\textrm{$h \geqslant 3$ and~$r_{h-2} < r_h$}}
 {merge the runs~$R_{h-2}$ and~$R_{h-1}$
 \tcp*[f]{case \#1}\label{aMS:case-1}}
 \ElseIf(\label{test:aMS:case-2})
 {\textrm{$h \geqslant 2$ and~$r_{h-1} < \alpha r_h$}}
 {merge the runs~$R_{h-1}$ and~$R_h$
 \tcp*[f]{case \#2}\label{aMS:case-2}}
 \ElseIf(\label{test:aMS:case-3})
 {\textrm{$h \geqslant 3$ and~$r_{h-2} < \alpha r_{h-1}$}}
 {merge the runs~$R_{h-1}$ and~$R_h$
 \tcp*[f]{case \#3}\label{aMS:case-3}}
 \ElseIf{\textrm{the end of the array has not yet been reached}}
 {find a new monotonic run~$R$, make it non-decreasing,
 and push it onto~$\S$\label{aMS:case-push}}
 \Else{break\label{algline:aMS:end_inner_loop}}
}
\While{$h \geqslant 2$}{
 merge the runs~$R_{h-1}$ and~$R_h$
 \label{alg:aMS:collapse}
}
\end{small}
\caption{\aMS\label{alg:aMS}}
\end{algorithm}

Like in Section~\ref{subsubsec:TS}, we say that a run~$R$ is
a \#1-, a \#2- or a \#3-run if is merged in line~\ref{aMS:case-1},~\ref{aMS:case-2}~or~\ref{aMS:case-3}.
In addition, still like in Sections~\ref{subsubsec:cASS} and~\ref{subsubsec:TS}, we safely assume that each run is a \#1-, a \#2- or a \#3-run.

Our proof is then based on the following result, which extends~\cite[Theorem 14]{BuKno18} by adding the inequalities (ii) and (iii).

\begin{lemma}\label{lemma:aMS:invariant}
Let~$\S = (R_1,R_2,\ldots,R_h)$ be a stack obtained while executing \mbox{\aMS}.
We have
\begin{enumerate}[label=(\roman*)]
\item {}$r_i \geqslant \alpha r_{i+1}$ whenever~$1 \leqslant i \leqslant h-3$,
\item {}$r_{h-2} \geqslant (\alpha-1) r_{h-1}$ if~$h \geqslant 3$, and
\item {}$\max\{r_{h-2}/\alpha,\alpha r_h/(\alpha-1)\} \geqslant r_{h-1}$ if~$h \geqslant 3$.
\end{enumerate}
\end{lemma}

\begin{proof}
Theorem 14 from~\cite{BuKno18} already proves the inequality (i). Therefore, we prove, by a direct induction on the number of (push or merge) operations performed before obtaining the stack~$\S$, that~$\S$ satisfies (ii) and (iii).

When the algorithm starts, we have~$h \leqslant 2$, and therefore there is nothing to prove in that case.
Then, when a stack~$\S = (R_1,R_2,\ldots,R_h)$ obeying (i), (ii) and (iii) is transformed into a stack~$\overline{\S} = (\ovR_1,\ovR_2,
\ldots,\ovR_{\ovh})$
\begin{itemize}
\item by inserting a run,~$\ovh = h+1$ and~$\ovr_{\ovh-2} = r_{h-1} \geqslant \alpha r_h = \alpha \ovr_{\ovh-1}$;

\item by merging the runs~$R_{h-1}$ and~$R_h$, we have~$\ovh = h-1$ and~$\ovr_{\ovh-2} = r_{h-3} \geqslant \alpha r_{h-2} = \alpha \ovr_{\ovh-1}$;

\item by merging the runs~$R_{h-2}$ and~$R_{h-1}$, case \#1 just occurred and~$\ovh = h-1$, so that
\[\min\{\ovr_{\ovh-2},\alpha\ovr_{\ovh}\} =
\min\{r_{h-3},\alpha r_h\} \geqslant \alpha r_{h-2}
\geqslant (\alpha-1)(r_{h-2} + r_{h-1}) = (\alpha-1) \ovr_{\ovh-1}.\]
\end{itemize}

In each case,~$\overline{\S}$ satisfies (ii) and (iii), which completes the induction and the proof.
\end{proof}

Lemmas~\ref{lem:aMS::right} and~\ref{lem:aMS::left} focus on inequalities involving the lengths of a given run~$R$ belonging to a merge tree induced by \aMS, and of its ancestors.
In each case, the stack just before the run~$R$ is merged is denoted by~$\S = (R_1,R_2,\ldots,R_h)$.
In what follows, we also set~$\alpha^\star = \min\{\alpha,1+1/\alpha,1+(\alpha-1)/\alpha\}$.

\begin{lemma}\label{lem:aMS::right}
If~$\aR{1}$ is a right run,~$\ar{2} \geqslant \alpha^\star r$.
\end{lemma}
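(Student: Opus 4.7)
The plan is to proceed by a case analysis on the line of Algorithm~\ref{alg:aMS} that produces the merge creating~$\aR{1}$, combined with whether $R$ is the left or right child of~$\aR{1}$. The six resulting sub-cases will each be settled by combining the corresponding triggering condition (and the negations of the preceding case conditions, which are implied by the fact that they did not fire) with the invariants of Lemma~\ref{lemma:aMS:invariant} and with a single structural observation on the left sibling of~$\aR{1}$.

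The structural observation is that, if $\aR{1}$ is a right run and $L$ denotes its left sibling in the merge tree, then $\ell \geqslant r_{j-1}$, where $j$ is the position of $\aR{1}$ in the stack immediately after its creation (so $j = h-2$ in case \#1 and $j = h-1$ in cases \#2 and \#3). Indeed, $L$'s leaves and $R_{j-1}$'s leaves are two contiguous ranges of positions both ending one position before $\aR{1}$'s leftmost leaf; hence their subtrees share a common leaf and are therefore nested in the merge tree. Now $L$ cannot be a proper descendant of $R_{j-1}$, for then $\aR{1}$, being the tree-sibling of~$L$, would share with $L$ a common parent that is itself a descendant of $R_{j-1}$, contradicting the fact that $\aR{1}$'s leaves lie outside $R_{j-1}$'s leaves. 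Hence $L$ either coincides with $R_{j-1}$ or has it as a descendant, yielding $\ell \geqslant r_{j-1}$ and thus $\ar{2} \geqslant \ar{1} + r_{j-1}$. When $j = 1$ the run $R_{j-1}$ does not exist, but then $\aR{1}$ is stuck at the bottom of the stack and cannot become a right run, so the hypothesis of the lemma is vacuously satisfied.

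Each sub-case then reduces to a short computation. The two \#2 sub-cases give $\ar{1} > (1 + 1/\alpha) r$ from the trigger condition when $R = R_{h-1}$, and $\ar{2} \geqslant \ar{1} + r_{h-2} \geqslant 2 r$ from the fact that case \#1 did not fire when $R = R_h$. The two \#3 sub-cases give $\ar{2} \geqslant \ar{1} + r_{h-2} \geqslant \alpha r$ when $R = R_{h-1}$ (combining the trigger condition with invariant (ii)), and $\ar{1} \geqslant (\alpha + 1) r$ when $R = R_h$ (using that case \#2 did not fire). Finally, the two \#1 sub-cases use invariant (i) to get $r_{h-3} \geqslant \alpha r$ or, via invariant (ii), $r_{h-3} \geqslant \alpha(\alpha-1) r$, yielding $\ar{2} \geqslant (1 + \alpha) r$ and $\ar{2} \geqslant \alpha^2 r$ respectively. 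Each of these six bounds dominates $\alpha^\star r$, as is immediate from the definition $\alpha^\star = \min\{\alpha,\, 1 + 1/\alpha,\, 1 + (\alpha-1)/\alpha\}$ and the hypothesis $\alpha > 1$.

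The main obstacle will be the structural observation on $L$, rather than any individual sub-case. It is intuitively obvious, but phrasing it so that one does not have to trace every possible interleaving of pushes and merges between the creation of $\aR{1}$ and its eventual absorption as a right child is the delicate point; the nested-subtree argument above is what avoids this bookkeeping.
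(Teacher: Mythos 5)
Your proof is correct and follows essentially the same approach as the paper: both rely on Lemma~\ref{lemma:aMS:invariant} together with the structural observation that the left sibling of~$\aR{1}$ in the merge tree contains the run sitting immediately below~$\aR{1}$ in the stack. The paper compresses your six sub-cases into three by splitting on the stack position~$i$ with~$R = R_i$ rather than on the merge case and the child side, and it leaves the nested-subtree justification implicit; your spelled-out version of that observation is correct and slightly more careful.
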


\begin{proof}
Let~$i$ be the integer such that~$R = R_i$.
If~$i = h-2$, (i) shows that~$r_{i-1} \geqslant \alpha r_i$.
If~$i = h-1$, (ii) shows that~$r_{i-1} \geqslant (\alpha-1) r_i$.
In both cases,~$R_{i-1}$
descends from~$\aR{2}$, and
thus~$\ar{2} \geqslant r + r_{i-1}
\geqslant \alpha r$.
Finally, if~$i = h$, the run~$R$ is a \#2 or \#3-right run,
which means that~$r_{h-2} \geqslant r$ and that~$R_{h-2}$
descends from the left sibling of~$\ar{1}$.
It follows that~$\ar{2} \geqslant r + r_{h-2} \geqslant 2 r \geqslant
(1 + 1/\alpha) r$.
\end{proof}

\begin{lemma}\label{lem:aMS::left}
If~$R$ is a left run,~$\ar{2} \geqslant \alpha^\star r$.
\end{lemma}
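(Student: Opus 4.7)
My approach mirrors the proof of Lemma~\ref{lem::left} for \TS, splitting by which of the cases \#1, \#2 or \#3 triggered the merge that consumes $R$. A preliminary observation will be useful: whenever $\aR{1}$ itself happens to be a right run, Lemma~\ref{lem:aMS::right} already gives $\ar{2} \geqslant \alpha^\star r$, so in the subcase where I need it I may additionally assume that $\aR{1}$ is also a left run.

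For a \#1-left run, $R = R_{h-2}$ and the firing condition $r_{h-2} < r_h$ yields $r_h > r$. Since $\aR{1}$ is then also a left run, its right sibling $S$ in the merge tree starts at the position immediately after $\aR{1}$ ends in the array, so the leftmost original run contained in $S$ is $R_h$ and $s \geqslant r_h$. Therefore $\ar{2} = \ar{1} + s \geqslant r + r_h > 2r$, which suffices because $\alpha > 1$ forces $\alpha^\star \leqslant 1 + 1/\alpha < 2$. A \#2-left run is handled by its firing condition $r_{h-1} < \alpha r_h$ alone: it gives $r_h > r/\alpha$, hence $\ar{2} \geqslant \ar{1} = r + r_h > (1 + 1/\alpha)r \geqslant \alpha^\star r$, and no assumption on $\aR{1}$ is needed.

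The remaining case of a \#3-left run is the delicate one, and the whole point of having extended Lemma~\ref{lemma:aMS:invariant} with the new inequality (iii) is to tame it. The \#3 firing condition states $r_{h-2} < \alpha r_{h-1} = \alpha r$, so $r_{h-2}/\alpha < r$. Applying invariant (iii) to the stack at the moment of the merge yields $\max\{r_{h-2}/\alpha,\,\alpha r_h/(\alpha-1)\} \geqslant r$; since the first term is strictly less than $r$, the second must dominate, giving $r_h \geqslant (\alpha-1)r/\alpha$. Therefore $\ar{2} \geqslant \ar{1} = r + r_h \geqslant r(1 + (\alpha-1)/\alpha) \geqslant \alpha^\star r$. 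The main obstacle is not the arithmetic but recognising that invariant (iii) is precisely the right replacement, in the \aMS setting, for the subsidiary invariant (v) used in the analogous \TS proof: once it is in hand, the delicate ``next merge must be a \#1-merge'' scheduling argument used for \TS is bypassed entirely.
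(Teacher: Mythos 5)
Your proof is correct and follows essentially the same approach as the paper: the same three-case split by merge type, the same use of the firing conditions for cases \#1 and \#2, and the same appeal to invariant (iii) combined with the \#3 firing condition to extract $r_h \geqslant (\alpha-1)r/\alpha$. Your side observations — that the \#2 and \#3 subcases never actually invoke the ``$\aR{1}$ is a left run'' assumption, and that (iii) plays the structural role that (v) plus the scheduling argument plays for \TS — are accurate but do not change the substance of the argument.
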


\begin{proof}
We treat three cases independently, depending on
whether~$R$ is a \#1-, a \#2 or a \#3-left run.
In each case, we assume that~$\aR{1}$ is a left run,
since Lemma~\ref{lem:aMS::right} already proves that~$\ar{2} \geqslant \alpha^\star r$ when~$\aR{1}$ is a right
run.
\begin{itemize}
\item If~$R$ is a \#1-left run, the run~$R = R_{h-2}$ is merged with~$R_{h-1}$ and~$r_{h-2} < r_h$.
Since~$\aR{1}$ is a left run,~$R_h$ descends from~$\aR{2}$, and thus~$\ar{2} \geqslant r + r_h \geqslant 2r \geqslant (1 + 1/\alpha) r$.

\item If~$R$ is a \#2-left run, the run~$R = R_{h-1}$ is merged with~$R_h$ and~$r < \alpha r_h$.
It follows, in that case, that~$\ar{2} \geqslant \ar{1} = r + r_h \geqslant (1 + 1/\alpha) r$.

\item If~$R$ is a \#3-left run, the run~$R = R_{h-1}$ is merged with~$R_h$ and~$r_{h-2} < \alpha r_{h-1}$.
Hence, (iii) proves that~$(\alpha-1) r \leqslant \alpha r_h$, so that~$\ar{2} \geqslant \ar{1} = r + r_h \geqslant (1 + (\alpha-1)/\alpha) r$. \qedhere
\end{itemize}
\end{proof}

\begin{proposition}\label{pro:fast-growth-aMS}
The algorithm \aMS has the fast-growth property.
\end{proposition}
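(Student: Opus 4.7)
The plan is to mirror the concluding argument already used for \TS in Proposition~\ref{pro:fast-growth-TS}, since Lemmas~\ref{lem:aMS::right} and~\ref{lem:aMS::left} for \aMS are exact analogues of Lemmas~\ref{lem::right} and~\ref{lem::left} for \TS, with the common gain factor $\alpha^\star = \min\{\alpha,1+1/\alpha,1+(\alpha-1)/\alpha\}$ in place of $4/3$ and $5/4$. The target inequality will be $\ar{3} \geqslant \alpha^\star r$ for every run $R$ of every merge tree~$\T$ induced by \aMS, which, once $\alpha^\star > 1$ is checked, witnesses the fast-growth property with $\ell = 3$.

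First, I would record that, since $\alpha > 1$, each of the three quantities $\alpha$, $1+1/\alpha$ and $1+(\alpha-1)/\alpha = 2-1/\alpha$ is strictly greater than $1$, so $\alpha^\star > 1$. I would also use the trivial monotonicity of run lengths along the ancestor chain: $\ar{0} \leqslant \ar{1} \leqslant \ar{2} \leqslant \ar{3}$, which follows from the fact that the parent of a run contains that run.

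Next, I would split on the nature of $\aR{1}$. If $\aR{1}$ is a right run, Lemma~\ref{lem:aMS::right} applied to $R$ yields $\ar{2} \geqslant \alpha^\star r$, and monotonicity gives $\ar{3} \geqslant \ar{2} \geqslant \alpha^\star r$. Otherwise $\aR{1}$ is a left run, and Lemma~\ref{lem:aMS::left} applied to $\aR{1}$ produces its second ancestor, namely $\aR{3}$, with $\ar{3} \geqslant \alpha^\star \ar{1} \geqslant \alpha^\star r$. In both cases we obtain the claimed bound, so \aMS satisfies Definition~\ref{def:fast-growth} with parameters $\ell = 3$ and $\alpha^\star > 1$.

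No real obstacle arises: all the combinatorial content sits in Lemmas~\ref{lem:aMS::right} and~\ref{lem:aMS::left}, and the proof is essentially a bookkeeping step whose only subtlety is to notice that the two lemmas share a common multiplicative gain $\alpha^\star$, making the case split clean and allowing the $\ell = 3$ shift to cover both the right-child and left-child situations uniformly.
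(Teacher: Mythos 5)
Your proposal is correct and follows exactly the same route as the paper's own proof: split on whether $\aR{1}$ is a right or left run, apply Lemma~\ref{lem:aMS::right} to $R$ in the first case and Lemma~\ref{lem:aMS::left} to $\aR{1}$ in the second, and conclude $\ar{3} \geqslant \alpha^\star r$ with $\ell = 3$. Your explicit check that $\alpha^\star > 1$ is a welcome addition that the paper leaves implicit.
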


\begin{proof}
Let~$\T$ be a merge tree induced by \aMS, and let~$R$ be a run in~$\T$.
Lemma~\ref{lem:aMS::right} shows that~$\ar{3} \geqslant \ar{2} \geqslant \alpha^\ast r$ if~$\aR{1}$ is a right run, and Lemma~\ref{lem:aMS::left} shows that~$\ar{3} \geqslant \alpha^\ast \ar{1} \geqslant \alpha^\ast r$ if~$\aR{1}$ is a left run anyway.
\end{proof}

\subsection{Algorithms with the tight middle-growth property}
\label{subsec:other-tight}

\subsubsection{\PoS}
\label{subsubsec:PoS:2}

\begin{proposition}
\label{pro:POS:tight}
The algorithm \PoS has the tight middle-growth property.
\end{proposition}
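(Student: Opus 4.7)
The plan is to combine Lemma~\ref{lem:PoS:power} with a bound of the form $p + h \leq \log_2(n) + O(1)$ relating the power $p$ and the height $h$ of any run in a \PoS merge tree. The argument splits into three steps: a leaf-power bound, a height-to-power translation via Definition~\ref{def:PoS}, and a grandchild application of the lemma.

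First, I would establish that every leaf of the merge tree has power at most $\lceil \log_2(n) \rceil$. This follows because, for $1 \leq i \leq \rho-1$, the interval $\intt(i)$ has length $r_i + r_{i+1} \geq 2$; since any half-open interval of length at least $n/2^{p-1}$ contains an element of $\{k n/2^{p-1} : k \in \Z\}$, choosing $p = \lceil \log_2(n) \rceil$ gives $p_i \leq \lceil \log_2(n) \rceil$. Combined with $p_0 = -\infty$ and $p_\rho = 0$, this caps the power $\max\{p_{k-1}, p_k\}$ of every leaf $R_k$ at $\lceil \log_2(n) \rceil$. Then I would apply Definition~\ref{def:PoS}, which forces powers to strictly increase along root-to-leaf paths: picking a leaf descendant at depth $h$ below any run of height $h$, the power of that run is at most $\lceil \log_2(n) \rceil - h$.

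The final step, for $R$ of height $h \geq 2$, is to pick the grandchild $D$ of $R$ lying on a deepest path downwards from $R$. This $D$ has height exactly $h-2$ in its own subtree, hence $p_D \leq \lceil \log_2(n) \rceil - h + 2$. Applying Lemma~\ref{lem:PoS:power} to $D$, whose grandparent is precisely $R$, gives
\[ r > \frac{n}{2^{p_D}} \geq \frac{n}{2^{\lceil \log_2(n) \rceil - h + 2}} \geq 2^{h - 3}, \]
while for $h \in \{0, 1\}$ the bound $r \geq 2^{h-3}$ is immediate from $r \geq 1$. Thus the tight middle-growth property holds with $\gamma = 3$. The main obstacle is the leaf-power bound; once one observes that each interior interval $\intt(i)$ has length at least $2$ and hence $p_i \leq \lceil \log_2(n) \rceil$, the remaining steps follow routinely from Definition~\ref{def:PoS} and Lemma~\ref{lem:PoS:power}.
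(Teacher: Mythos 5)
Your proof is correct, and it follows essentially the same strategy as the paper's: combine the monotone decrease of powers along root-to-leaf paths (Definition~\ref{def:PoS}) with the power-to-length relation of Lemma~\ref{lem:PoS:power}. Where you diverge is in how you cap the power of a low run. The paper applies Lemma~\ref{lem:PoS:power} a second time at the bottom of the chain, to the run $R$ itself: the inequality $2^{p-2}r < n$ with $r \geqslant 1$ gives an implicit bound $2^{p-2} < n$ on $p$, and then the chain $2^{\ap{h-2}+h-4} \leqslant 2^{p-2} < n < 2^{\ap{h-2}}\ar{h}$ yields $\gamma = 4$. You instead observe directly from Definition~\ref{def:PoS:power} that each interior interval $\intt(i)$ has length $r_i + r_{i+1} \geqslant 2$, which is a separate, elementary fact the paper does not state, and deduce $p_i \leqslant \lceil \log_2(n) \rceil$. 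That explicit bound on leaf powers is one unit sharper than what the paper's reuse of the lemma delivers, so after propagating it up through the grandchild you obtain $\gamma = 3$ rather than $\gamma = 4$. Both constants of course suffice for the tight middle-growth property; your version trades one extra lemma invocation for one extra standalone observation and a slightly cleaner constant.
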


\begin{proof}
Let~$\T$ be a merge tree induced by \PoS and let~$R$ be a run in~$\T$ at depth at least~$h$.
We will prove that~$\ar{h} \geqslant 2^{h-4}$.

If~$h \leqslant 4$, the desired inequality is immediate.
Then, if~$h \geqslant 5$, let~$n$ be the length of the array on which~$\T$ is induced, and let~$p$ and~$\ap{h-2}$ be the respective powers of the runs~$R$ and~$\aR{h-2}$.
Definition~\ref{def:PoS} and Lemma~\ref{lem:PoS:power} prove that~$2^{\ap{h-2}+h-4} \leqslant 2^{p-2} \leqslant 2^{p-2} r < n < 2^{\ap{h-2}} \ar{h}$.
\end{proof}

\subsubsection{\NMS}
\label{subsubsec:NMS}

The algorithm \NMS consists in a plain binary merge sort, whose unit pieces of data to be merged are runs instead of being single elements.
Thus, we identify \NMS with the fundamental property that describes those merge trees it induces.

\begin{definition}\label{def:tree:NMS}
Let~$\T$ be a merge tree induced by \NMS, and let~$R$ and~$\ovR$ be two runs that are siblings of each other in~$\T$.
Denoting by~$n$ and~$\overline{n}$ the respective numbers of leaves of~$\T$ that descend from~$R$ and from~$\ovR$, we have~$|n - \overline{n}| \leqslant 1$.
\end{definition}

\begin{proposition}\label{pro:fast-growth-NMS}
The algorithm \NMS has the tight middle-growth property.
\end{proposition}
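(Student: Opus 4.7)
The plan is to extract, from the balance condition of Definition~\ref{def:tree:NMS}, an exponential lower bound on the number of leaf descendants of any node at a given height~$h$, and then to translate this into a lower bound on run lengths.

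More precisely, I would first show by induction on~$n$ that, in any merge tree~$\T$ induced by \NMS, a run with exactly~$n$ leaf descendants has height~$\lceil \log_2 n \rceil$ in~$\T$. The induction step is immediate from Definition~\ref{def:tree:NMS}: the two children of such a run must carry~$\lceil n/2 \rceil$ and~$\lfloor n/2 \rfloor$ leaves respectively, so, by the induction hypothesis, both of them have height~$\lceil \log_2 \lceil n/2 \rceil \rceil = \lceil \log_2 n \rceil - 1$ when~$n \geqslant 2$, whence the parent's height is~$\lceil \log_2 n \rceil$. Inverting this equality, any run~$R$ of height~$h$ must have at least~$2^{h-1}$ leaf descendants: trivially so when~$h \leqslant 1$, and because~$\lceil \log_2 n \rceil = h$ forces~$n > 2^{h-1}$ when~$h \geqslant 2$.

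To finish, I would observe that since every initial run has length at least~$1$, the length~$r$ of any run is bounded below by its number of leaf descendants. Combined with the previous bound, this gives~$r \geqslant 2^{h-1}$ for every run of height~$h$, i.e., the tight middle-growth property with~$\gamma = 1$.

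No real obstacle is expected here: the argument is essentially a direct computation from the balance condition. The only small subtlety is to verify the edge case~$h = 0$ (where the inequality reads~$r \geqslant 2^{-1}$ and is trivially satisfied) so that the single constant~$\gamma = 1$ works uniformly across all heights; note also that~$\gamma = 1$ cannot be improved to~$\gamma = 0$, since a balanced subtree built from only three length-one runs already reaches height~$2$.
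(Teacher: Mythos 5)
Your argument is correct and arrives at the same constant~$\gamma = 1$ as the paper, but you reach it by proving a slightly stronger intermediate fact. The paper proceeds by induction on the height~$h$, showing directly that any run of height~$h \geqslant 1$ descends from at least~$2^{h-1}+1$ leaves (using only~$n = n_1 + n_2 \geqslant 2n_1 - 1$ from the balance condition). You instead prove, by induction on~$n$, that a run with~$n$ leaf descendants has height \emph{exactly}~$\lceil \log_2 n \rceil$, and then invert this equality. Both routes are about equally short; yours buys an exact formula that the statement does not require, while the paper's inequality induction is a little leaner. One small imprecision in your induction step: the two children, carrying~$\lceil n/2 \rceil$ and~$\lfloor n/2 \rfloor$ leaves, do \emph{not} in general both have height~$\lceil \log_2 \lceil n/2 \rceil \rceil$ — for~$n = 5$ the children carry~$3$ and~$2$ leaves and have heights~$2$ and~$1$. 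What you actually need, and what holds, is that the \emph{maximum} of the two children's heights equals~$\lceil \log_2 \lceil n/2 \rceil \rceil = \lceil \log_2 n \rceil - 1$, since the parent's height is one more than that maximum. With that phrasing fixed, the proof goes through, and your closing remark that~$\gamma = 0$ is impossible (three length-one leaves giving height~$2$ with root length~$3 < 4$) is a nice sanity check.
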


\begin{proof}
Let~$\T$ be a merge tree induced by \NMS, let~$R$ be a run in~$\T$, and let~$h$ be its height.
We will prove by induction on~$h$ that, if~$h \geqslant 1$, the run~$R$ is an ancestor of at least~$2^{h-1}+1$ leaves of~$\T$, thereby showing that~$r \geqslant 2^{h-1}$.

First, this is the case if~$h = 1$.
Then, if~$h \geqslant 2$, let~$R_1$ and~$R_2$ be the two children of~$R$.
One of them, say~$R_1$, has height~$h-1$.
Let~$n$,~$n_1$ and~$n_2$ be the numbers of leaves that descend from~$R$,~$R_1$ and~$R_2$, respectively.
The induction hypothesis shows that
\[n = n_1 + n_2 \geqslant 2 n_1-1 \geqslant
2 \times (2^{h-2}+1) - 1 = 2^{h-1}+1,\]
which completes the proof.
\end{proof}

\subsubsection{\ShS}
\label{subsubsec:ShS}

The algorithm \ShS is presented in Algorithm~\ref{alg:ShS}.
Like \cASS, it relies on the notion of \emph{level} of a run.

\begin{algorithm}[ht]
\begin{small}
\SetArgSty{texttt}
\DontPrintSemicolon
\Input{Array~$A$ to sort}
\Result{The array~$A$ is sorted into a single run.
That run remains on the 
stack.}
\Note{The height of the stack~$\S$ is denoted by~$h$; its~$i$\textsuperscript{th} deepest run by~$R_i$; the length of~$R_i$~by~$r_i$.
Finally, we set~$\ell_i = \lfloor \log_2(r_i) \rfloor$.
When two consecutive runs of~$\S$ are merged, they are replaced, in~$\S$, with the run resulting from the merge. \justifying}
\BlankLine {}$\S \gets~$ an empty stack

\While(\label{main-loop:ShS:start}){\true}{
 \If(\label{test:ShS:case-1})
 {\textrm{$h \geqslant 1$ and~$\ell_{h-1}
 \leqslant \ell_h$}}
 {merge the runs~$R_{h-1}$ and~$R_h$\label{ShS:std:merge}}
 \ElseIf{\textrm{the end of the array has not yet been reached}}
 {find a new monotonic run~$R$, make it non-decreasing,
 and push it onto~$\S$\label{ShS:case-push}}
 \Else{break\label{algline:ShS:end_inner_loop}}
}
\While{$h \geqslant 2$}{
 merge the runs~$R_{h-1}$ and~$R_h$
 \label{alg:ShS:collapse}
}
\end{small}
\caption{\ShS\label{alg:ShS}}
\end{algorithm}

Our proof is based on the following result, which appears in the proof of~\cite[Theorem 11]{BuKno18}.

\begin{lemma}\label{lemma:ShS:invariant}
Let~$\S = (R_1,R_2,\ldots,R_h)$ be a stack obtained while executing \ShS.
We have~$\ell_i \geqslant \ell_{i+1}+1$ whenever~$1 \leqslant i \leqslant h-2$.
\end{lemma}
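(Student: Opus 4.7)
The plan is to prove Lemma~\ref{lemma:ShS:invariant} by induction on the number of push-or-merge operations that \ShS performs before producing the stack~$\S$. The base case (empty stack) is vacuous. For the inductive step, I would let~$\S = (R_1, \ldots, R_h)$ be a stack satisfying the invariant and inspect how its successor~$\overline{\S} = (\ovR_1, \ldots, \ovR_{\ovh})$ is obtained.

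If a fresh run is pushed onto~$\S$, then~$\ovh = h+1$ and~$\ovR_i = R_i$ for~$i \leqslant h$. The new invariant demands~$\ovl_i \geqslant \ovl_{i+1} + 1$ for~$1 \leqslant i \leqslant h-1$. For~$i \leqslant h-2$ this is an instance of the inductive hypothesis. The only genuinely new inequality is at~$i = h-1$, namely~$\ell_{h-1} \geqslant \ell_h + 1$. When~$h \geqslant 2$ the push occurs only because the merge test of line~\ref{test:ShS:case-1} failed on~$\S$, i.e.,~$\ell_{h-1} > \ell_h$; and since levels are integers this forces~$\ell_{h-1} \geqslant \ell_h + 1$. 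When~$h < 2$ the inequality is vacuous.

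If instead~$R_{h-1}$ and~$R_h$ are merged into~$\ovR$, then~$\ovh = h-1$, with~$\ovR_i = R_i$ for~$i \leqslant h-2$ and~$\ovR_{h-1} = \ovR$. The new invariant asks for~$\ovl_i \geqslant \ovl_{i+1} + 1$ with~$1 \leqslant i \leqslant h-3$; for each such~$i$ both~$\ovR_i$ and~$\ovR_{i+1}$ coincide with~$R_i$ and~$R_{i+1}$, so the inequality is a restriction of the inductive hypothesis. The same reasoning will cover the merges performed during the final collapse loop, where the merge test no longer needs to be satisfied.

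The reason this proof will be so short---and the only mildly subtle point---is that the invariant deliberately excludes the relation between the levels~$\ell_{h-1}$ and~$\ell_h$ of the top two runs. Consequently, merging the top two runs cannot break any inequality the invariant requires, and a push introduces exactly one new constraint, which is precisely the one that the merge test is designed to enforce. The only place where one must be careful is to observe that for a push when~$h = 0$ or~$h = 1$ no new constraint actually appears, so the otherwise-tempting claim ``the push is only legal when~$\ell_{h-1} > \ell_h$'' does not need to be invoked in those degenerate cases.
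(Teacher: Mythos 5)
Your induction on push and merge operations is correct, and the observation that the invariant deliberately says nothing about the two topmost levels---so a merge can only shorten the list of required inequalities, while a push introduces exactly one new constraint which the failed merge test provides---is precisely the structural point that makes it work, degenerate small-$h$ cases and the final collapse loop included. The paper does not reprove this lemma (it defers to~\cite{BuKno18}), but your argument is the same induction it carries out explicitly for the analogous stack invariants in Lemma~\ref{lemma:invariant} and Lemma~\ref{lemma:aMS:invariant}.
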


Lemmas~\ref{lem:ShS:right} and~\ref{lem:ShS:left-right} focus on inequalities involving the lengths of a given run~$R$ belonging to a merge tree induced by \cASS, and of its ancestors.
In each case, the stack just before the run~$R$ is merged is denoted by~$\S = (R_1,R_2,\ldots,R_h)$.

However, and unlike Sections~\ref{subsubsec:cASS} to~\ref{subsubsec:aMS}, we cannot simulate the merge operations that occur in line~\ref{alg:ShS:collapse} as if they had occurred in line~\ref{ShS:std:merge}.
Instead, we say that a run~$R$ is \emph{rightful} if~$R$ and its ancestors are all right runs (i.e., if~$R$ belongs to the rightmost branch of the merge tree), and that~$R$ is \emph{standard} otherwise.

\begin{lemma}\label{lem:ShS:right}
Let~$R$ be a run in~$\T$, and let~$k \geqslant 1$ be an integer.
If each of the~$k$ runs~$\aR{0},\ldots,\aR{k-1}$ is a right run,~$\al{k} \geqslant k-1$.
\end{lemma}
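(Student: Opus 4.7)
The plan is to track the stack configuration at the successive merges that produce $\aR{1}, \aR{2}, \ldots, \aR{k}$. For $0 \leqslant i \leqslant k-1$, let $t_i$ denote the moment at which the merge creating $\aR{i+1}$ is performed. Since $\aR{i}$ is a right run, at time $t_i$ it sits as $R_h$ on the stack, with its left sibling $L_i = R_{h-1}$. The target inequality will follow from the single bound $\ell(L_{k-1}) \geqslant k-1$, because then $\ar{k} \geqslant r(L_{k-1}) \geqslant 2^{\ell(L_{k-1})} \geqslant 2^{k-1}$.

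The first step is a \emph{no-push} observation: between the formation of $\aR{i}$ (or, when $i=0$, the initial push of $R$ onto the stack) and the merge $t_i$, nothing is ever pushed on top of $\aR{i}$. Were any run pushed above $\aR{i}$, then in both the main loop and the collapse phase the runs stacked above $\aR{i}$ would eventually be merged into a single run that would afterwards be merged with $\aR{i}$; in that last merge $\aR{i}$ would play the role of the left child, contradicting the hypothesis that $\aR{i}$ is a right run. This forces every $t_i$ to immediately follow $t_{i-1}$, separated at most by the transition from the main loop to the collapse phase.

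The remainder splits into two cases. In Case~A, all $k$ merges happen in the main loop; just before $t_{i-1}$ the top three stack positions hold, from bottom to top, the run that will become $L_i$, then $L_{i-1}$, then $\aR{i-1}$, and Lemma~\ref{lemma:ShS:invariant} applied at positions $h-2$ and $h-1$ yields $\ell(L_i) \geqslant \ell(L_{i-1}) + 1$. Iterating from the trivial bound $\ell(L_0) \geqslant 0$ delivers $\ell(L_{k-1}) \geqslant k-1$. In Case~B, let $j$ be the smallest index with $t_j$ in the collapse phase. The Case~A bookkeeping applies for indices $i<j$ and, via one further use of Lemma~\ref{lemma:ShS:invariant} on the stack just before $t_{j-1}$, supplies $\ell(L_j) \geqslant j$ (the boundary $j=0$ only needs $\ell(L_0) \geqslant 0$). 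Writing the initial collapse stack as $(Q_1, \ldots, Q_m)$ with $Q_m = \aR{j}$ and $Q_{m-1} = L_j$, the collapse phase merges successively set $L_{j+i} = Q_{m-i-1}$; since Lemma~\ref{lemma:ShS:invariant} holds on this initial collapse stack, $\ell(Q_{m-i-1}) \geqslant \ell(Q_{m-1}) + i \geqslant j+i$, and $\ell(L_{k-1}) \geqslant k-1$ follows again.

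The main technical obstacle will be the no-push observation itself: ruling out every algorithmic scenario in which $\aR{i}$ could re-emerge as the top of the stack after having been displaced, including the distinctive dynamics of the collapse phase, requires a careful case split. Everything past that reduces to routine bookkeeping driven by the strictly decreasing stack-level invariant provided by Lemma~\ref{lemma:ShS:invariant}.
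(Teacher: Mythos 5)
Your proof is correct, but it takes a genuinely different and considerably heavier route than the paper. The paper's argument is entirely \emph{static}: it freezes the algorithm at the single moment $R$ is merged, sets $u(i)$ to the index of the leftmost stack run that descends from $\aR{i}$, observes that all stack runs were formed before each $\aR{i}$ (so each is either nested inside or disjoint from $\aR{i}$, which forces $u(k) < u(k-1) < \cdots < u(0) = h$), and then applies Lemma~\ref{lemma:ShS:invariant} once along the chain $u(k), u(k)+1, \ldots, u(1)$ in that single stack. This avoids any reasoning about later pushes or the collapse phase altogether. Your argument is instead \emph{dynamic}: you follow the stack through the $k$ successive merges $t_0, \ldots, t_{k-1}$, which requires first establishing the no-push observation (to guarantee that each $t_i$ immediately follows $t_{i-1}$ and that $L_i$ is already sitting two slots below the top at time $t_{i-1}$), then applying the invariant at each step, and finally treating the collapse phase as a separate case. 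The no-push observation and the main-loop/collapse dichotomy are exactly the machinery the paper's static viewpoint makes unnecessary. Both proofs hinge on the same invariant, and your telescoping $\ell(L_i) \geqslant \ell(L_{i-1}) + 1$ is morally the same inequality as the paper's $\ell_{u(k)} \geqslant \ell_{u(1)} + (k-1)$, so the gain of the paper's version is purely in economy; your version is closer to a direct simulation of the algorithm and arguably more self-explanatory, at the price of several extra steps whose justification (in particular the no-push claim and the bookkeeping that the relevant stack positions are within the range where the invariant applies) you have only sketched.
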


\begin{proof}
For all~$i \leqslant k$, let~$u(i)$ be the least integer such that~$R_{u(i)}$ descends from~$\aR{i}$.
Since the~$k$ runs~$\aR{0},\ldots,\aR{k-1}$ are right runs,~$u(k) < u(k-1) < \ldots < u(0) = h$.
Thus, Lemma~\ref{lemma:ShS:invariant} proves~that
\[\al{k} \geqslant \ell_{u(k)} \geqslant
\ell_{u(1)} + (k-1) \geqslant k-1. \qedhere\]
\end{proof}

\begin{lemma}\label{lem:ShS:left-right}
Let~$R$ be a run in~$\T$, and let~$k \geqslant 1$ be an integer.
If~$R$ is a left run and the~$k-1$ runs~$\aR{1},\ldots,\aR{k-1}$ are right runs, we have~$\al{k} \geqslant \ell+k-1$ if~$\aR{1}$ is a rightful run, and~$\al{k} \geqslant \ell+k$ if~$\aR{1}$ is a standard run.
\end{lemma}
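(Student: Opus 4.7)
The plan is to split into two cases, depending on whether the merge that creates $\aR{1}$ takes place during the standard phase (line~\ref{ShS:std:merge}) or during the collapse phase (line~\ref{alg:ShS:collapse}). A key preliminary observation is that, for each $j \in \{1,\ldots,k-1\}$, the hypothesis that $\aR{j}$ is a right run implies that no fresh run can be pushed onto the stack between the formations of $\aR{j}$ and $\aR{j+1}$: any such push would later force $\aR{j}$ to become a left child of its parent. Writing the stack just before $\aR{1}$ is formed as $(R'_1,\ldots,R'_{h-1}=R,R'_h=S)$, this identification lets me pin down the left sibling $T_j$ of $\aR{j-1}$ as the fixed entry $R'_{h-j}$ for every $j \in \{2,\ldots,k\}$, and to conclude that $\ar{k} = r + s + \sum_{j=2}^{k} t_j$.

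\textbf{Case I} (standard merge forming $\aR{1}$). The merge trigger gives $\ell \leqslant \ell(S)$, hence $s \geqslant 2^\ell$. Telescoping Lemma~\ref{lemma:ShS:invariant} along the positions $h-1, h-2, \ldots, h-k$ of the pre-$\aR{1}$ stack gives $\ell(T_j) \geqslant \ell + j - 1$, and therefore $t_j \geqslant 2^{\ell+j-1}$, for every $j \in \{2, \ldots, k\}$. Summing yields
\[
\ar{k} \geqslant 2^{\ell+1} + \sum_{j=2}^{k} 2^{\ell+j-1} = 2^{\ell+k},
\]
so $\al{k} \geqslant \ell + k$, which dominates both bounds claimed by the lemma.

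\textbf{Case II} (collapse merge forming $\aR{1}$). I first observe that every collapse-phase merge outputs a node on the rightmost branch of the tree, so $\aR{1}$ is automatically rightful and only the bound $\al{k} \geqslant \ell + k - 1$ has to be established. Here $R$ and $S$ are the two topmost runs of the end-of-standard-phase stack, and each $T_j$ lies one position deeper than $T_{j-1}$ on that same stack. Since the standard phase exits with $\ell_{h'-1} > \ell_{h'}$, the invariant $\ell_i \geqslant \ell_{i+1} + 1$ now holds for all $i \leqslant h'-1$, yielding $\ell(T_j) \geqslant \ell + j - 1$. Summing, with only the trivial bound $s \geqslant 1$,
\[
\ar{k} \geqslant 2^\ell + \sum_{j=2}^{k} 2^{\ell+j-1} = 2^\ell\,(2^k - 1) \geqslant 2^{\ell+k-1},
\]
so $\al{k} \geqslant \ell + k - 1$.

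The main obstacle will be the structural bookkeeping that identifies each $T_j$ with a fixed entry of a single stack configuration, so that a single application of Lemma~\ref{lemma:ShS:invariant} (or of its end-of-standard variant) delivers the ``$+1$'' gap between the levels of $R, T_2, T_3, \ldots, T_k$ all at once. The no-push-over-a-right-run observation is what makes this identification clean, and it also pinpoints why the two bounds must differ by exactly one: in Case I the standard merge forming $\aR{1}$ contributes an extra $+1$ from $\ell(S) \geqslant \ell$, whereas in Case II the sibling $S$ involved in the first collapse merge can be arbitrarily smaller than $R$.
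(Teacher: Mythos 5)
Your proof is correct, and it takes a genuinely different route from the paper's. The paper first establishes the $k=1$ subcase (splitting on whether $\aR{1}$ is rightful or standard, the latter forcing $\ell \leqslant \ell_h$ at the merge), and for $k \geqslant 2$ it introduces the indices $u(i)$ of the leftmost stack runs descending from $\aR{i}$, telescopes the invariant of Lemma~\ref{lemma:ShS:invariant} from $R_{u(1)}$ down to $R_{u(k)}$, identifies the left child $R'$ of $\aR{k}$ as an ancestor of $R_{u(k)}$, and concludes by applying the $k=1$ subcase to $R'$. You instead split on whether the merge creating $\aR{1}$ happens during the standard phase or the collapse phase, make the ``no push over a right run'' observation explicit (which the paper leaves implicit in the inequality $u(i+1)<u(i)$), pin down each sibling $T_j$ as a fixed entry of the pre-$\aR{1}$ stack, and bound $\ar{k}=r+s+\sum_{j=2}^k t_j$ directly by a geometric sum. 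Your phase-based split and the paper's rightful/standard split are not the same partition, but they are compatible: every collapse-phase $\aR{1}$ is rightful, and your Case I proves the stronger bound $\ell+k$ for all standard-phase $\aR{1}$, which also covers rightful ones formed there. One small slip: in Case~II, if $\aR{1}$ is not the first collapse merge then $R$ is not the second-to-top run of the end-of-standard-phase stack and $S$ is a cumulative merge rather than $R'_{h'}$; but since your computation only uses $s\geqslant 1$ and the invariant on stack entries strictly below $R$'s position, the argument survives unchanged, and in fact the plain invariant (valid up to index $h-2$) already suffices there, so invoking the end-of-standard-phase extension is unnecessary.
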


\begin{proof}
First, assume that~$k = 1$.
If~$\aR{1}$ is rightful, the desired inequality is immediate.
If~$\aR{1}$ is standard, however, the left run~$R = R_{h-1}$ was merged with the run~$R_h$ because~$\ell \leqslant \ell_h$.
In that case, it follows that~$\ar{1} = r + r_h \geqslant 2^\ell + 2^{\ell_h} \geqslant 2^\ell + 2^\ell = 2^{\ell+1}$, i.e., that~$\al{1} \geqslant \ell+1$.

Assume now that~$k \geqslant 2$.
Note that~$\aR{k}$ is rightful if and only if~$\aR{1}$ is also rightful.
Then, for all~$i \leqslant k$, let~$u(i)$ be the least integer such that the run~$R_{u(i)}$ descends from~$\aR{i}$.
Since~$\aR{1},\ldots,\aR{k-1}$ are right runs,~$u(k) < u(k-1) < \ldots < u(1) = h-1$.

In particular, let~$R'$ be the left sibling of~$\aR{k-1}$:
this is an ancestor of~$R_{u(k)}$, and the left child of~$\aR{k}$.
Consequently, Lemma~\ref{lemma:ShS:invariant} and applying our study of the case~$k = 1$ to the run~$R'$ conjointly prove that
\begin{itemize}
\item {}$\al{k} \geqslant \ell' \geqslant \ell_{u(k)} \geqslant \ell_{u(1)}+k-1 = \ell+k-1$ if~$\aR{1}$ and~$\aR{k}$ are rightful;
\item {}$\al{k} \geqslant \ell'+1 \geqslant \ell_{u(k)}+1 \geqslant \ell_{u(1)}+k = \ell+k$  if~$\aR{1}$ and~$\aR{k}$ are standard. \qedhere
\end{itemize}
\end{proof}

\begin{proposition}\label{pro:middle-growth-ShS}
The algorithm \ShS has the tight middle-growth property.
\end{proposition}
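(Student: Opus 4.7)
The plan is to show that \ShS satisfies the tight middle-growth property with $\gamma = 2$, i.e., that $r \geqslant 2^{h - 2}$ for every run $R$ of height $h$ in a merge tree $\T$ induced by \ShS. I would fix such an $R$, select a descendant leaf $L$ lying at depth exactly $h$ beneath $R$, and bound $\ell_R$ from below by chaining Lemmas~\ref{lem:ShS:right} and~\ref{lem:ShS:left-right} along the ancestry $L = L^{(0)}, L^{(1)}, \ldots, L^{(h)} = R$, where $L^{(i)}$ denotes the $i$\textsuperscript{th} ancestor of $L$ in $\T$.

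First I would enumerate the indices $0 \leqslant j_1 < j_2 < \ldots < j_m \leqslant h - 1$ at which $L^{(j_i)}$ is a left run; in every other position of the chain we encounter a right run. Applying Lemma~\ref{lem:ShS:right} to the initial all-right-run prefix gives that the level of $L^{(j_1)}$ is at least $j_1 - 1$ (trivially $\geqslant 0$ when $j_1 = 0$). Then, for each intermediate segment $[j_i, j_{i+1}]$ with $i < m$, applying Lemma~\ref{lem:ShS:left-right} to the left run $L^{(j_i)}$ with $k = j_{i+1} - j_i$ produces an increment of at least $k$ in the level, because the parent $L^{(j_i + 1)}$ is \emph{standard} (see below). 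For the final segment reaching $R$, the same lemma with $k = h - j_m$ contributes at least $h - j_m - 1$ to the level, even in the worst case where the parent happens to be rightful. Telescoping these inequalities yields $\ell_R \geqslant \max(j_1 - 1, 0) + (h - j_1 - 1) \geqslant h - 2$, whence $r \geqslant 2^{h - 2}$. The edge case $m = 0$ is handled by a single application of Lemma~\ref{lem:ShS:right} with $k = h$, giving the cleaner bound $\ell_R \geqslant h - 1$; and for $h \leqslant 2$ the trivial bound $r \geqslant 1 \geqslant 2^{h - 2}$ already suffices.

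I expect the main subtlety to be justifying that each intermediate parent $L^{(j_i + 1)}$ is indeed standard, so that Lemma~\ref{lem:ShS:left-right} contributes a full $+k$ rather than $+k - 1$. The point is that $L^{(j_{i+1})}$ is either $L^{(j_i + 1)}$ itself (when $j_{i+1} = j_i + 1$) or a strict ancestor of it in the tree, and this $L^{(j_{i+1})}$ is a left run; hence $L^{(j_i + 1)}$ cannot lie entirely on the rightmost branch. Only the topmost segment, for which no further left-run ancestor is guaranteed above $R$, may genuinely be rightful, and that is precisely the place where the single ``$-1$'' of slack in the final bound comes from. Once this dichotomy is pinned down, the rest is a routine telescoping.
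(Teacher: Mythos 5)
Your proposal is correct and follows essentially the same route as the paper's proof: you enumerate the left-run indices along the path from a deepest descendant leaf up to the node of height $h$, apply Lemma~\ref{lem:ShS:right} to the initial prefix, chain Lemma~\ref{lem:ShS:left-right} across each segment (noting that intermediate parents are standard because they sit below a further left run), absorb the single $-1$ slack in the topmost segment, and telescope to obtain $h-2$. The paper organizes the indices with an extra sentinel $a_{k+1}=h$ rather than treating the final segment separately, but the decomposition, lemmas invoked, and arithmetic are identical.
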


\begin{proof}
Let~$\T$ be a merge tree induced by \ShS and let~$R$ be a run in~$\T$ at depth at least~$h$.
Let~$a_1 < a_2 < \ldots < a_k$ be the non-negative integers smaller than~$h$ for which~$\aR{a_i}$ is a left run.
We also set~$a_{k+1} = h$.
Lemma~\ref{lem:ShS:right} proves that~$\al{a_1} \geqslant a_1-1$.
Then, for all~$i < k$, the run~$\aR{a_i+1}$ is standard, since it descends from the left run~$\aR{a_k}$, and thus Lemma~\ref{lem:ShS:left-right} proves that~$\al{a_{i+1}} \geqslant \al{a_i} + a_{i+1} - a_i$. 
Lemma~\ref{lem:ShS:left-right} also proves that~$\al{a_{k+1}} \geqslant \al{a_k} + a_{k+1} - a_k-1$.
It follows that~$\al{h} = \al{a_{k+1}} \geqslant h-2$, and therefore that~$\ar{h} \geqslant 2^{\al{h}} \geqslant 2^{h-2}$.
\end{proof}

\subsection{Algorithms with the middle-growth property}
\label{subsec:other-middle}

\subsubsection{\aSS}
\label{subsubsec:aSS}

The algorithm \aSS, which
predated and inspired its variant
\aMS, is presented in
Algorithm~\ref{alg:aSS}.

\begin{algorithm}[h]
\begin{small}
\SetArgSty{texttt}
\DontPrintSemicolon
\Input{Array~$A$ to sort, parameter~$\alpha > 1$}
\Result{The array~$A$ is sorted into a single run.
That run remains on the stack.}
\Note{The height of the stack~$\S$ is denoted by~$h$; its~$i$\textsuperscript{th} deepest run by~$R_i$; the length of~$R_i$~by~$r_i$.
When two consecutive runs of~$\S$ are merged, they are replaced, in~$\S$, with the run resulting from the merge. \justifying}
\BlankLine {}$\S \gets~$ an empty stack

\While(\label{aSS:main-loop:start}){\true}{
 \If{\textrm{$h \geqslant 2$ and 
~$r_{h-1} \leqslant \alpha r_h$}}
 {merge the runs~$R_{h-1}$ and~$R_h$}
 \ElseIf{\textrm{the end of the array has not yet been reached}}
 {find a new monotonic run~$R$, make it non-decreasing, and push it onto~$\S$\label{aSS:case-push}}
 \Else{break\label{algline:aSS:end_inner_loop}}
}
\While{$h \geqslant 2$}{
 merge the runs~$R_{h-1}$ and~$R_h$
 \label{alg:aSS:collapse}
}
\end{small}
\caption{\aSS\label{alg:aSS}}
\end{algorithm}

Our proof is based on the following result, which appears in~\cite[Lemma 2]{AuNiPi15}.

\begin{lemma}\label{lemma:aSS:invariant}
Let~$\S = (R_1,R_2,\ldots,R_h)$ be a stack obtained while executing \mbox{\aSS}.
We have~$r_i > \alpha r_{i+1}$ whenever~$1 \leqslant i \leqslant h-2$.
\end{lemma}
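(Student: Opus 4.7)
The plan is to prove the invariant by induction on the number of push and merge operations performed while executing \aSS, starting from the empty stack. The base case (empty stack) is vacuous.

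For the inductive step, I assume the invariant holds for a stack $\S = (R_1,\ldots,R_h)$ and consider the two possible operations that transform $\S$ into a new stack $\overline{\S} = (\ovR_1,\ldots,\ovR_{\ovh})$. First, if a new run~$R$ is pushed, then $\ovh = h+1$, $\ovR_i = R_i$ for $i \leqslant h$, and $\ovR_{\ovh} = R$. For $1 \leqslant i \leqslant h-2 = \ovh - 3$, the inequality $\ovr_i > \alpha \ovr_{i+1}$ is immediate from the induction hypothesis. For the new index $i = \ovh - 2 = h-1$, we must show $r_{h-1} > \alpha r_h$; this is precisely the guard condition that was violated (i.e., the negation of $r_{h-1} \leqslant \alpha r_h$) which caused the push rather than a merge, so the inequality holds. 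Small cases where $h \leqslant 1$ are vacuous.

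Second, if the top two runs are merged, then $\ovh = h - 1$, $\ovR_i = R_i$ for $i \leqslant h-2$, and $\ovR_{\ovh} = \ovR$ is the new run of length $r_{h-1} + r_h$. For $1 \leqslant i \leqslant h-3 = \ovh-2$, the two runs compared are both old runs, namely $R_i$ and $R_{i+1}$ (including $i = h-3$, which compares $R_{h-3}$ and $R_{h-2}$), so $\ovr_i > \alpha \ovr_{i+1}$ follows directly from the induction hypothesis applied to indices in the range $1 \leqslant i \leqslant h-2$. Again, small cases $h \leqslant 3$ are vacuous.

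Note that this argument works uniformly for merges triggered both by the main loop and by the final collapse loop of Algorithm~\ref{alg:aSS}, because the inductive step for a merge never uses the guard condition $r_{h-1} \leqslant \alpha r_h$; it only relies on the previously established invariant. The main potential obstacle, therefore, is merely careful index bookkeeping at the boundary index $i = \ovh-2$ in each case, which is handled as above. This completes the induction.
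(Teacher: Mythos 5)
Your proof is correct. The paper itself does not prove this lemma but defers entirely to a citation of Auger, Nicaud, and Pivoteau's report \cite{AuNiPi15}; you have supplied the self-contained induction that the citation would carry out. The key observations are right: the invariant never constrains the top stack element, so after a merge every inequality $\ovr_i > \alpha \ovr_{i+1}$ with $1 \leqslant i \leqslant \ovh-2$ involves only untouched runs $R_i, R_{i+1}$ with $i+1 \leqslant h-2$, and hence follows from the induction hypothesis without any appeal to the algorithm's guard; and after a push, the one newly constrained index $i = \ovh-2 = h-1$ is handled exactly by the negation of the merge guard, namely $r_{h-1} > \alpha r_h$ when $h \geqslant 2$ (with $h \leqslant 1$ vacuous). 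Your remark that the argument covers both the main-loop merges and the final collapse-loop merges uniformly is also correct and worth making explicit, since the collapse loop merges unconditionally.
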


Lemmas~\ref{lem:aSS::right} and~\ref{lem:aSS::left} focus on inequalities involving the lengths of a given run~$R$ belonging to a merge tree induced by \aSS, and of its ancestors.
In each case, the stack just before the run~$R$ is merged is denoted by~$\S = (R_1,R_2,\ldots,R_h)$.
Furthermore, like in Section~\ref{subsubsec:ShS}, we say that a run~$R$ is \emph{rightful} if~$R$ and its ancestors are all right runs, and that~$R$ is \emph{standard} otherwise.

\begin{lemma}\label{lem:aSS::right}
Let~$R$ be a run in~$\T$, and let~$k$ and~$m$ be two integers.
If~$k$ of the~$m+1$ runs~$\aR{0}, \aR{1}, \ldots, \aR{m}$ are right runs,~$\ar{m+1} \geqslant \alpha^{k-1}$.
\end{lemma}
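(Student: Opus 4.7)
The plan is to identify, within a single stack state occurring during the execution of \aSS, a configuration of $k$ left siblings arranged so that the stack invariant of Lemma~\ref{lemma:aSS:invariant} directly witnesses geometric growth of their lengths. If $k = 0$, the bound $\ar{m+1} \geqslant \alpha^{-1}$ is immediate since $\ar{m+1} \geqslant 1$ and $\alpha > 1$, so I assume $k \geqslant 1$. Let $j_1 < j_2 < \ldots < j_k$ be the indices in $\{0, 1, \ldots, m\}$ for which $\aR{j_i}$ is a right run, and for each such $i$, let $L_{j_i}$ be the left sibling of $\aR{j_i}$, so that $\aR{j_i+1}$ is formed by merging $L_{j_i}$ (on the left) with $\aR{j_i}$ (on the right).

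The crux of the proof is the following structural claim: at the moment just before $L_{j_1}$ and $\aR{j_1}$ are merged, if $h$ denotes the stack height, then $\aR{j_1}$ occupies position $h$ and the left siblings $L_{j_1}, L_{j_2}, \ldots, L_{j_k}$ occupy positions $h-1, h-2, \ldots, h-k$, respectively. I would prove this by tracking the stack across the successive right-merge moments: since $\aR{j_i}$ is a right child of $\aR{j_i+1}$, the merge forming $\aR{j_i+1}$ must happen with $\aR{j_i}$ on top and $L_{j_i}$ immediately below; moreover, all operations between the formation of $\aR{j_i+1}$ and the formation of $\aR{j_{i+1}}$ are pushes and left-child merges that only affect the top of the stack, leaving lower positions untouched. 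By induction on $i$, at the $\aR{j_1}$ moment the element at position $h - i$ is exactly $L_{j_i}$, for every $i \in \{1, \ldots, k\}$.

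Once this is established, Lemma~\ref{lemma:aSS:invariant} applied to the stack at the $\aR{j_1}$ moment yields $r_{L_{j_{i+1}}} > \alpha r_{L_{j_i}}$ for each $i \in \{1, \ldots, k-1\}$, since the relevant positions lie within $\{1, \ldots, h - 2\}$, the range covered by the invariant. Chaining these inequalities and using $r_{L_{j_1}} \geqslant 1$ gives $r_{L_{j_k}} > \alpha^{k-1}$, whence $\ar{m+1} \geqslant \ar{j_k+1} \geqslant r_{L_{j_k}} > \alpha^{k-1}$. The main obstacle is the structural claim: one must carefully argue that in the time between consecutive right merges along the ancestry chain, nothing in the lower portion of the stack is altered, regardless of whether the intervening merges happen in the main loop or in the final collapse loop of Algorithm~\ref{alg:aSS}.
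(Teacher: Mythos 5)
Your proof is correct and rests on the same central tool as the paper's, namely the stack invariant of Lemma~\ref{lemma:aSS:invariant}; both arguments ultimately chain $k-1$ applications of $r_i > \alpha r_{i+1}$ across stack runs associated with the left siblings of the right-run ancestors. The difference is technical but worth noting. The paper freezes the stack at the moment $R$ itself is merged and, for each left sibling $\aS{i}$ of the $i$\textsuperscript{th} right-run ancestor $\aR{a_i}$, merely picks \emph{some} stack run $R_{u(i)}$ lying inside $\aS{i}$; because the $\aS{i}$ occupy pairwise disjoint, left-to-right nested positions in the array, the indices $u(i)$ are strictly decreasing, hence $u(k) \leqslant u(1)-(k-1)$, and the invariant directly gives $r_{u(k)} > \alpha^{k-1}r_{u(1)} \geqslant \alpha^{k-1}$, from which $\ar{m+1} \geqslant r_{u(k)}$ finishes. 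You instead freeze the stack at the (possibly later) moment $\aR{j_1}$ is merged and prove the strictly stronger structural claim that each $L_{j_i}$ is \emph{exactly} the run at position $h-i$. This claim is true, and your sketch of why (lower stack positions are unaffected by pushes and merges above them, and the ancestor chain descends one stack slot per right-run step) is sound, but it is the heaviest part of your argument and is not needed for the bound: the paper's looser ``pick any contained stack run, observe strictly decreasing indices'' already yields consecutive-position spacing as a lower bound. So your route is a valid, somewhat more laborious refinement of the paper's argument, and you correctly flag the structural claim as the point requiring the most care.
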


\begin{proof}
Let~$a_1 < a_2 < \ldots < a_k$ be the~$k$ smallest integers such that~$\aR{a_1},\aR{a_2}, \ldots,\aR{a_k}$ are right runs.
For all~$i \leqslant k-1$, let~$\aS{i}$ be the left sibling of~$\aR{a_i}$, and let~$u(i)$ be the least integer such that~$R_{u(i)}$ descends from~$\aS{i}$.
Since~$R_{u(i)}$ descends from~$\aR{a_{i+1}}$, we know that~$u(i+1) < u(i)$, so that~$u(k) \leqslant u(1) - (k-1)$.
Thus, Lemma~\ref{lemma:aSS:invariant} proves that~$\ar{m+1} \geqslant r_{u(k)} \geqslant \alpha^{k-1} r_{u(1)} \geqslant \alpha^{k-1}$.
\end{proof}

\begin{lemma}\label{lem:aSS::left}
Let~$R$ be a run in~$\T$.
If~$R$ is a left run and if$~\aR{1}$ is a standard run,~$\ar{1} \geqslant (1+1/\alpha)r$.
\end{lemma}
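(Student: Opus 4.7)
The plan is to split into cases according to where in \aSS the merge that produces $\aR{1}$ is performed. Let $R'$ be the right sibling of $R$ in $\T$, so that $R$ and $R'$ are the two children of $\aR{1}$ and $\ar{1} = r + r'$. In \aSS, merges happen in exactly two places: inside the main loop, where they are guarded by the test $r_{h-1} \leqslant \alpha r_h$, and inside the final collapse loop.

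Case one: the merge of $R$ with $R'$ is triggered inside the main loop. Then, at the moment of the merge, $R$ sits at position $h-1$ and $R'$ at position $h$ on the stack, so the guard gives $r = r_{h-1} \leqslant \alpha r_h = \alpha r'$. Rearranging, $r' \geqslant r/\alpha$, and therefore $\ar{1} = r + r' \geqslant (1 + 1/\alpha) r$, as required.

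Case two: the merge is performed in the collapse loop. I would show that this case is incompatible with the hypothesis that $\aR{1}$ is standard, so it cannot actually arise. Once the collapse loop is entered, every iteration consumes the two topmost runs $R_{h-1}$ and $R_h$ and produces a new top, which then plays the role of the right child in the next collapse merge (until only the root remains). Consequently, every run created inside the collapse loop lies on the rightmost branch of $\T$. In particular, $\aR{1}$ and all its ancestors are either right runs or the root, so $\aR{1}$ is rightful, contradicting the hypothesis.

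The only step that requires any care is the structural observation used to rule out case two: a small bookkeeping argument on how the stack evolves inside the collapse loop, together with a sensible treatment of the root (which is neither left nor right) as still belonging to the rightmost branch. Everything else reduces to directly reading off the guard of the main-loop merge, so I do not expect any serious obstacle.
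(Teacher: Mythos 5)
Your proof is correct and takes essentially the same approach as the paper: since $\aR{1}$ is standard, the merge producing it must be a main-loop merge, so $R = R_{h-1}$, the guard $r_{h-1} \leqslant \alpha r_h$ applies, and the bound $\ar{1} \geqslant (1+1/\alpha)r$ follows. The paper's proof is terser and leaves the collapse-loop exclusion implicit in the phrase ``standard left run,'' whereas you spell out the bookkeeping argument showing that collapse merges produce only rightful runs; this is the intended justification.
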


\begin{proof}
The left child of~$\aR{1}$ coincides with~$R = R_{h-1}$ and, since~$\aR{1}$ is a standard run,~$r_{h-1} \leqslant \alpha r_h$.
It follows immediately that~$\ar{1} = r_{h-1} + r_h \geqslant (1+1/\alpha) r$.
\end{proof}

\begin{proposition}\label{pro:middle-growth-aSS}
The algorithm \aSS has the middle-growth property.
\end{proposition}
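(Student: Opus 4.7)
The plan is to lower-bound $r$ by an exponential in $h$ by combining Lemmas~\ref{lem:aSS::right} and~\ref{lem:aSS::left}. Given a run $R$ of height $h$ in a merge tree $\T$ induced by \aSS, I would first pick a leaf $L$ in the subtree rooted at $R$ at depth exactly $h$ (such a leaf exists because $R$ has height $h$), and then analyse the ancestor chain $L = L^{(0)}, L^{(1)}, \ldots, L^{(h)} = R$, letting $N_R$ denote the number of right runs among $L^{(0)}, \ldots, L^{(h-1)}$ and $N_L = h - N_R$ the corresponding count of left runs.

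The first ingredient is Lemma~\ref{lem:aSS::right} applied to the leaf~$L$ with $m = h-1$, which yields the absolute bound $r = L^{(h)} \geqslant \alpha^{N_R - 1}$.

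The second ingredient is a chained use of Lemma~\ref{lem:aSS::left}: at every index~$i$ such that $L^{(i)}$ is a left run \emph{and} its parent $L^{(i+1)}$ is standard, we gain a multiplicative factor of $1 + 1/\alpha$; at every other index we use the trivial inequality $L^{(i+1)} \geqslant L^{(i)}$. The key bookkeeping step is to observe that at most one left run on the path can have a rightful parent: indeed, if $L^{(i+1)}$ is rightful, then $L^{(i+1)}$ lies on the rightmost branch of $\T$, which forces all of $L^{(i+1)}, L^{(i+2)}, \ldots, L^{(h-1)}$ to be right runs and $R = L^{(h)}$ to be rightful as well, so $L^{(i)}$ must be the deepest-indexed left run on the path. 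Chaining the multiplicative factors therefore yields $r \geqslant (1 + 1/\alpha)^{N_L - 1}$.

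To conclude, I would set $\gamma = \min\{\alpha, 1 + 1/\alpha\} > 1$ and apply $\max(a,b) \geqslant \sqrt{ab}$ to the two lower bounds, obtaining $r \geqslant \gamma^{(h-2)/2}$. Choosing $\beta = \gamma^{1/4}$ then gives $r \geqslant \beta^h$ for every $h \geqslant 4$, and the finitely many smaller values of $h$ can be handled via the trivial observation that a binary subtree of height~$h$ contains at least $h + 1$ leaves (so $r \geqslant h + 1$), possibly after shrinking $\beta$. The part I expect to be most delicate is the bookkeeping about rightful parents, which relies on the case analysis of the rightmost-branch structure of $\T$ sketched above; combining the two exponential bounds at the end is then routine.
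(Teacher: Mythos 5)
Your proof is correct and follows essentially the same strategy as the paper: bound the length in terms of the number of right runs on a leaf-to-$R$ path via Lemma~\ref{lem:aSS::right}, chain Lemma~\ref{lem:aSS::left} through left runs with standard parents (with the rightmost-branch observation justifying that at most one left run is excluded), and combine to obtain $\beta = \min\{\alpha, 1+1/\alpha\}^{1/4}$. The only cosmetic difference is that the paper splits into two cases according to whether at least $\lceil h/2\rceil$ of the runs on the path are right or left and uses only that majority, whereas you combine both exponential bounds at once via $\max\{a,b\}\geqslant\sqrt{ab}$; both yield the same $\gamma^{(h-2)/2}$ lower bound and the same treatment of small~$h$.
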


\begin{proof}
Let~$\T$ be a merge tree induced by \ShS, let~$R$ be a run in~$\T$ at depth at least~$h$, and let~$\beta = \min\{\alpha,1+1/\alpha\}^{1/4}$.
We will prove that~$\ar{h} \geqslant \beta^h$.

Indeed, let~$k = \lceil h/2 \rceil$.
We distinguish two cases, which are not mutually exclusive if~$h$ is even.
\begin{itemize}
\item If at least~$k$ of the~$h$ runs~$R, \aR{1},\aR{2}, \ldots, \aR{h-1}$ are right runs, Lemma~\ref{lem:aSS::right} shows that
\[\ar{h} \geqslant \alpha^{k-1} \geqslant \beta^{4(k-1)}.\]
\item If at least~$k$ of the~$h$ runs~$R, \aR{1},\aR{2}, \ldots, \aR{h-1}$ are left runs, let~$a_1 < a_2 < \ldots < a_k$ be the~$k$ smallest integers such that~$\aR{a_1},\aR{a_2}, \ldots, \aR{a_k}$ are left runs.
When~$j < k$, the run~$\aR{a_j+1}$ is standard, since it descends from the left run~$\aR{a_k}$.
Therefore, due to Lemma~\ref{lem:aSS::left}, an immediate induction on~$j$ shows that~$\ar{a_j+1} \geqslant (1+1/\alpha)^j$ for all~$j \leqslant k-1$.
It follows that
\[\ar{h} \geqslant \ar{a_{k-1}+1} \geqslant
(1+1/\alpha)^{k-1} \geqslant \beta^{4(k-1)}.\]
\end{itemize}

Consequently,~$\ar{h} \geqslant \beta^{4(k-1)} \geqslant \beta^{2h-4} \geqslant \beta^h$ when~$h \geqslant 4$, whereas~$\ar{h} \geqslant 1 = \beta^h$ when~$h = 0$ and~$\ar{h} \geqslant 2 \geqslant 1+1/\alpha \geqslant \beta^h$ when~$1 \leqslant h \leqslant 3$.
\end{proof}

\section{\TS's update policy}
\label{sec:TS-update}

In this Section, we study \TS's merging routine as it was implemented in~\cite{Java,Python,Octave,V8}.
This routine aims at merging consecutive non-decreasing runs~$A$ and~$B$.
Although similar to the description we gave in Section~\ref{sec:description}, this implementation does not explicitly rely on~$\bt$-galloping, but only on~$0$-galloping, and it includes complicated side-cases or puzzling low-level design choices that we neglected on purpose when proposing a simpler notion of~$\bt$-galloping.

For instance, one may be surprised to observe that the number of consecutive elements coming from~$A$ (or~$B$) needed to trigger the galloping mode depends on whether elements from a previous dual run were discovered via galloping or naïvely.
Another peculiar decision is that, when the galloping mode is triggered by a long streak of elements coming \emph{from~$B$}, galloping is first used to count subsequent elements coming \emph{from~$A$}; fortunately, it is then also used to count elements coming from~$B$, which approximately amounts to using~$1$-galloping instead of~$0$-galloping.
Finally, instances of the galloping mode are launched in pairs, and the algorithm infers that galloping was efficient as soon as one of these two instances was efficient; this makes the update rule easy to fool.

\begin{algorithm}[p]
\begin{small}
\SetArgSty{texttt}
\SetKwProg{Pn}{Function}{:}{}
\SetKwFunction{LRMerge}{\textsf{LRMerge}}
\SetKwFunction{RLMerge}{\textsf{RLMerge}}
\SetKwFunction{LRMergeStep}{\textsf{MergeStep}$(a,b)$}
\SetKwFunction{LRNaive}{\textsf{Naïve}$(a,b,t_1,t_2)$}
\SetKwFunction{LRGallop}{\textsf{Gallop}$(a,b)$}
\DontPrintSemicolon

\Input{Non-decreasing runs~$A$ and~$B$ to merge, of lengths~$a$ and~$b$}
\Result{The runs~$A$ and~$B$ are merged into a single run~$C$.}
\Note{Let~$S_1,S_2,\ldots,S_\sigma$ be the non-decreasing dual runs of the sub-array spanned by~$A$ and~$B$.
The run~$A$ (resp.,~$B$) contains~$a_i$ (resp.,~$b_i$) elements whose value belongs to~$S_i$.\justifying}

\justifying\BlankLine
use~$0$-galloping to discover~$a_1$ and~$b_\sigma$\label{alg-6:line-1}\;
\textsf{Status}~$\gets$ \textsf{naïve}\;
\lIf{$a - a_1 \leqslant b - b_\sigma$}
 {\textsf{LRMerge}
 \tcp*[f]{$C$ is discovered from left to right\label{alg:TS:LR}}}
\lElse{\muteelse{$a - a_1 \leqslant b - b_\sigma$}%
 \textsf{RLMerge}
\tcp*[f]{$C$ is discovered from right to left\label{alg:TS:RL}}}

\drawline
\setcounter{AlgoLine}{4}
\Pn{\LRMerge}{
\textsf{Naïve}$(0,b_1,\mathbf{t},\mathbf{t}+1)$\;
\lFor{$i=2,3,\ldots,\sigma-2$}{\textsf{MergeStep}$(a_i,b_i)$}
\lIf{\textrm{$\sigma \geqslant 3$ and~$a_\sigma \geqslant 2$}}{\textsf{MergeStep}$(a_{\sigma-1},b_{\sigma-1})$\label{alg-6:line-8}}
\lElseIf{$\sigma \geqslant 3$}{\muteelseif{$\sigma \geqslant 3$ and~$a_\sigma \geqslant 2$}{$\sigma \geqslant 3$}%
 \textsf{MergeStep}$(a_{\sigma-1},0)$}}

\drawline
\setcounter{AlgoLine}{9}
\Pn{\RLMerge}{
\textsf{Naïve}$(a_\sigma,b_{\sigma-1},\mathbf{t}+1,\mathbf{t}+1)$\;
\lFor{$i=2,3,\ldots,\sigma-2$}{\textsf{MergeStep}$(a_{\sigma+1-i},b_{\sigma-i})$}
\lIf{\textrm{$\sigma \geqslant 3$ and~$a_2 \geqslant 2$}}{\textsf{MergeStep}$(a_2,0)$\label{alg-6:line-14}}}

\def\myCondition{\textrm{\textsf{Status}~$=$ \textsf{fail}}}%
\drawline
\setcounter{AlgoLine}{13}
\Pn{\LRMergeStep}{
 \lIf{\textsf{Status}~$=$ \textsf{fail}}{\muteif{\textsf{Status}~$=$ \textsf{fail}}{\textsf{Status}~$=$ \textsf{success}}%
$\mathbf{t} \gets \max\{2,\mathbf{t}+1\}$}
 \lElseIf{\textsf{Status}~$=$ \textsf{success}}{$\mathbf{t} \gets \mathbf{t}-1$}
 \lIf{\textsf{Status}~$=$ \textsf{naïve}}{\muteif{\textsf{Status}~$=$ \textsf{naïve}}{\textsf{Status}~$=$ \textsf{start}}%
 \textsf{Naïve}$(a,b,\mathbf{t},\mathbf{t})$}
 \lElseIf{\textsf{Status}~$=$ \textsf{fail}}{\muteself{\textsf{fail}}{\textsf{start}}\textsf{Naïve}$(a,b,\mathbf{t}+1,\mathbf{t})$}
 \lElseIf{\textsf{Status}~$=$ \textsf{start}}{\textsf{Gallop}$(a+1,b)$}
 \lElse{\elseifmute{\textsf{Status}~$=$ \textsf{start}~}%
  \textsf{Gallop}$(a,b)$}}

\def\myCondition{\textrm{$a \geqslant t_1+7$ or ($a \geqslant t_1$ and~$b \geqslant 8$) or~$b \geqslant t_2+8$}}%
\drawline
\setcounter{AlgoLine}{20}
\Pn{\LRNaive}{
use~$t_1$-galloping to discover~$a$\;
\lIf{$a \geqslant t_1$}{use~$0$-galloping to discover~$b$}
\lElse{\muteelse{$a \geqslant t_1$}%
 use~$t_2$-galloping to discover~$b$\label{alg-6:line-24}}
\lIf{\myCondition}{\textsf{Status}~$\gets$ \textsf{success}}
\lElseIf{\textrm{$a \geqslant t_1$ or~$b \geqslant t_2+1$}}{\muteelseif{\myCondition}{$a \geqslant t_1$ or~$b \geqslant t_2+1$}%
\textsf{Status}~$\gets$ \textsf{fail}}
\lElseIf{$b = t_2$}{\muteelseif{\myCondition}{$b = t_2$}%
\textsf{Status}~$\gets$ \textsf{start}\;
\muteelseif{\myCondition\textbf{else if : }}{}%
cancel the last comparison}
\lElse{\muteelse{\myCondition}%
 \textsf{Status}~$\gets$ \textsf{naïve}}}

\drawline
\setcounter{AlgoLine}{29}
\Pn{\LRGallop}{
use~$0$-galloping to discover~$a$ and~$b$\;
\lIf{\textrm{$a \geqslant 8$ or~$b \geqslant 8$}}{\textsf{Status}~$\gets$ \textsf{success}}
\lElse{\muteelse{$a \geqslant 8$ or~$b \geqslant 8$}%
 \textsf{Status}~$\gets$ \textsf{fail}}}
\end{small}
\bigskip
\caption{\TS's merging routine\label{alg:TS-merge}.
This routine is best modelled by using a variable \textsf{Status} that may take four values: \textsf{naïve} if launching a~$\mathbf{t}$-gallop (i.e., starting naïvely) is requested; \textsf{success} if the last gallop was successful (i.e., saved comparisons); \textsf{fail} if the last gallop was unsuccessful (i.e., wasted comparisons); \textsf{start} if a~$0$-gallop is to be launched.
A few side cases, having little influence on the number of comparisons but not on the dynamics of~$\mathbf{t}$, are overlooked here and discussed in Section~\ref{sec:TS-routine-caveat}.}
\end{algorithm}

Algorithm~\ref{alg:TS-merge} consists in a pseudo-code transcription of this merging routine, where we focus only on the number of comparisons performed.
Thus, we disregard both the order in which comparisons are performed and element moves.
For the sake of simplicity,\footnote{Counting precisely comparisons performed by \TS's merging routine required 166 Java code lines~\cite{JuCode23}.} we also deliberately simplified the behaviour of \TS's routine, by overlooking some side cases, listed in Section~\ref{sec:TS-routine-caveat} below.

Algorithm~\ref{alg:TS-merge} explicitly relies on the values of the (global) parameter~$\bt$ and on a (global) \emph{status} variable~$\mathsf{S}$, implicit in the implementations of \TS~\cite{Java,Python,Octave,V8}.
In practice, the parameter~$\mathbf{t}$ is often initially set to~$\bt = 7$, although other initial values might be chosen; it is subsequently updated by each instance of the merging routine, and is \emph{not} reset to~$\bt = 7$ between two such instances.
In what follows, we just note~$\btinit$ the initial value of~$\bt$.

The two main reasons behind introducing this merging routine are as follows:
(i)~without incurring a super-linear overhead to do so, one wants to (ii)~merge arrays with a bounded number of values in linear time.
Below, we provide counter-examples to both these assertions.

\subsection{Caveat}
\label{sec:TS-routine-caveat}

Providing counter-examples to the goals~(i) and~(ii) requires paying attention to some low-level details we might wish to skip.
The first one is that, as indicated in Section~\ref{sec:intro}, when decomposing an array into runs, \TS makes sure that these runs are \emph{long enough}, and extends them if they are too short.
In practice, given a size~$\ms$ that is considered large enough in a given implementation, both our counter-examples consist only of runs of length at least~$\ms$.
Such a size may vary between two implementations (from~$32$ in Java to~$64$ in Python), which is why we kept it as a parameter.

The second one is that Algorithm~\ref{alg:TS-merge} represents a \emph{simplified} version of \TS's actual merging routine.
Here are the differences between Algorithm~\ref{alg:TS-merge} and that routine:
\begin{enumerate}
\item The first element of~$A$ in~$S_2$ and the last element of~$B$ in~$S_{\sigma-1}$ have already been identified in line~\ref{alg-6:line-1}.
Thus, \TS's routine tries to save one comparison per merge by not rediscovering these elements naïvely during the last calls to \textsf{MergeStep} in lines~\ref{alg-6:line-8} or~\ref{alg-6:line-14}.
Nothing is changed if these elements are discovered through galloping. 
\item In line~\ref{alg-6:line-24}, \TS's routine actually uses~$1$-galloping instead of~$0$-galloping, because it loses one step trying to discover a streak of~$0$ elements from~$A$.
\item As mentioned in Section~\ref{sec:description}, \TS's galloping routine may actually go faster than forecast when discovering a streak of~$x$ elements from a run in which less than~$2x$ elements remain to be discovered.
\end{enumerate}

In practice, the first difference is harmless for our study: it concerns only a constant number of comparisons per run merge, i.e.,~$\mathcal{O}(n)$ comparisons in total when sorting an array of length~$n$.
The second difference is also harmless, because it just makes \TS's routine \emph{more expensive} than what we will claim below.
The third difference is tackled in two different ways:
in Section~\ref{sec:TS-routine-overhead}, we simply avoid falling in that case, and in Section~\ref{sec:TS-routine-n-log-n}, we will brutally take~$0$ as an under-estimation of the merge cost of a galloping phase --- an under-estimation that remains valid regardless of how the galloping is carried.

Finally, and although we are referring to \emph{\TS}'s merging routine, we would also like to prove that this routine suffers the same shortcomings with many relevant algorithms.
Thus, we made sure that all algorithms would induce the same merge tree on our counterexample, thereby having the same behaviour.

\subsection{Super-linear overhead}
\label{sec:TS-routine-overhead}

One desirable safeguard, when using \TS's routine instead of a naïve merging routine, is that the overhead incurred by this change should be small.
In other words, the user should never be heavily penalised for using \TS's routine.

More precisely, when using an algorithm~$\A$ to sort an array~$A$ of length~$n$, let~$\mathsf{U}_A$ be the number of comparisons performed by~$\A$ if it relies on a naïve merging routine, and let~$\mathsf{V}_A$ be the number of comparisons performed by~$\A$ if it uses \TS's routine.
Given that the \emph{worst-case} complexity of a reasonable sorting algorithm is~$\O(n \log(n))$, one may reasonably expect inequalities of the form~$\mathsf{V}_A \leqslant \mathsf{U}_A + o(n \log(n))$.
Below, we prove that such inequalities are invalid, meaning that using \TS's routine may cost a positive fraction of the total merge cost of the algorithm, even in cases where this merge cost is already quite bad.

\begin{proposition}
\label{pro:super-linear-overhead}
Let~$\A$ be one of the algorithms studied in Section~\ref{sec:pos-fast-growth}.
Let~$\mathsf{U}_A$ be the number of comparisons performed by~$\A$ to sort an array~$A$ when using a naïve merging routine, and let~$\mathsf{V}_A$ be the number of comparisons performed by~$\A$ when using \TS's routine.
There exists an array~$A$, whose length~$n$ may be arbitrarily large, for which~$\mathsf{V}_A - \mathsf{U}_A \in \Omega(n \log(n))$.
\end{proposition}

\begin{proof}
What dictates the dynamics of Algorithm~\ref{alg:TS-merge}, when merging runs~$A$ and~$B$ into a new run~$C$, is not the precise values of those elements of~$A$ and~$B$, but only the lengths of the  consecutive streaks of elements of~$C$ coming either from~$A$ or from~$B$.
Thus, we only need to control these lengths.

Based on this remark, we use two building bricks.
The first brick, represented in Figure~\ref{fig:TS-superlinear}~(left), aims at changing~$\bt$, initially equal to~$\btinit$, to let it reach the value~$\bt = 5$;
it is used once, as the first merge performed in \TS.
The second brick, represented in Figure~\ref{fig:TS-superlinear}~(right), aims at maximising the number of comparisons performed, without changing the value of~$\bt$: if the merge started with~$\bt = 5$, it shall end with~$\bt = 5$.

Our first brick is built by merging two runs~$R$ and~$R'$ with a given length~$\ell \geqslant 8\btinit+18$, as indicated in Figure~\ref{fig:TS-superlinear}~(left).
When~$R$ and~$R'$ are being merged, both lengths~$r_1$ and~$r'_\sigma$ are zero, and the function \textsf{LRMerge} is called.
First encountering~$\btinit+8$ consecutive elements from~$R$ triggers a call to \textsf{Gallop}; this function is successfully called~$\btinit$ times, which brings down~$\bt$ to~$0$.
Then, the nine following elements from~$R$ (and nine elements from~$R'$) result in unsuccessfully calling \textsf{Gallop} several times, which gradually raises up~$\bt$ to~$5$.
The padding area has no influence on~$\bt$;
its only purpose is to ensure that~$R$ and~$R'$ have the desired length~$\ell$.

Our second brick is built by merging two runs~$S$ and~$S'$ with a given length~$11m$, as indicated in Figure~\ref{fig:TS-superlinear}~(right).
Provided that the parameter~$\bt$ has been set to~$5$, encountering~$8$ consecutive elements coming from~$S$ triggers a call to \textsf{Gallop}.
This call is successful, because it helps us to discover~$8$ consecutive elements from~$S'$, and it is immediately followed by an unsuccessful call to \textsf{Gallop}, which only helps us to discover streaks of length~$3$.
After these two calls,~$\bt$ is still equal to~$5$, which allows us to repeat such sequences of calls.
Hence, Algorithm~\ref{alg:TS-merge} uses~$23$ comparisons to discover each block~$8+3+8+3$ elements --- one more comparison than needed.
Thus, Algorithm~\ref{alg:TS-merge} uses a total of~$23m-3$ element comparisons to merge two runs~$S$ and~$S'$ of length~$11m$, which is more than the~$22m-1$ comparisons that a naïve routine would use.

\begin{figure}[t]
\begin{center}
\begin{tikzpicture}[xscale=0.345,yscale=0.32]
\foreach \x/\w/\a/\b in {0/2/0/1,2/8/\btinit+8/8,10/2/7/8,%
12/2/7/8,18/2/7/8,20/2/1/1,22/2/2/2,24/2/4/4,26/2/2/2,%
28/2/1/1,30/2/1/1,36/2/1/1,38/2/1/0}{
\draw[thick] (0.5*\x+0.1,4.4) --++ (0.5*\w-0.2,0) (0.5*\x+0.1,2.3) --++ (0.5*\w-0.2,0);
\draw[thick,->,>=stealth] (0.5*\x+0.25*\w,4.4) --++ (0,-0.3) --++ (-0.125*\w,0) --++ (0,-0.75);
\draw[thick,->,>=stealth] (0.5*\x+0.25*\w,2.3) --++ (0,0.3) --++ (0.125*\w,0) --++ (0,0.75);
 \node at (0.5*\x+0.25*\w,5.7) {$\a$};
 \node at (0.5*\x+0.25*\w,1) {$\b$};
 \draw (0.5*\x,0) --++ (0,2) ++ (0,2.7) --++ (0,2);
}
\foreach \x in {7,16}{
 \foreach \y in {1,3.35,5.7} {\node at (\x+1.1,\y) {$\cdots$};}
 \draw (\x,0) --++ (0,2) ++ (0,2.7) --++ (0,2);
}

\draw[very thick,<->,>=stealth] (5,7.4) --++ (5,0);
\draw[very thick,<->,>=stealth] (14,7.4) --++ (5,0);
\node[anchor=south] at (7.5,7.4) {$\btinit$ values};
\node[anchor=south] at (16.5,7.4) {\emph{paddin\smash{g}}};

\draw[very thick] (0,4.7) --++ (20,0) --++ (0,2) --++ (-20,0) -- cycle;
\draw[very thick] (0,0) --++ (20,0) --++ (0,2) --++ (-20,0) -- cycle;
\draw[very thick] (0,3.35) --++ (7,0) ++ (2,0) --++ (7,0) ++ (2,0) --++ (2,0);

\node[anchor=east] at (-0.1,5.7) {Run \rlap{$R$}\phantom{$R'$}:};
\node[anchor=east] at (-0.1,3.35) {Run \rlap{$\overline{R}$}\phantom{$R'$}:};
\node[anchor=east] at (-0.1,1) {Run~$R'$:};

\begin{scope}[shift={(28,0)}]
\foreach \x/\w/\a/\b in {0/2/0/1,2/2/1/1,4/2/8/8,6/2/3/3,%
8/2/8/8,10/2/3/3,16/2/8/8,18/2/3/3,20/2/3/3,22/2/3/3,%
24/2/2/3,26/2/2/0}{
\draw[thick] (0.5*\x+0.1,4.4) --++ (0.5*\w-0.2,0) (0.5*\x+0.1,2.3) --++ (0.5*\w-0.2,0);
\draw[thick,->,>=stealth] (0.5*\x+0.25*\w,4.4) --++ (0,-0.3) --++ (-0.125*\w,0) --++ (0,-0.75);
\draw[thick,->,>=stealth] (0.5*\x+0.25*\w,2.3) --++ (0,0.3) --++ (0.125*\w,0) --++ (0,0.75);
 \node at (0.5*\x+0.25*\w,5.7) {$\a$};
 \node at (0.5*\x+0.25*\w,1) {$\b$};
 \draw (0.5*\x,0) --++ (0,2) ++ (0,2.7) --++ (0,2);
}
\foreach \x in {6}{
 \foreach \y in {1,3.35,5.7} {\node at (\x+1.1,\y) {$\cdots$};}
 \draw (\x,0) --++ (0,2) ++ (0,2.7) --++ (0,2);
}

\draw[very thick,<->,>=stealth] (2,7.4) --++ (8,0);

\draw[very thick,<->,>=stealth] (2,-0.7) --++ (2,0);
\draw[very thick,<->,>=stealth] (8,-0.7) --++ (2,0);
\node[anchor=south] at (6,7.4) {$m-1$ blocks};
\node[anchor=north] at (3,-0.7) {block};
\node[anchor=north] at (9,-0.7) {block};

\draw[very thick] (0,4.7) --++ (14,0) --++ (0,2) --++ (-14,0) -- cycle;
\draw[very thick] (0,0) --++ (14,0) --++ (0,2) --++ (-14,0) -- cycle;
\draw[very thick] (0,3.35) --++ (6,0) ++ (2,0) --++ (6,0);

\node[anchor=east] at (-0.1,5.7) {Run \rlap{$S$}\phantom{$S'$}:};
\node[anchor=east] at (-0.1,3.35) {Run \rlap{$\overline{S}$}\phantom{$S'$}:};
\node[anchor=east] at (-0.1,1) {Run~$S'$:};
\end{scope}
\end{tikzpicture}
\vspace{-1.7em}
\end{center}
\caption{Base bricks of our construction.
The diagrams should be read as follows. \newline
The run~$\overline{R}$ consists in~$1$ element from~$R'$, then~$\btinit+8$ elements from~$R$,~$8$ elements from~$R'$,
$7$~elements from~$R$,~$8$ elements from~$R'$,~$7$ elements from~$R$, \ldots, and finally~$1$ element from~$R$.
The runs~$R$ and~$R'$ both have a length~$\ell \geqslant 8\btinit+19$, and include a padding area of length~$\ell-(8\btinit+18)$.\newline
Similarly,~$\overline{S}$ consists in~$1$ element from~$S'$, then~$1$ element from~$S$,~$1$ element from~$S'$,~$8$ elements from~$S$,~$8$ elements from~$S'$,~$3$ elements from~$S$, \ldots, and finally~$2$ elements from~$S$.
Both runs~$S$ and~$S'$ have length~$11m$.
\label{fig:TS-superlinear}}
\end{figure}

Finally, our array~$A$ is built as follows.
First, we choose a base length~$\ell \geqslant \max\{8\btinit+19,\ms\}$ divisible by~$11$, and we set~$k = \lfloor \log_2(\ell) \rfloor$.
The array will be a permutation of~$\{1,2,\ldots,2^k \ell\}$, initially subdivided in~$2^k$ runs of length~$\ell$.
This ensures that the merge tree induced on~$A$ by each of the algorithms mentioned in Section~\ref{sec:description} is perfectly balanced: each run at height~$h$ in the tree has length~$2^h \ell$, and~$A$ has length~$n = 2^k \ell$.

Once this merge tree is fixed, we give a value between~$1$ and~$2^h\ell$ to each element of~$A$, by using the following top-down procedure.
When a run~$\overline{S}$ of length~$2^{h+1} \ell$ results from merging two runs~$S$ and~$S'$ of length~$2^h \ell$, we assign values of~$\overline{S}$ to either~$S$ or~$S'$ according to our second base brick:
the first value of~$\overline{S}$ comes from~$S'$, the second one from~$S$, the third one from~$S'$, the next~$8$ values come from~$S$, and so on.
The only exception to this rule is the first merge:
here, we assign according to our first base brick the values of the run~$\overline{R}$ to the two runs~$R$ and~$R'$ it results from.

Doing so ensures that using \TS's merging routine will perform at least~$0$ comparison for its first merge, and~$23 \times 2^h \ell / 11 - 3$ comparisons for each subsequent merge between runs at height~$h$.
There are~$2^{k-1}-1$ such merges for~$h = 0$, and~$2^{k-h-1}$ such merges for each height~$h$ such that~$1 \leqslant h \leqslant k-1$.
Hence, a total of
\begin{align*}
\mathsf{V}_A
& \geqslant (2^{k-1}-1)\left(\frac{23 \ell}{11}-3\right) + \sum_{h=1}^{k-1} 2^{k-h-1}\left(\frac{23 \times 2^h \ell}{11} - 3\right) \\
&= \frac{23 k n}{22} - \frac{23 \ell}{11} - 3 \times 2^k + 6
\end{align*}
comparisons are performed.
By contrast, using a naïve merging strategy would lead to performing only~$\mathsf{U}_A = k n - 2^k + 1$ comparisons.

Finally, the inequalities~$\ell^2 \geqslant 2^k \ell = n \geqslant 2^{2k} \geqslant 4 \ell^2$ prove that both~$\ell$ and~$2^k$ are~$\Theta(\sqrt{n})$, whereas~$k \sim \log_2(n)/2$.
Hence,~$\mathsf{U}_A \sim n \log_2(n) / 2$ and~$\mathsf{V}_A \geqslant 23 n \log_2(n) / 44 + o(n \log(n))$.
\end{proof}

\subsection{Sorting arrays with three values in super-linear time}
\label{sec:TS-routine-n-log-n}

\TS's merging routine was invented precisely with the goal of decreasing the number of comparisons performed when sorting arrays with few values.
In particular, sorting arrays of length~$n$ with a constant number of values should require only~$\mathcal{O}(n)$ comparisons.
Like in Section~\ref{sec:TS-routine-overhead}, we prove below that this is not the case.

\begin{proposition}
\label{pro:super-linear-sigma=5}
Let~$\A$ be one of the algorithms studied in Section~\ref{sec:pos-fast-growth}.
If~$\A$ uses \TS's routine for merging runs, it may require up to~$\Omega(n \log(n))$ element comparisons to sort arrays of length~$n$ with~$\sigma = 3$ distinct values.
\end{proposition}

\begin{proof}
Like for Proposition~\ref{pro:super-linear-overhead}, we explicitly build the array~$A$ by using building blocks that we assemble according to the description of Figure~\ref{fig:TS-worst-case}.
Below, we fix an arbitrarily large parameter~$p$, and we set~$\sL_p = (4\ms+1) (2^p + p-1+\btinit)+12$.
We also set~$\sR(x) = 2^{\lfloor \log_2(x) \rfloor}$ for all~$x \geqslant 1$, and~$t_k = k+s_2(k) + \btinit$ for all~$k \geqslant 1$, where~$s_2(k)$ is the sum of the binary digits of~$k$.

\begin{figure}[b!]
\begin{center}
\begin{tikzpicture}[scale=0.32]
\begin{scope}[shift={(3,0)}]
\foreach \i in {1,...,4}{
 \draw(5.5*\i,0) --++ (0,2);
}
\foreach \i in {0,...,2}{
 \node at (5.5*\i+5.5*0.5,1) {$\bX(\i)$};
}
\node at (5.5*3+5.5*0.5+0.1,1) {$\cdots$};
\node at (5.5*4+5.5*0.5,1) {$\bX(2^p-1)$};

\draw[very thick] (0,0) rectangle (27.5,2);
\node[anchor=north] at (13.75,-0.1) {General array structure};

\draw[very thick,<->,>=stealth] (0,2.7) --++ (27.5,0);
\node[anchor=south] at (13.75,2.7) {$2^p$ blocks};
\end{scope}

\begin{scope}[shift={(0,-9)}]
\foreach \i in {2,4,6,8,10,19,25,34,43}{
 \draw(\i,0) --++ (0,2);
}
\foreach \x/\l in {1/\bU,3/\bU,5/\bU,7.1/\cdots,9/\bU,%
14.5/00\cdots0,22/11\cdots1,29.5/22\cdots2,%
38.5/00\cdots0,46/11\cdots1}{
 \node at (\x,1){$\l$};
}

\foreach \y/\list in {-0.7/{0/10,10/24,34/15},%
2.7/{0/10,10/9,19/6,25/9,34/9,43/6},%
4.5/{0/34,34/15}}{
 \foreach \x/\l in \list{
  \draw[very thick,<->,>=stealth] (\x,\y) --++ (\l,0);
}}
\node[anchor=south] at (5,2.7) {$\sR\smash{(t_k)}$ runs};
\foreach \x in {22,46}{
 \node[anchor=south] at (\x,2.7) {\emph{\smash{p}addin\smash{g}}};
}
\foreach \x/\l in {14.5/6,29.5/12,38.5/6}{
 \node[anchor=south] at (\x,2.7) {$t\smash{_k}\!+\!\l$ elements};
}
\foreach \x in {17,42}{
 \node[anchor=south] at (\x,4.6) {$\sL\smash{_p}$ elements};
}

\draw[very thick] (0,0) rectangle (49,2);
\node[anchor=north] at (24.5,-2.7) {Block~$\bX(k)$};

\foreach [count=\k] \x in {5,22,41.5}{
 \node[anchor=north] at (\x,-0.7) {Part~$\bX_\k(k)$};
}
\end{scope}

\begin{scope}[shift={(36,0)}]
\foreach \i in {2,8}{
 \draw(\i,0) --++ (0,2);
}
\foreach \x/\l in {1/0,5/11\cdots1,9/2}{
 \node at (\x,1){$\l$};
}

\draw[very thick,<->,>=stealth] (2,2.7) --++ (6,0);
\draw[very thick,<->,>=stealth] (0,4.5) --++ (10,0);
\node[anchor=south] at (5,2.7) {\emph{\smash{p}addin\smash{g}}};
\node[anchor=south] at (5,4.6) {$\ms$ elements};

\draw[very thick] (0,0) rectangle (10,2);
\node[anchor=north] at (5,-0.1) {Run~$\bU$};
\end{scope}
\end{tikzpicture}
\vspace{-1.7em}
\end{center}
\caption{Bad array for \TS's galloping update policy, containing only values~$0$,~$1$ and~$2$.\label{fig:TS-worst-case}}
\end{figure}

The array~$A$ that we build is divided in~$2^p$ blocks~$\bX(k)$ of length~$2\sL_p$.
Each block~$\bX(k)$ is subdivided into three parts~$\bX_1(k)$,~$\bX_2(k)$ and~$\bX_3(k)$, whose lengths~$x_1(k)$,~$x_2(k)$ and~$x_3(k)$ obey the relations~$x_1(k) = \ms \, \sR(t_k)$ and~$x_1(k) + x_2(k) = x_3(k) = \sL_p$.
The part~$\bX_1(k)$ itself is subdivided into~$\sR(t_k)$ runs~$\bU$ of length~$\ms$; each of the parts~$\bX_2(k)$ and~$\bX_3(k)$ consists of just one run.

By construction,~$t_k \leqslant 2^p-1 + s_2(2^p-1) + \btinit = 2^p + (p-1) + \btinit$ whenever~$0 \leqslant k \leqslant 2^p-1$, and~$x+1 \leqslant \sR(x) \leqslant 2x$ for all integers~$x \geqslant 1$.
It follows that
\[x_2(k)-x_1(k) = \sL_p - 2 \ms \, \sR(t_k) \geqslant (4\ms+1) t_k + 12 - 4 \ms \, t_k = t_k + 12 \geqslant 0.\]
Hence, the algorithm~$\A$ proceeds in three phases, interleaved with each other:
(i)~it sorts each part~$\bX_1(k)$ by recursively merging the runs~$\bU$ of which~$\bX_1(k)$ consists, following a perfectly balanced binary merge tree with~$\sR(t_k)$ leaves --- let~$\bX_1(k)$ abusively denote the resulting run;
(ii)~it merges~$\bX_1(k)$ with~$\bX_2(k)$, and merges the resulting run (say, ~$\bX_{1+2}(k)$) with~$\bX_3(k)$;
(iii)~it merges the sorted blocks~$\bX(k)$ with each other, again following a  perfectly balanced binary merge tree with~$2^p$ leaves.
Moreover, merges are performed according to a post-order traversal of the merge tree.

For instance, the merge tree obtained when~$p = 2$ and~$\btinit = 7$ is presented in Figure~\ref{fig:TS-worst-case-tree}; its inner nodes are labelled chronologically: the node labelled~$k$ results from the~$k$\textsuperscript{th} merge that~$\A$ performs.	

\begin{figure}[h!]
\begin{center}
\begin{tikzpicture}[scale=0.4]
\draw[black!50,ultra thick,pattern=north west lines,pattern color=black!50]
(18,6.7) -- (38.2,4.86364) -- (38.2,3.7)%
--++ (-0.4,-0.4) --++ (-0.6,0) --++ (-0.4,0.4) --++ (-9.6,0)%
--++ (-0.4,-0.4) --++ (-0.6,0) --++ (-0.4,0.4) --++ (-9.6,0)%
--++ (-0.4,-0.4) --++ (-0.6,0) --++ (-0.4,0.4) --++ (-9.6,0)%
--++ (-0.4,-0.4) --++ (-0.6,0) --++ (-0.4,0.4) -- (3.8,5.5) -- (17,6.7) -- cycle;

\draw[black!50,ultra thick,fill=black!10] (-0.7,-0.7) --++ (0,1.7) --++ (1.7,1.7) --++ (1,0) --++ (1.7,-1.7) --++ (0,-1.7) -- cycle;
\foreach \x in {7,18,29}{
 \draw[black!50,ultra thick,fill=black!10] (\x-0.7,-1.7) --++ (0,2.55) --++ (3.7,1.85) --++ (1,0) --++ (3.7,-1.85) --++ (0,-2.55) -- cycle;
}

\draw[thick] (1.5,2) --++ (2,1) ++ (1,1) --++ (2,1) --++ (9,-1)
(6.5,5) -- (17.5,6) -- (28.5,5)
(26.5,4) --++ (2,1) --++ (9,-1);
\foreach \x in {7,18,29}{
 \draw[thick] (\x+1.5,1) --++ (2,1) --++ (2,-1) ++ (-2,1) --++ (4,1);
 \draw[fill=white,thick] (\x+3.5,2) circle (0.43);
}
\foreach \x/\y in {0/0,7/-1,11/-1,18/-1,22/-1,29/-1,33/-1}{
 \draw[thick] (\x,\y) --++ (0.5,1) ++ (1.5,-1) --++ (0.5,1) --++ (0.5,-1);
 \foreach \z in {0,2,3}{
  \draw[fill=white,thick] (\x+\z,\y) circle (0.43);
 }
}
\foreach \x/\y in {0/0,3/2,7/-1,11/-1,14/2,18/-1,22/-1,25/2,29/-1,33/-1,36/2}{
 \draw[thick] (\x+1,\y) --++ (-0.5,1) --++ (1,1) --++ (1,-1);
 \foreach \u/\v in {1/0,0.5/1,2.5/1,1.5/2}{
  \draw[fill=white,thick] (\x+\u,\y+\v) circle (0.43);
}}
\foreach \x/\y in {6/5,28/5,17/6}{
 \draw[fill=white,thick] (\x.5,\y) circle (0.43);
}
\foreach [count=\k] \x/\y in {0/1,2/1,1/2,3/3,4/4,%
7/0,9/0,8/1,11/0,13/0,12/1,10/2,14/3,15/4,6/5,%
18/0,20/0,19/1,22/0,24/0,23/1,21/2,25/3,26/4,%
29/0,31/0,30/1,33/0,35/0,34/1,32/2,36/3,37/4,28/5,17/6}{
 \node at (\x.5,\y) {\tiny \k};
}

\foreach \x/\y/\k in {1/0/0,10/-1/1,21/-1/2,32/-1/3}{
 \node[anchor=north] at (\x.5,\y-0.7) {\color{black!60}$\bX_1(\k)$};
}
\foreach \k in {2,3}{
 \draw[thick,->,>=stealth] (3,-1.2*\k-0.5) -- (1+1.5*\k,-1.2*\k-0.5) -- (1+1.5*\k,\k-0.55);
 \node[anchor=east] at (3.07,-1.2*\k-0.5) {$\bX_{\k}(0)$};
}
\foreach \i in {1,2,3}{
 \foreach \k in {2,3}{
  \draw[thick,->,>=stealth] (11*\i+1,-1.2*\k-1.5) -- (11*\i+1+1.5*\k,-1.2*\k-1.5) -- (11*\i+1+1.5*\k,\k-0.55);
  \node[anchor=east] at (11*\i+1.07,-1.2*\k-1.5) {$\bX_{\k}(\i)$};
}
}
\end{tikzpicture}
\vspace{-1.7em}
\end{center}
\caption{Merge tree induced by the algorithm~$\A$ on the array~$A$.
The four balanced gray sub-trees result in sorting the parts~$\bX_1(k)$;
the balanced hatched sub-tree results in sorting~$A$ itself once each block~$\bX(k)$ is sorted.
Each inner node is labelled~$k$ if it results from the~$k$\textsuperscript{th} merge that~$\A$ performs.\label{fig:TS-worst-case-tree}}
\end{figure}

Once the order in which~$\A$ performs merges is known, Algorithm~\ref{alg:TS-merge} allows us to track the dynamics of the parameter~$\bt$.
For the ease of the explanation, let~$\bX(2^h,\ell)$ denote the run obtained by merging the blocks~$\bX(2^h \ell), \bX(2^h \ell+1),\ldots, \bX(2^h (\ell+1)-1)$.
The key invariant of our construction is two-fold:
\begin{enumerate}
\item the parameter~$\bt$ is equal to the integer~$t_k = k + s_2(k) + \btinit$ just before~$\A$ starts sorting a block~$\bX(k)$, and it is equal to~$t_k+2$ just after~$\bX(k)$ has been sorted;
\item the parameter~$\bt$ is equal to the integer~$u_{h,\ell} = t_{2^h \ell} + 2^{h+1} + 2$ just before~$\A$ starts merging two runs~$\bX(2^h,\ell)$ and~$\bX(2^h,\ell+1)$, and it is equal to~$u_{h,\ell}-1$ just after these runs have been merged.
\end{enumerate}

To prove this invariant, we first show that, if~$\bt = t_k$ when~$\A$ starts sorting~$\bX(k)$, Algorithm~\ref{alg:TS-merge} will keep calling the function \textsf{LRMerge} in line~\ref{alg:TS:LR}.
Then, the only time at which~$\bt$ may be changed is just after discovering~$b_0$ by using~$\bt$-galloping:~$\bt$ decreases if~$b_0 \geqslant \bt+9$, it increases if~$\bt+8 \geqslant b_0 \geqslant \bt+2$, and does not vary if~$\bt+1 \geqslant b_0$.

\begin{figure}[t!]
\begin{center}
\begin{tikzpicture}[scale=0.32]
\foreach \x in {0,1,2}{
 \foreach \y in {2,4}{
  \draw[very thick] (17*\x,\y) --++ (7.5,0) ++ (0.5,0) --++ (7.5,0);
 }
 \draw[very thick] (17*\x,0) rectangle++ (7.5,6) (17*\x+8,0) rectangle++ (7.5,6);
}
\foreach \x/\list/\t in {3.75/{2\smash{^h},?,2\smash{^h}}/\phantom{(}2\smash{^h} \bU\phantom{(},%
11.75/{2\smash{^h},?,2\smash{^h}}/\phantom{(}2\smash{^h} \bU\phantom{(},%
20.75/{\sR\smash{(t_k)},?,\sR\smash{(t_k)}}/\bX_1(k),%
28.75/{t\smash{_k}\!+\!6,?,t\smash{_k}\!+\!12}/\bX_2(k),%
37.75/{\sR\smash{(t_k)}\!+\!t\smash{_k}\!+\!6,?,\sR\smash{(t_k)}\!+\!t\smash{_k}\!+\!12}/\bX_{1+2}(k),%
45.75/{t\smash{_k}\!+\!6,?,0}/\bX_3(k)}{
 \foreach [count=\y] \l in \list{
 \node at (\x,2*\y-1) {$\l$};
}
 \node[anchor=north] at (\x,0) {$\t$};
}
\draw[very thick,densely dashed] (16.25,-1.5) --++ (0,8);
\draw[very thick,densely dashed] (33.25,-1.5) --++ (0,8);
\end{tikzpicture}
\vspace{-1.2em}
\end{center}
\caption{Runs merged while sorting~$\bX(k)$.
Each run is represented by a~$3$-row array: the integer in the lowest (resp., highest) cell is the number of elements with value~$0$ (resp.,~$2$) of the run.
Elements with value~$1$ serve as padding runs: counting them exactly is irrelevant.
From left to right, we see the merges between (i)~two runs that each result from merging~$2^h$ runs~$\bU$; (ii)~the sorted part~$\bX_1(k)$ with~$\bX_2(k)$; (iii)~$\bX_{1+2}(k)$ and~$\bX_3(k)$.
\label{fig:TS-worst-case-details}}
\end{figure}

Now, let us consider each merge performed to sort~$\bX(k)$;
the composition of the intermediate runs obtained while doing so is presented detail in Figure~\ref{fig:TS-worst-case-details}.
First, while merging sorted clusters of~$2^h$ contiguous runs~$\bU$, we have~$a - a_2 = b - b_2 = (\ms-1) 2^h$ and~$b_0 = 2^h \leqslant \sR(t_k)/2 \leqslant t_k = \bt$.
Then, when merging~$\bX_1(k)$ with~$\bX_2(k)$, we have~$a - a_0 \leqslant a = x_1(k) \leqslant x_2(k) - (t_k+12) = b - b_2$ and~$\bt+8 \geqslant b_0 = \bt+6 \geqslant \bt+2$, which leads to increasing~$\bt$.
Finally, when merging~$\bX_{1+2}(k)$ with~$\bX_3(k)$, we have~$a - a_0 = \sL_p - a_0 \leqslant \sL_p = b - b_2$ and~$\bt+8 \geqslant b_0 = \bt+5 \geqslant \bt+2$, which leads to increasing~$\bt$ once more.
It follows, as promised, that~$\bt = t_k+2$ just after~$\A$ has sorted~$\bX(k)$.

The second step towards proving the invariant consists in showing that, if~$\bt = u_{h,\ell}$ when~$\A$ starts merging two runs~$\bX(2^h,\ell)$ and~$\bX(2^h,\ell+1)$, it decreases~$\bt$ once.
In other words, we shall prove that either Algorithm~\ref{alg:TS-merge} calls the function \textsf{LRMerge} in line~\ref{alg:TS:LR} and~$b_0 \geqslant u_{h,\ell}+9$, or it calls \textsf{RLMerge} in line~\ref{alg:TS:RL} and~$a_2 \geqslant u_{h,\ell}+9$.
In practice, we will simply prove that~$\min\{b_0, a_2\} \geqslant u_{h,\ell+9}$.

Indeed, observe that~$\ell$ is even, and thus that~$t_{2^h \ell + i} = t_{2^h \ell} + i + s_2(i)$ whenever~$0 \leqslant i < 2^{h+1}$.
Thus, when~$2^h \ell \leqslant k < 2^h(\ell+2)$, the run~$\bX(k)$ contains~$\sR(t_k)+2t_k+12 \geqslant 3 t_k+13 \geqslant 2 t_{2^h \ell} + 13$ elements with value~$0$ and~$\sR(t_k)+t_k+12 \geqslant 2 t_k+13 \geqslant 2 t_{2^h \ell} + 13$ elements with value~$2$.
It follows that~$\min\{a_2,b_0\} \geqslant 2^h (t_{2^h \ell} + 13) \geqslant (t_{2^h \ell} + 11) + 2^h \times 2 = u_{h,\ell}+9$.

We prove now our invariant by induction on the number of merges performed by the algorithm~$\A$: we distinguish five types of merges:
\begin{itemize}
\item When sorting the block~$\bX(0)$, the algorithm starts with a parameter~$\bt = \btinit$.

\item When sorting a block~$\bX(k)$, where~$k$ is odd, it has just finished merging the block~$\bX(k-1)$, leaving us with a parameter~$\bt = t_{k-1}+2 = t_k$.

\item When sorting a block~$\bX(2^h k)$, where~$k$ is odd and~$h \geqslant 1$, it has just finished merging the runs~$\bX(2^{h-1},2k-2)$ and 
$\bX(2^{h-1},2k-1)$, leaving us with a parameter
\[\bt = u_{h-1,2k-2}-1 = t_{2^h (k-1)} + 2^h + 1 = 2^h (k-1) + s_2(k-1) + 2^h + 1 = t_{2^h k}.\]

\item When merging two runs~$\bX(2^h,k)$ and~$\bX(2^h,k+1)$, where~$k$ is even and~$h = 0$, it has just finished sorting the block~$\bX(k+1)$, leaving us with a parameter~$\bt = t_{k+1}+2 = t_k+4 = u_{0,k}$.

\item When merging two runs~$\bX(2^h,k)$ and~$\bX(2^h,k+1)$, where~$k$ is even and~$h \geqslant 1$, it has just finished merging the runs~$\bX(2^{h-1},2k+2)$ and~$\bX(2^{h-1},2k+3)$, leaving us with a parameter
\[\bt = u_{h-1,2k+2}-1 = t_{2^h(k+1)} + 2^h + 1 = 2^h(k+1) + s_2(k+1) + 2^h + 1 = u_{h,k}.\]
\end{itemize}

Equipped with this invariant, we can finally compute a lower bound on the number of comparisons that~$\A$ performs.
More precisely, we will count only comparisons performed naïvely while sorting parts~$\bX_1(k)$.
When sorting such a part, we recursively merge sorted clusters of~$2^h$ runs~$\bU$.
To do so, we call the \textsf{LRMerge} function, and naïvely discover~$b_0 = 2^h$ elements with value~$0$ in the right run we are merging.
Thus, in total, we use at least~$\sR(t_k) \log_2(\sR(t_k))$ such comparisons to sort~$\bX_1(k)$, and at least
\[\mathsf{C}_p = \sum_{k = 2^{p-1}}^{2^p-1} \sR(t_k) \log_2(\sR(t_k)) 
\geqslant \sum_{k = 2^{p-1}}^{2^p-1} \sR(k) \log_2(\sR(k)) \geqslant 2^{p-1} \sR(2^{p-1}) \log_2(\sR(2^{p-1})) = 2^{2p-1}p\]
comparisons to sort the array~$A$ itself.
Observing that~$A$ is of length~$n = 2^{p+1} \sL_p \sim (4\ms+1)2^{2p+1}$ proves that~$n \log_2(n) \sim (4\ms+1)2^{2p+2}p \in \mathcal{O}(\mathsf{C}_p)$, which completes the proof.
\end{proof}

In spite of this negative result, we can still prove that \TS's routine and update strategy is harmless when sorting arrays with only two values, thereby making our above construction somehow \emph{as simple as possible}.

\begin{proposition}
\label{pro:linear-sigma=2}
Let~$\A$ be a stable natural merge sort algorithm with the middle-growth property.
If~$\A$ uses \TS's actual routine (including the heuristics for updating the parameter~$\bt$), it requires~$\O(n)$ element comparisons to sort arrays of length~$n$ with two values.
\end{proposition}

\begin{proof}
When merging two runs containing only~$\sigma = 2$ values, Algorithm~\ref{alg:TS-merge} just uses twice~$0$-galloping and once~$(\bt+1)$-galloping, and stops without updating~$\bt$.
Thus, up to wasting a maximum of~$\bt+1$ comparisons per merge,~$\A$ keeps using~$0$-galloping, and Theorem~\ref{thm:middle-few} proves that doing so requires only~$\O(n)$ comparisons in total to sort arrays of length~$n$ with~$2$ values.
\end{proof}

\section{Refined complexity bounds}
\label{sec:precise-bounds-PoS}

One weakness of Theorem~\ref{thm:middle-few} is that it cannot help us to distinguish the complexity upper bounds of those algorithms that have the middle-growth property, although the constants hidden in the~$\O$ symbol could be dramatically different.
Below, we study these constants, and focus on upper bounds of the type~$\mathsf{c} n \H^\ast + \O(n)$ or~$\mathsf{c} n (1 + o(1)) \H^\ast + \O(n)$.

Since sorting arrays of length~$n$, in general, requires at least~$\log_2(n!) = n \log_2(n) + \O(n)$ comparisons, and since~$\H^\ast \leqslant \log_2(n)$ for all arrays, we already know that~$\mathsf{c} \geqslant 1$ for any such constant~$\mathsf{c}$.
Below, we focus on finding matching upper bounds in two regimes: first using a fixed parameter~$\bt$, thereby obtaining a constant~$\mathsf{c} > 1$, and then letting~$\bt$ depend on the lengths of those runs that are being merged, in which case we reach the constant~$\mathsf{c} = 1$.

Inspired by the success of Theorem~\ref{thm:middle-few-naïve}, which states that algorithms with the tight middle-growth property sort array of length~$n$ by using only~$\mathsf{c} n \log_2(n) + \O(n)$ element comparisons with~$\mathsf{c} = 1$, we focus primarily on that property, while not forgetting other algorithms that also enjoyed~$n \H + \O(n)$ or~$n \log_2(n) + \O(n)$ complexity upper bounds despite not having the tight middle-growth property.

\subsection{Fixed parameter}
\label{subsec:fixed-t}

\begin{lemma}
\label{lem:slice-1}
Let~$\T$ be a merge tree induced a stable algorithm on some array~$A$ of length~$n$ with~$\sigma$ dual runs~$S_1,S_2,\ldots,S_\sigma$.
Consider a fixed parameter~$\bt \geqslant 0$, a real number~$u > 1$, and some index~$i \leqslant \sigma$.
Then, for all~$h \geqslant 0$, let~$\T_h$ be a set of pairwise incomparable nodes of~$\T$ (i.e., no node of~$\T_h$ descends from an other one) such that each run~$R$ in~$\T_h$ is of length~$r \geqslant u^h (\bt+1) n / s_i$.
We have
\[\sum_{R \in \T_{\geqslant 0}} \cost_{\bt}^\ast(r_{\rightarrow i}) \leqslant \frac{9 u}{u-1}s_i,\]
where~$\T_{\geqslant 0}$ denotes the union of the sets~$\T_h$.
\end{lemma}

\begin{proof}
For each integer~$h \geqslant 0$, let~$\C(h) = \sum_{R \in \T_h} \cost_{\bt}^\ast(r_{\rightarrow i})$.
Let also~$f \colon x \mapsto \bt + 2 + 2 \log_2(x+1)$ and~$g \colon x \mapsto x \, f(s_i / x)$ .
Both functions~$f$ and~$g$ are concave and increasing on~$(0,+\infty)$.

Then, let~$v = \bt+1$.
Since~$\T_h$ consists of pairwise incompatible runs, we have~$\sum_{R \in \T_h} r_{\rightarrow i} \leqslant s_i$ and~$u^h v n |\T_h| / s_i \leqslant \sum_{R \in \T_h} \leqslant n$, i.e.,~$|\T_h| \leqslant s_i / (u^h v)$.
It follows that
\begin{align*}
\C(h)
& \leqslant \sum_{R \in \T_h} f(r_{\rightarrow i})
\leqslant |\T_h| f\big(\!\sum_{R \in \T_h}\!r_{\rightarrow i} / |\T_h|\big)
\leqslant g(|\T_h|) \\
& \leqslant g\Big(\frac{s_i}{u^h v}\Big) = \frac{s_i}{u^h v} f(u^h v) \\ & \leqslant \frac{s_i}{u^h v}(\bt+2+2\log_2(2^{v+1} u^h)) = \frac{s_i}{u^h v}(3v+3+2 h \log_2(u)) \\
& \leqslant \frac{s_i}{u^h}(6+2 h \log_2(u)).
\end{align*}
Denoting the latter expression by~$\C_+(h)$ and observing that~$\log_2(u) \leqslant 3 (u-1)/2$ for all~$u > 1$, we conclude that
\[\sum_{R \in \T_{\geqslant 0}} \cost_{\bt}^\ast(r_{\rightarrow i}) = \sum_{h \geqslant 0} \C(h) \leqslant \sum_{h \geqslant 0} \C_+(h) = \frac{u s_i}{u-1}\Big(6+2 \frac{\log_2(u)}{u-1}\Big) \leqslant \frac{9 u}{u-1}s_i.\qedhere\]
\end{proof}

\begin{theorem}\label{thm:PoS-constant}
Let~$\A$ be a stable natural merge sort algorithm with the tight middle-growth property.
For each parameter~$\bt \geqslant 0$, if~$\A$ uses the~$\bt$-galloping routine for merging runs, it requires at most
\[(1 + 1 / (\bt+3)) n \H^\ast + \log_2(\bt+1) n + \O(n)\]
element comparisons to sort arrays of length~$n$ and dual run-length entropy~$\H^\ast$.
\end{theorem}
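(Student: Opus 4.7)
The plan is to refine the argument of Theorem~\ref{thm:middle-few} by choosing, for each dual-run value~$i$ and each height~$h$ of the merge tree, whichever of the two branches of~$\cost_{\bt}^\ast$ gives the tighter bound. As in that earlier proof, I would first reduce to the case where~$A$ takes values in~$\{1,2,\ldots,\sigma\}$, and then use Proposition~\ref{pro:2} together with the~$\O(n)$ cost of run detection to reduce the problem to bounding~$S_i := \sum_{R \in \T} \cost_{\bt}^\ast(r_{\rightarrow i})$ for each~$i$.

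The tight middle-growth property yields the sharp bound~$|\R_h| \leqslant n/2^{h-\gamma}$. I would introduce two thresholds: a concentration height~$\lambda_i = \lceil \log_2(n/s_i) \rceil + \gamma$ above which~$|\R_h| \leqslant s_i/2^{h-\lambda_i}$, and a second threshold~$K = \lceil \log_2(\bt+1) \rceil$ capturing the crossover between the two branches of~$\cost_{\bt}^\ast$. For every height~$h < \lambda_i + K$ I would apply the linear branch, so that the identity~$\sum_{R \in \R_h} r_{\rightarrow i} \leqslant s_i$ yields a per-level contribution of at most~$(1+1/(\bt+3)) s_i$. For~$h = \lambda_i + k$ with~$k \geqslant K$ I would apply the logarithmic branch: Jensen's inequality on the concave function~$f(x) = \bt+2+2\log_2(x+1)$, together with the tight middle-growth bound on~$|\R_h|$, produces a per-level contribution of at most~$(s_i/2^k)(\bt+4+2k)$.

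Summing over heights gives $S_i \leqslant (1+1/(\bt+3)) s_i (\lambda_i + K) + s_i \sum_{k \geqslant K} (\bt+4+2k)/2^k$. A direct geometric computation evaluates the tail sum to~$2^{1-K}(\bt+2K+6)$, which is bounded by a universal constant precisely because~$2^K \geqslant \bt+1$. Summing~$S_i$ over~$i$ and using the identities~$\sum_i s_i = n$ and~$\sum_i s_i \log_2(n/s_i) = n\H^\ast$ then yields an upper bound
\[(1+1/(\bt+3)) n\H^\ast + (1+1/(\bt+3))(\log_2(\bt+1) + \gamma + 2) n + \O(n).\]
Since~$\gamma$ is a fixed constant and the function~$\bt \mapsto \log_2(\bt+1)/(\bt+3)$ is bounded above by~$1$ on~$[0,\infty)$, the whole correction term collapses to~$\log_2(\bt+1) n + \O(n)$, matching the claimed bound.

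The main technical obstacle is choosing the crossover~$K$ so that the two branches of~$\cost_{\bt}^\ast$ make comparable contributions. Any smaller~$K$ would leave a~$\Theta(\bt s_i)$ residue in the logarithmic tail, because the per-level~$\bt$~cost would dominate before the geometric decay of~$|\R_h|$ catches up, while any larger~$K$ would inflate the linear part by an extra~$\Theta(s_i)$ per additional level. The choice~$K = \lceil \log_2(\bt+1) \rceil$ equates the two branches' per-level bounds precisely at the crossover, and it is this balance that makes the~$\log_2(\bt+1) n$ coefficient come out to be exactly~$1$ modulo the~$\O(n)$ term.
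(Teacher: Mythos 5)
Your proposal is correct and follows essentially the same approach as the paper's proof: you bound $\sum_{R\in\T}\cost_{\bt}^\ast(r_{\rightarrow i})$ by applying the linear branch of $\cost_{\bt}^\ast$ up to a crossover height and a Jensen-style estimate with the logarithmic branch beyond it, and your crossover $\lambda_i + K = \lceil\log_2(n/s_i)\rceil + \gamma + \lceil\log_2(\bt+1)\rceil$ is the same (up to rounding) as the paper's $\mu+\gamma$ with $\mu=\lceil\log_2((\bt+1)n/s_i)\rceil$. The only differences are cosmetic: you split the threshold into two additive pieces and evaluate the geometric tail slightly differently, but the structure, the use of concavity of $x\mapsto\bt+2+2\log_2(x+1)$, the $|\R_h|\leqslant n/2^{h-\gamma}$ bound, and the final rearrangement via $\log_2(\bt+1)/(\bt+3)=\O(1)$ all match the paper.
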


\begin{proof}
Let us follow a variant of the proof of Theorem~\ref{thm:middle-few}.
Let~$\gamma$ be the integer mentioned in the definition of the statement ``$\A$ has the tight middle-growth property'', let~$\T$ be the merge tree induced by~$\A$ on an array~$A$ of length~$n$, and let~$s_1,s_2,\ldots,s_\sigma$ be the lengths of the dual runs of~$A$.
Like in the proof of Theorem~\ref{thm:middle-few}, we just need to prove that
\[\sum_{R \in \T} \cost_{\bt}^\ast(r_{\rightarrow i}) \leqslant
(1+1/(\bt+3)) \log_2(n / s_i) s_i + \log_2(\bt+1) s_i + \O(s_i)\]
for all~$i \leqslant \sigma$.

Let~$\R_h$ be the set of runs at height~$h$ in~$\T$.
By construction, no run in~$\R_h$ descends from another one, which, like in Lemma~\ref{lem:slice-1}, proves that~$\sum_{R \in \R_h} r_{\rightarrow i} \leqslant s_i$.
Thus, if we set
\[\C_{\bt}(h) = \sum_{R \in\R_h} \cost_{\bt}^\ast(r_{\rightarrow i}),\]
it follows that
\[\C_{\bt}(h) \leqslant
(1+1/(\bt+3)) \sum_{R \in \R_h} r_{\rightarrow i} \leqslant (1+1/(\bt+3)) s_i\]
for all~$h \geqslant 0$.

Now, let~$\mu = \lceil\log_2((\bt+1) n / s_i)\rceil+\gamma$, and let~$\T_h = \R_{h+\mu}$.
By construction, each run~$R$ belonging to the set~$\T_h = \R_{h + \mu}$ is of length~$r \geqslant 2^{h + \mu-\gamma} \geqslant 2^h(\bt+1)n/s_i$.
Thus, applying Lemma~\ref{lem:slice-1} to~$u = 2$ indicates that~$\sum_{h \geqslant \mu} \C_{\bt}(h) \leqslant 18 s_i = \O(s_i)$, and we conclude as desired that
\[\sum_{h \geqslant 0} \C_{\bt}(h) \leqslant (1+1/(\bt+3)) \mu s_i + \O(s_i)
= (1+1/(\bt+3)) \log_2(n/s_i) s_i + \log_2(\bt+1) s_i + \O(s_i). \qedhere\]
\end{proof}

\subsection{Polylogarithmic parameter}
\label{subsec:variable-t}

Letting the parameter~$\bt$ vary, we minimise the upper bound provided by Theorem~\ref{thm:PoS-constant} by choosing~$\bt = \Theta(\H^\ast+1)$, in which case this upper bound simply becomes~$n \H^\ast + \log_2(\H^\ast+1)n + \O(n)$.
However, computing or approximating~$\H^\ast$ before starting the actual sorting process would be both rather unreasonable and not necessarily worth the effort;
indeed, Theorem~\ref{thm:PoS-log} would make this preliminary step useless unless it requires less than~$\log_2(\H^\ast+1)n$ element comparisons, which seems extremely difficult.
Instead, we update the parameter~$\bt$ as follows, which will provide us with a slightly larger upper bound.

\begin{definition}
\label{def:log-galloping}
We call \emph{polylogarithmic} galloping routine the merging routine that, when merging adjacent runs of lengths~$a$ and~$b$, performs the same comparisons and element moves as the~$\bt$-galloping routine for~$\bt = \lceil \log_2(a+b) \rceil^2$.
\end{definition}

We first prove that the overhead of using galloping with this update strategy instead of using a naïve merging routine is at most linear.

\begin{lemma}\label{lem:poly-slice-2}
Let~$\A$ be a stable algorithm with the middle-growth property.
Let~$\T$ be a merge tree induced by~$\A$ on some array~$A$ of length~$n$ with~$\sigma$ dual runs~$S_1,S_2,\ldots,S_\sigma$, and let~$\T^\ast$ be the set of internal nodes of~$\T$.
If~$\A$ uses the polylogarithmic routine for merging runs, it requires no more than
\[\sum_{R \in \T^\ast} \sum_{i = 1}^\sigma \cost^\ast_{\log}(r,r_{\rightarrow i}) +\O(n)\]
comparisons to sort~$\A$, where we set~$\cost^\ast_{\log}(r,m) = \min\{m,6 \log_2(r+1)^2+6\}$.
\end{lemma}

\begin{proof}
Using a parameter~$\bt = \lceil \log_2(r) \rceil^2$ to merge runs~$R'$ and~$R''$ into one run~$R$ requires at most
\[1 + \sum_{i=1}^\sigma
\cost_{\lceil \log_2(r) \rceil^2}^\ast(r'_{\rightarrow i}) +
\cost_{\lceil \log_2(r) \rceil^2}^\ast(r''_{\rightarrow i})\]
element comparisons.
Given that
\begin{align*}
\cost_{\lceil \log_2(r) \rceil^2}^\ast(r'_{\rightarrow i})
& \leqslant \min\{(1+1/\log_2(r)^2) r'_{\rightarrow i}, \log_2(r)^2+3+2 \log_2(r'_{\rightarrow i}+1)\} \\
& \leqslant \min\{r'_{\rightarrow i}, 3\log_2(r+1)^2+3\} + r'_{\rightarrow i} / \log_2(r)^2
\end{align*}
and that~$r'_{\rightarrow i} + r''_{\rightarrow i}=r_{\rightarrow i}$, this makes a total of at most
\[1 + \frac{r}{\log_2(r)^2} + \sum_{i=1}^\sigma \cost^\ast_{\log}(r,r_{\rightarrow i})\]
element comparisons.

Now, let~$\beta > 1$ be the real number mentioned in the definition of the statement ``$\A$ has the middle-growth property'', and let~$\R_h$ be the set of runs at height~$h$ in~$\T^\ast$.
The lengths of runs in~$\R_h$ sum up to~$n$ or less, and thus
\[\sum_{R \in \T^\ast} \frac{r}{\log_2(r)^2} = \sum_{h \geqslant 1} \sum_{R \in \R_h} \frac{r}{\log_2(r)^2} \leqslant \sum_{h \geqslant 1} \sum_{R \in \R_h} \frac{r}{h^2 \log_2(\beta)^2} \leqslant \sum_{h \geqslant 1} \frac{n}{h^2 \log_2(\beta)^2} = \O(n).\]
We conclude by remembering that~$\A$ makes~$n-1$ comparisons to identify runs prior to merging them and performs~$\rho-1 \leqslant n-1$ merges.
\end{proof}

\begin{proposition}
\label{pro:linear-overhead}
Let~$\A$ be a stable natural merge sort algorithm with the middle-growth property.
Let~$\mathsf{U}_A$ be the number of comparisons performed by~$\A$ to sort an array~$A$ when using a naïve merging routine, and let~$\mathsf{V}_A$ be the number of comparisons performed by~$\A$ when using the polylogarithmic galloping routine.
For all arrays~$A$ of length~$n$, we have~$\mathsf{V}_A \leqslant \mathsf{U}_A + \O(n)$.
\end{proposition}

\begin{proof}
Lemma~\ref{lem:poly-slice-2} proves that
\[\mathsf{V}_A = \sum_{R \in \T^\ast} \sum_{i=1}^\sigma \cost^\ast_{\log}(r,r_{\rightarrow i}) + \O(n) \leqslant \sum_{R \in \T^\ast} \sum_{i=1}^\sigma r_{\rightarrow i} + \O(n) = \sum_{R \in \T^\ast} r + \O(n) = \mathsf{U}_A + \O(n). \qedhere\]
\end{proof}

In addition, the polylogarithmic galloping also turns out to be extremely efficient for algorithms with the (tight) middle-growth property.

\begin{lemma}
\label{lem:slice-2}
Let~$\T$ be a merge tree induced a stable algorithm on some array~$A$ of length~$n$ with~$\sigma$ dual runs~$S_1,S_2,\ldots,S_\sigma$.
Consider a real number~$u \in (1,2\,]$ and some index~$i \leqslant \sigma$.
Then, for all~$h \geqslant 0$, let~$\T_h$ be a set of pairwise incomparable nodes of~$\T$ such that each run~$R$ in~$\T_h$ is of length~$r \geqslant u^h \log_2(2 n / s_i)^2 n / s_i$.
We have
\[\sum_{R \in \T_{\geqslant 0}} \cost_{\log}^\ast(r,r_{\rightarrow i}) \leqslant
\frac{6 (10u^2-13u+5)u}{(u-1)^3} s_i,\]
where~$\T_{\geqslant 0}$ denotes the union of the sets~$\T_h$.
\end{lemma}

\begin{proof}
For each integer~$h \geqslant 0$, let~$\C(h) = \sum_{R \in \T_h} \cost_{\log}^\ast(r, r_{\rightarrow i})$.
Let~$f \colon x \mapsto 1 + \log_2(x+1)^2$ and~$g \colon x \mapsto x \, f(n / x)$; the function~$f$ is concave and increasing on~$[2,+\infty)$, and~$g$ is increasing on~$(0,+\infty)$.

Then, let~$z = n/s_i$ and~$v = \log_2(2z)^2$.
Since~$\T_h$ consists of pairwise incompatible runs, we have~$u^h v z |\T_h| \leqslant \sum_{R \in \T_h} r \leqslant n$, i.e.,~$|\T_h| \leqslant s_i / (u^h v)$.
It follows that
\begin{align*}
\C(h) / 6
& \leqslant \sum_{R \in \T_h} f(r)
\leqslant |\T_h| f\big(\!\sum_{R \in \T_h}\!r / |\T_h|\big)
\leqslant g(|\T_h|) \\
& \leqslant g(s_i / (u^h v)) = f(u^h v z) s_i / (u^h v) \\
& \leqslant (1 + \log_2(2^{h+3} z^3)^2) s_i / (u^h v) \\
& \leqslant (1 + (h+ 3\log_2(2z))^2) s_i / (u^h v) \\
& \leqslant  (h^2+6h+10)s_i / u^h.
\end{align*}
where the inequality between the second and third lines simply comes from the fact that
\[1 + u^h v z \leqslant 1 + 4 u^h z^3 \leqslant 8 u^h z^3 \leqslant 2^{h+3} z^3,\]
and the inequality between the last two lines comes from the fact that
\[1 + (h+ 3\log_2(2z))^2 \leqslant \log_2(2z)^2 + (h \log_2(2z) + 3 \log_2(2z))^2 = (h^2+6h+10) v.\]
Setting~$\C_+(h) = (h^2+6h+10) s_i / u^h$, we conclude that
\[\sum_{R \in \T_{\geqslant 0}} \cost_{\log}^\ast(r,r_{\rightarrow i}) \leqslant \sum_{h \geqslant 0} \C(h)
\leqslant 6 \sum_{h \geqslant 0} \C_+(h) = 6 \frac{(10u^2-13u+5)u}{(u-1)^3} s_i.\qedhere\]
\end{proof}

\begin{theorem}\label{thm:PoS-log}
Let~$\A$ be a stable natural merge sort algorithm with the middle-growth property.
If~$\A$ uses the polylogarithmic galloping routine for merging runs,
it requires~$\O(n + n\H^\ast)$element comparisons to sort arrays of length~$n$ and dual run-length entropy~$\H^\ast$.
If, furthermore,~$\A$ has the tight middle-growth property, it requires at most~$n \H^\ast + 2 \log_2(\H^\ast+1) n + \O(n)$ element comparisons to sort such arrays.
\end{theorem}

\begin{proof}
Let us refine and adapt the proofs of Theorems~\ref{thm:middle-few} and~\ref{thm:PoS-constant}.
Let~$\T$ be the merge tree induced by~$\A$ on an array~$A$ of length~$n$ with~$\sigma$ dual runs of lengths~$s_1,s_2,\ldots,s_\sigma$.
For all integers~$h \geqslant 0$, let~$\R_h$ be the set of runs at height~$h$ in~$\T$, and let~$\T^\ast$ be the set of internal nodes of~$\T$.
Let~$\beta \in (1,2]$ and~$\gamma \geqslant 0$ be a real number and an integer such that~$r \geqslant \beta^{h - \gamma}$ for all runs~$R$ of height~$h$ in~$\T$; if~$\A$ has the tight middle-growth property, we choose~$\beta = 2$.

Thanks to Lemma~\ref{lem:poly-slice-2}, we shall just prove that
\[\sum_{R \in \T^\ast} \cost^\ast_{\log}(r,r_{\rightarrow i}) \leqslant
\log_\beta(n/s_i) s_i + 2 \log_\beta(\log_2(n/s_i)+1) s_i + \O(s_i)\]
for all~$i \leqslant \sigma$.
Indeed, the algorithm~$\A$ will then perform no more than
\begin{align*}
\sum_{R \in \T^\ast}\sum_{i=1}^\sigma \cost^\ast_{\log}(r,r_{\rightarrow i}) + \O(n)
& \leqslant \sum_{i=1}^\sigma \log_\beta(n/s_i) s_i + 2 \log_\beta(\log_2(n/s_i)+1) s_i + \O(n) \\
& \leqslant \log_\beta(2) n \H^\ast + 2 \log_\beta(\H^\ast+1) n + \O(n)
\end{align*}
comparisons, the latter inequality being due to the concavity of the function~$x \mapsto \log_\beta(x+1)$.

Then, let~$\R_h$ be the set of runs at height~$h$ in~$\T$, and let
\[\C_{\log}(h) = \sum_{R \in \R_h} \cost^\ast_{\log}(r,r_{\rightarrow i}).\]
Once again,~$\C_{\log}(h) \leqslant \sum_{R \in \R_h} r_{\rightarrow i}
\leqslant \sum_{R \in \R_h} r_{\rightarrow i} = s_i$ for all~$h \geqslant 0$, because~$\R_h$ consists of pairwise incomparable runs~$R$.

Now, let~$\nu = \lceil\log_\beta(n/s_i \log_2(2n/s_i)^2)\rceil + \gamma$.
By construction, each run $R$ in~$\T_h = \R_{h + \nu}$ is of length
\[r \geqslant \beta^{h + \nu - \gamma} \geqslant \beta^h \log_2(2n/s_i)^2 n/s_i.\]
Thus, applying Lemma~\ref{lem:slice-2} to~$u = \beta$ indicates that~$\sum_{h \geqslant \nu} \C_{\bt}(h) = \O(s_i)$, and we conclude that
\[\sum_{h \geqslant 0} \C_{\bt}(h) \leqslant \nu s_i + \O(s_i)
= \log_\beta(n/s_i) s_i + 2 \log_\beta(\log_2(n/s_i)+1) s_i + \O(s_i). \qedhere\]
\end{proof}

Finally, in practice, we could improve our the~$n \H^\ast + 2\log_2(\H^\ast+1)n + \O(n)$ upper bound by adapting our update policy.
For instance, choosing~$\bt = \lceil \log_2(a+b) \rceil \times \lceil \log_2(\log_2(a+b))\rceil^2$ would slightly damage the constant hidden in the~$\O(n)$ side of the inequality~$\mathsf{V}_A \leqslant \mathsf{U}_A + \O(n)$, but would also reduce the number of comparisons required by~$\A$ to
\[n\H^\ast + \log_2(\H^\ast+1)n + \O(\log(\log(\H^\ast+1)+1)n).\]
However, such improvements may soon become negligible in comparison with the overhead of having to compute the value of~$\bt$.

\subsection{Refined upper bounds for \cASS}
\label{subsec:cASS:precise}

\begin{proposition}
\label{pro:cASS:not-tight}
The algorithm \cASS does not have the tight middle-growth property.
\end{proposition}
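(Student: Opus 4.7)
The plan is to exhibit, for every $\gamma \ge 0$, an input on which \cASS produces a merge tree containing a run of height $h$ and length $r < 2^{h-\gamma}$. My approach is to construct an inductive family of inputs $A_1, A_2, \ldots$ whose merge trees have root of height $T_k$ and length $r_k$ with the excess $T_k - \log_2 r_k$ going to infinity with $k$, so that for each $\gamma$ the input $A_{\gamma+2}$ provides the desired counterexample.

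For the base case, I would take $A_1 = (16, 1, 1, 1, 4, 1, 1, 8, 1)$ and simulate Algorithm~\ref{alg:cASS} directly to verify that, at the end of its main loop, the stack is $(33, 1)$, the run of length $33$ being a subtree of height $6$, and that the collapse then produces a root of height $T_1 = 7$ and length $r_1 = 34$. Since $34 < 2^{7-1} = 64$, the input $A_1$ already refutes tight middle-growth for $\gamma = 1$.

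The inductive step would be $A_{k+1} = A_k \cdot (2^{k+5},\, 1)$, appending two extra runs to $A_k$. The plan is to check that these two additional pushes trigger exactly two main-loop merges: the first fuses the top subtree of the stack (the one of height $T_k - 1$ and length $r_k - 1$ left there at the end of $A_k$'s main loop) with the trailing ``$1$'', reconstructing on the stack the full $A_k$-tree, of height $T_k$ and length $r_k$; the second fuses that tree with the newly pushed leaf of length $2^{k+5}$, producing a subtree of height $T_k + 1$ and length $r_k + 2^{k+5}$. The subsequent collapse then merges this subtree with the final ``$1$'', yielding $T_{k+1} = T_k + 2$ and $r_{k+1} = r_k + 2^{k+5} + 1$. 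A trivial induction gives $r_k \le 2^{k+6}$, whence $T_k - \log_2 r_k \ge (2k+5) - (k+6) = k - 1$, and $A_k$ refutes tight middle-growth for every $\gamma \le k - 1$.

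The main thing to verify is the inductive step, namely that the two intended merges really do fire and no others. This reduces to checking the condition $\ell_{h-2} \le \max\{\ell_{h-1}, \ell_h\}$ of Algorithm~\ref{alg:cASS} at each of the two extra pushes, using the bound $\ell(r_k) \le k+5$ (which follows from $r_k \le 2^{k+6}$), its analogue for the top subtree of the stack, and the equality $\ell(2^{k+5}) = k+5$. The resulting off-by-one case analysis is short but delicate and relies on the invariants of Lemma~\ref{lemma:cASS:invariant}; the rest is routine stack-level bookkeeping.
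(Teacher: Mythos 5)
Your proof is correct, but it uses a genuinely different construction from the paper's. The paper exhibits a single closed-form family $A_k$ whose run decomposition has lengths $1 \cdot 2 \cdot 1 \cdot 4 \cdot 1 \cdots 1 \cdot 2^k \cdot 1$, for which \cASS merges the two leftmost stack runs at every opportunity, yielding a caterpillar-shaped merge tree of height $2k$ whose root has length $2^{k+1}+k-1 = o(2^{2k})$. You instead build the family inductively, appending $(2^{k+5},\,1)$ at each step to a hand-picked base case $A_1 = (16,1,1,1,4,1,1,8,1)$, and maintain the invariant that the main loop leaves a stack $(S_k, 1)$ with $S_k$ of height $T_k-1$ and length $r_k-1$. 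I verified your base-case trace (the main loop indeed ends with $(33,1)$, $33$ having height $6$, and the collapse gives a root of height $T_1=7$ and length $r_1=34$), and your inductive step is sound: pushing $2^{k+5}$ fires one \#-merge fusing $S_k$ with the trailing $1$ (since $\ell(s_k) \le k+5$), and pushing the final $1$ fires a second merge fusing $r_k$ with $2^{k+5}$ (since $\ell(r_k) \le k+5$), after which the collapse gives $T_{k+1}=T_k+2$ and $r_{k+1}=r_k+2^{k+5}+1$. Two small points worth noting. First, you actually need the \emph{strict} bound $r_k < 2^{k+6}$ (otherwise $\ell(r_k)=k+6$ and the second merge would not fire); the induction you describe does deliver $r_{k+1} < 2^{k+7}$, but you should state it with the strict inequality rather than $\le$. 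Second, contrary to what you say, Lemma~\ref{lemma:cASS:invariant} is not needed: the two merge-condition checks are direct level comparisons once the bound $r_k < 2^{k+6}$ is in hand. Both approaches prove the proposition; the paper's is somewhat cleaner because the merge tree has a uniform caterpillar shape that admits a one-line verification, whereas yours requires a separate base-case trace and an explicit invariant, but yours has the pedagogical merit of isolating exactly how appending two runs bumps the height by $2$ while adding only $\Theta(2^k)$ to the root length.
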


\begin{proof}
Let~$A_k$ be an array whose run decomposition consists in runs of lengths~$1, 2, 1, 4, 1, 8, 1$, {}$16, 1, \ldots, 1 , 2^k, 1$ for some integer~$k \geqslant 0$.
When sorting the array~$A_k$, the algorithm \cASS keeps merging the two leftmost runs at its disposal.
The resulting tree, represented in Figure~\ref{fig:22-cASS-tree}, has height~$h = 2k$ and its root is a run of length~$r = 2^{k+1}+k-1 = o(2^h)$.
\end{proof}

\begin{figure}[h!]
\begin{center}
\begin{tikzpicture}[scale=0.65]
\draw[very thick]
(0,0) -- (0.5,1) -- (1,0)
(0.5,1) -- (1,2) -- (2,1) -- (2,0);
\foreach \i in {2,...,7}{
 \draw[very thick]
 (\i-1,\i) -- (\i,\i+1) -- (\i+1,\i) -- (\i+1,0);
}
\foreach \i/\j/\k in {0/0/1,1/0/2,2/0/1,3/0/4,4/0/1,5/0/8,6/0/1,%
7/0/16,8/0/1,0.5/1/3,1/2/4,2/3/8,3/4/9,4/5/17,%
5/6/18,6/7/34,7/8/35}{
 \draw[fill=white,very thick] (\i,\j) circle (0.4);
 \node at (\i,\j) {$\k$};
}
\end{tikzpicture}
\end{center}
\caption{Merge tree induced by \cASS on~$A_4$.
Each run is labelled by its length.\label{fig:22-cASS-tree}}
\end{figure}

\begin{theorem}
\label{thm:cASS-log}
Theorems~\ref{thm:PoS-constant} and~\ref{thm:PoS-log} remain valid if we consider the algorithm \cASS instead of an algorithm with the tight middle-growth property.
\end{theorem}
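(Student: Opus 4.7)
The plan is to follow closely the proofs of Theorems~\ref{thm:PoS-constant} and~\ref{thm:PoS-log}, substituting the tight-middle-growth bound $|\R_h| \leqslant n / 2^{h-\gamma}$ by a suitable surrogate tailored to \cASS. Although \cASS lacks the tight middle-growth property (Proposition~\ref{pro:cASS:not-tight}), it does enjoy middle-growth (Theorem~\ref{thm:middle}) and the stronger fast-growth property (Proposition~\ref{pro:fast-growth-cASS}); moreover, the pathological situation in which individual runs have much smaller length than their height warrants is heavily constrained by the stack invariants of Lemma~\ref{lemma:cASS:invariant}. I would therefore stratify the merge tree $\T$ not by height but by the \emph{level} $\ell(R) = \lfloor \log_2 r \rfloor$, setting $\R^k = \{R \in \T : \ell(R) = k\}$, so that every run in $\R^k$ automatically satisfies $r \geqslant 2^k$.

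The key structural lemma to establish is that there exists a universal constant $C$, depending only on \cASS, such that for every merge tree $\T$ induced by \cASS and every level $k \geqslant 0$,
\[|\R^k| \leqslant C n / 2^k \quad \text{and} \quad \sum_{R \in \R^k} r_{\rightarrow i} \leqslant C s_i \text{ for all } i \leqslant \sigma.\]
Both inequalities follow as soon as one bounds the maximal length of a nested chain of same-level runs in $\T$ by a universal constant. This bound on the nesting depth should be extracted from Lemma~\ref{lemma:cASS:invariant} via a case analysis analogous to Lemmas~\ref{lem:cASS::right} and~\ref{lem:cASS::left}: the stack invariant $\ell_i \geqslant \ell_{i+1}+1$ forces the levels of the runs currently on the stack to be essentially distinct, which prevents long chains of same-level ancestors from being built up inside the merge tree.

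Once these two layer-level inequalities are in hand, the remainder of the proof is a mechanical replay of the arithmetic of Theorems~\ref{thm:PoS-constant} and~\ref{thm:PoS-log}: the sums $\sum_{k=0}^{\mu-1} \C_{\bt}(k)$ and $\sum_{h \geqslant 0} \C_{\bt}(\mu+h)$ (and their analogues for $\nu$ in the logarithmic-parameter setting) are bounded exactly as before, with the only change being that $|\R^k|$ now carries a harmless multiplicative constant $C$ that is absorbed into the existing $O(s_i)$ slack terms. Consequently the leading constants $(1+1/(\bt+3))$ and $1$ in front of $n \H^\ast$ in the two target bounds survive unchanged.

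I expect the main obstacle to be the clean proof of this bounded nesting-depth property: Proposition~\ref{pro:cASS:not-tight}'s counterexample already exhibits same-level chains of length two, so the bound cannot trivially be $1$; yet the combined use of parts (i) and (ii) of Lemma~\ref{lemma:cASS:invariant} should rule out chains longer than some small universal integer, by tracking how the level of an inner run relates to the levels of its ancestors and of the stack entries lying to its right at the moment of its creation.
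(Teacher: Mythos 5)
Your structural lemma (bounded nesting depth of same‑level runs) is indeed provable: the argument in the paper's Proposition~\ref{pro:fast-growth-cASS} already shows $\al{3} \geqslant \ell + 1$ for every run, so no chain of same‑level runs in a \cASS merge tree has length greater than~$3$, and each $\R^k$ decomposes into at most three antichains. But the way you then deploy it contains a genuine gap. In the proof of Theorem~\ref{thm:PoS-constant}, the leading term $\mathsf{(1+1/(\bt+3))}\,n\H^\ast$ comes from the \emph{head} of the summation, $\sum_{h < \mu+\gamma} \C_{\bt}(h)$, where the crucial estimate is $\C_{\bt}(h) \leqslant (1+1/(\bt+3)) \sum_{R\in\R_h} r_{\rightarrow i} \leqslant (1+1/(\bt+3)) s_i$ --- that is, each layer contributes at most one copy of $s_i$. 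If instead $\sum_{R\in\R^k} r_{\rightarrow i} \leqslant C s_i$ with $C\geqslant 2$ (and your own remark about Proposition~\ref{pro:cASS:not-tight} shows $C\geqslant 2$ is unavoidable), each layer contributes $C s_i$, and summing over the $\mu \approx \log_2(n/s_i)$ head layers multiplies the whole leading term by~$C$. Only the \emph{tail}, which uses $|\R^k| \leqslant C n/2^k$ inside the concavity estimate, is genuinely insensitive to~$C$. So the claim that the factor~$C$ is ``absorbed into the existing $\O(s_i)$ slack terms'' is false; your argument would only give the target bounds with a leading constant inflated by a factor of at least two, and Theorems~\ref{thm:PoS-constant}/\ref{thm:PoS-log} would not ``remain valid'' in the precise sense asserted.

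The paper avoids this by a different decomposition. Borrowing the notion of \emph{non-expanding} run from~\cite{Ju20} (a run whose level equals that of its parent), it lumps all non-expanding runs into a single extra class $\R_{-1}$, whose total length is $\O(n)$ by~\cite[Theorem~3.2]{Ju20}, hence contributes only $\O(n)$ comparisons. The remaining (\emph{expanding}) runs of a given level are then \emph{automatically} pairwise incomparable: if two expanding runs of the same level were nested, any run strictly between them would be forced to share that level, contradicting that the inner one is expanding. This yields $\sum_{R\in\R_\ell} r_{\rightarrow i}\leqslant s_i$ \emph{without any multiplicative constant}, which is exactly what is needed to preserve the leading constant. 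If you want to rescue your stratification-by-level idea, this is the missing ingredient: separate off the bounded-total-length non-expanding runs first, rather than trying to carry a constant $C$ through the head sum.
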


\begin{proof}
Our proof is similar to the proofs of Theorems~\ref{thm:PoS-constant} and~\ref{thm:PoS-log}.
Let~$\T$ be the merge tree induced by \cASS on an array~$A$ of length~$n$
and dual runs~$S_1,S_2,\ldots,S_\sigma$.
Following the terminology of~\cite{Ju20}, we say that a run~$R$ in~$\T$ is \emph{non-expanding} if it has the same level as its parent~$\aR{1}$, i.e., if~$\ell = \al{1}$.
It is shown, in the proof of~\cite[Theorem 3.2]{Ju20}, that the lengths of the non-expanding runs sum up to an integer smaller than~$3n$.
Hence, we partition~$\T$ as follows.

We place all the non-expanding runs into one set~$\R_{\bot}$.
Then, for each~$\ell \geqslant 0$, we define~$\R_\ell$ as the set of expanding runs in~$\T^\ast$ with level~$\ell$, where~$\T^\ast$ is the set of all internal nodes of~$\T$.
By construction, each run~$R$ in~$\R_\ell$ has a length~$r \geqslant 2^\ell$, and the elements of~$\R_\ell$ are pairwise incomparable.

Now, we first observe that
\begin{align*}
\sum_{R \in \R_\bot} \sum_{i=1}^\sigma \cost^\ast_\bt(r_{\rightarrow i}) & \leqslant \sum_{R \in \R_\bot} \sum_{i=1}^\sigma 2 r_{\rightarrow i}
= \sum_{R \in \R_\bot} 2 r \leqslant 6n \text{ and} \\
\sum_{R \in \R_\bot} \sum_{i=1}^\sigma \cost^\ast_{\log}(r,r_{\rightarrow i})
& \leqslant \sum_{R \in \R_\bot} \sum_{i=1}^\sigma r_{\rightarrow i}
= \sum_{R \in \R_\bot} r \leqslant 3n.
\end{align*}
Consequently, like in the proofs of Theorems~\ref{thm:PoS-constant} and~\ref{thm:PoS-log}, we just need to show for all~$i \leqslant \sigma$ that
\begin{align*}
\sum_{\ell \geqslant 0} \text{\rlap{$\C_{\bt}(\ell)$}\phantom{$\C_{\log}(\ell)$}}
& \leqslant (1+1/(\bt+3)) \log_2(n/s_i) s_i + \log_2(\bt+1) s_i + \O(s_i) \text{ and} \\
\sum_{\ell \geqslant 0} \C_{\log}(\ell) & \leqslant \log_2(n/s_i) s_i + 2 \log_2(\log_2(n / s_i) + 1) s_i + \O(s_i),
\end{align*}
where~$\C_{\bt}(\ell) = \sum_{R \in \R_\ell} \cost_{\bt}^\ast(r_{\rightarrow i})$ and~$\C_{\log}(\ell) = \sum_{R \in \R_\ell} \cost_{\log}^\ast(r,r_{\rightarrow i})$.

Note, however, that
\begin{align*}
\text{\rlap{$\C_{\bt}(\ell)$}\phantom{$\C_{\log}(\ell)$}} & \leqslant (1+1/(\bt+3)) \sum_{R \in \R_\ell} r_{\rightarrow i} \leqslant (1+1/(\bt+3)) s_i \text{ and} \\
\C_{\log}(\ell) & \leqslant \sum_{R \in \R_\ell} r_{\rightarrow i} \leqslant s_i
\end{align*}
for all~$\ell \geqslant 0$.
Then, if we set~$\mu = \lceil \log_2((\bt+1)n/s_i\rceil$ and~$\nu = \lceil \log_2(n/s_i \log_2(2 n/s_i)^2) \rceil$, we observe that~$r \geqslant 2^{\mu+h} \geqslant 2^h (\bt+1)n/s_i$ for all~$r \in \R_{\mu+h}$, and that~$r \geqslant 2^{\nu+h} \geqslant 2^h n/s_i \log_2(2 n /s_i)^2$ for all~$r \in \R_{\nu+h}$.
Hence, applying Lemmas~\ref{lem:slice-1} and~\ref{lem:slice-2} proves, as desired, that
\begin{align*}
\sum_{\ell=0}^{\mu-1} \C_{\bt}(\ell) + \sum_{\ell \geqslant 0} \C_{\bt}(\ell+\mu)
& \leqslant (1+1/(\bt+3)) \mu s_i + \O(s_i) \\
& \leqslant (1+1/(\bt+3)) \log_2(n/s_i) s_i + \log_2(\bt+1) s_i + \O(s_i) \text{ and} \\
\sum_{\ell=0}^{\nu-1} \C_{\log}(\ell) + \sum_{\ell \geqslant 0} \C_{\log}(\ell+\nu)
& \leqslant \nu s_i + \O(s_i) \\
& \leqslant \log_2(n/s_i) s_i + 2 \log_2(\log_2(n/s_i)+1) s_i + \O(s_i).
\qedhere
\end{align*}
\end{proof}

\subsection{Refined upper bounds for \PeS}
\label{subsec:PeS:precise}

\begin{proposition}
\label{pro:PeS:not-tight}
The algorithm \PeS does not have the tight middle-growth property.
\end{proposition}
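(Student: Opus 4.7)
The plan is to construct, for every $\gamma \geq 0$, an array on which \PeS produces a merge tree containing a run $R$ of height $h$ and length $r$ such that $r < 2^{h-\gamma}$. I will obtain all such arrays from a single recursive family $(A_k)_{k \geq 1}$: set $A_1 = (1,1)$, and for $k \geq 1$ define $A_{k+1} = A_k \cdot X_k \cdot A_k$, where $X_k$ is a single run of length $L_k \eqdef |A_k|$.

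The core claim, which I would prove by induction on $k$, is that the first and the last runs of $A_k$ both have length $1$, that $L_k = 2 \cdot 3^{k-1}$, and that \PeS induces on $A_k$ a merge tree of height $h_k = 2k - 1$ whose root has length $L_k$. The heart of the induction is the analysis of the root split of $A_{k+1}$. Its total length is $3 L_k$, so \PeS looks for the partial sum closest to $1.5 L_k$: the partial sums at the end of the first $A_k$ and at the end of $X_k$ equal $L_k$ and $2 L_k$, both at distance $0.5 L_k$ from the midpoint. All other partial sums lie strictly farther, because the runs neighbouring $X_k$ on either side (the last run of the first $A_k$ and the first run of the second $A_k$) both have length $1$ by induction, which pushes every interior partial sum of either copy of $A_k$ to distance at least $0.5 L_k + 1$ from the midpoint. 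The tie-breaking rule of Definition~\ref{def:tree:PeS} then forces the split at the boundary between the first $A_k$ and $X_k$. The left child is a copy of $A_k$ and has height $h_k$ by induction; inside the right child $X_k \cdot A_k$ (of length $2 L_k$), the midpoint coincides exactly with the endpoint of $X_k$, so \PeS uniquely splits $X_k$ off as a leaf, giving a right subtree of height $h_k + 1$. Hence $h_{k+1} = h_k + 2$, and the induction closes.

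It then follows that the root $R$ of the merge tree of $A_k$ satisfies $r / 2^h = L_k / 2^{h_k} = 2 \cdot 3^{k-1} / 2^{2k-1} = (3/4)^{k-1}$, which tends to $0$. Given any $\gamma \geq 0$, choosing $k$ with $(3/4)^{k-1} < 2^{-\gamma}$ yields $r < 2^{h-\gamma}$, disproving the tight middle-growth property. The delicate step will be verifying that no interior partial sum of $A_k$ sneaks within $0.5 L_k$ of the midpoint of $A_{k+1}$; this is exactly where the inductive control on the first and last runs of $A_k$ having length $1$ becomes essential, and why the construction uses a palindromic concatenation rather than, say, attaching $X_k$ on only one side.
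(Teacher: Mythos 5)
Your proposal is correct and follows essentially the same route as the paper: both constructions are palindromic recursions of the form $A_{k+1} = A_k \cdot X_k \cdot A_k$ with a middle run $X_k$ of length $|A_k|$, and both exploit the tie-breaking rule of Definition~\ref{def:tree:PeS} to force a split off-centre, yielding a root of height $\Theta(k)$ but length only $\Theta(3^{k})$ so that $r/2^h \to 0$. The only difference from the paper's family (run lengths $\S_0 = 1$, $\S_{k+1} = \S_k \cdot 3^k \cdot \S_k$) is the base case and a corresponding index shift, which is purely cosmetic.
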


\begin{proof}
Let~$B_k$ be an array whose run decomposition consists in runs of lengths~$1, 1, 3, 3, 9, 9, \ldots, 3^k 3^k$ for some integer~$k \geqslant 0$.
Its length is the integer $b_k = 3^{k+1}-1$ and, when $k \geqslant 1$, it consists in a copy of $B_{k-1}$ and then two runs of lengths $3^k$.
Thus, when sorting $B_k$, \PeS shall recursively sort $B_{k-1}$, then merge it with its neighbouring run of length $3^k$, and merge the resulting run with the rightmost run of length $3^k$.
The resulting tree, represented in Figure~\ref{fig:32-PeS-tree}, has height~$h = 2k+1$ and its root is a run of length~$b_k = 3^{k+1}-1 = o(2^h)$.
\end{proof}

\begin{figure}[h!]
\begin{center}
\begin{tikzpicture}[scale=0.65]
\draw[very thick]
(0,0) -- (0.5,1) -- (1,0)
(0.5,1) -- (1,2) -- (2,1) -- (2,0);
\foreach \i in {2,...,6}{
 \draw[very thick]
 (\i-1,\i) -- (\i,\i+1) -- (\i+1,\i) -- (\i+1,0);
}
\foreach \i/\j/\k in {0/0/1,1/0/1,2/0/3,3/0/3,4/0/9,5/0/9,6/0/27,%
7/0/27,0.5/1/2,1/2/5,2/3/8,3/4/17,4/5/26,%
5/6/53,6/7/80}{
 \draw[fill=white,very thick] (\i,\j) circle (0.4);
 \node at (\i,\j) {$\k$};
}
\end{tikzpicture}
\end{center}
\caption{Merge tree induced by \cASS on~$B_3$.
Each run is labelled by its length.\label{fig:32-PeS-tree}}
\end{figure}

The method of casting aside runs with a limited total length, which we used while adapting the proofs Theorems~\ref{thm:PoS-constant} and~\ref{thm:PoS-log} to \cASS, does not work with the algorithm \PeS.
Instead, we rely on the approach employed by Bayer~\cite{Ba75} to prove the~$n\H+2n$ upper  bound on the number of element comparisons and moves when \PeS uses the naïve merging routine.
This approach is based on the following result, which relies on the notions of \emph{split runs} and \emph{growth rate} of a run.

In what follows, let us recall some notations:
given an array~$A$ of length~$n$ whose run decomposition consists of runs~$R_1,R_2,\ldots,R_\rho$, we set~$e_i = r_1+ \ldots + r_i$ for all integers~$i \leqslant \rho$, the run that results from merging consecutive runs~$R_i,R_{i+1},\ldots,R_j$ is denoted by~$R_{i \ldots j}$.

\begin{definition}
\label{def:split:runs}
Let~$\T$ be a merge tree, and let~$R$ and~$R'$ be the children of a run~$\ovR$.
The \emph{split run} of~$\ovR$ is defined as the rightmost leaf descending from~$R$ if~$r \geqslant r'$, and as the leftmost leaf descending from~$R'$ otherwise.
The length of that split run is then called the \emph{split length} of~$\ovR$, and is denoted by~$\sle(\ovR)$.
Finally, the quantity~$\log_2(\ovr) - \max\{\log_2(r),\log_2(r')\}$ is called \emph{growth rate} of the run~$\ovR$, and is denoted by~$\gr(\ovR)$.
\end{definition}

\begin{lemma}
\label{lem:bayer}
Let~$\T$ be a merge tree induced by \PeS.
We have~$\gr(\ovR) \ovr + 2 \sle(\ovR) \geqslant \ovr$ for each internal node~$\ovR$ of~$\T$.
\end{lemma}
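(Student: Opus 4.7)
The plan is to exploit \PeS's midpoint splitting rule to bound $\sle(\ovR)$ from below, then to reduce the claim to a one-variable inequality. First, I would assume without loss of generality that $r\ge r'$ --- the opposite case is entirely symmetric --- so that $\max\{r,r'\}=r$, $\gr(\ovR)=\log_2(1+r'/r)$, and, writing $\ovR=R_{i\ldots j}$ with children $R=R_{i\ldots k}$ and $R'=R_{k+1\ldots j}$, the split run is the rightmost leaf of~$R$, namely~$R_k$, so that $\sle(\ovR)=r_k$.

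The key step will be the geometric inequality $\sle(\ovR)\ge r-r'$. Setting $m=(e_{i-1}+e_j)/2$, the assumption $r\ge r'$ means $e_k\ge m$, so that $|2e_k-e_j-e_{i-1}|=2(e_k-m)=r-r'$. In the degenerate case $k=i$, the child~$R$ is the single run~$R_i$ and $\sle(\ovR)=r\ge r-r'$ trivially; otherwise Definition~\ref{def:tree:PeS} forces $|2e_k-e_j-e_{i-1}|\le|2e_{k-1}-e_j-e_{i-1}|$. Were $e_{k-1}\ge m$, the right-hand side would equal $2(e_{k-1}-m)$, strictly less than $2(e_k-m)$ since $e_{k-1}<e_k$, a contradiction. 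Hence $e_{k-1}<m$, the right-hand side equals $2(m-e_{k-1})=2r_k-(r-r')$, and rearranging yields $r_k\ge r-r'$.

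For the final step, I would substitute $\sle(\ovR)\ge r-r'$ and $\ovr=r+r'$ into the target inequality, reducing it to $(1+x)\log_2(1+x)\ge 3x-1$ with $x=r'/r\in(0,1]$. Letting $g(x)=(1+x)\log_2(1+x)-(3x-1)$, a short computation gives $g(1)=0$ and $g'(x)=\log_2(1+x)+1/\ln 2-3$; since $g'$ is increasing in $x$ and $g'(1)=1+1/\ln 2-3<0$, the function $g$ is strictly decreasing on $[0,1]$, so $g(x)\ge g(1)=0$ throughout, which finishes the proof.

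The main obstacle I anticipate is spotting and formulating the bound $\sle(\ovR)\ge|r-r'|$: \PeS's rule only yields a weak inequality between absolute values, and translating it into the desired length bound relies on the observation that the adjacent candidate endpoint $e_{k-1}$ (or $e_{k+1}$ in the symmetric case) must lie on the opposite side of the midpoint from $e_k$, for otherwise it would be strictly closer to the midpoint, violating optimality. Once that bound is in hand, the reduction and the calculus verification are routine.
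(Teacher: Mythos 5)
Your proof is correct and follows essentially the same route as the paper's: reduce to the case $r\ge r'$, extract the key bound $\sle(\ovR)\ge r-r'$ from the \PeS minimisation rule via a sign argument on $e_{k-1}$ relative to the midpoint, then verify a one-variable inequality. The paper parametrises by $z=r/\ovr\in[1/2,1]$ and checks $4z-3-\log_2(z)\ge 0$, whereas you use $x=r'/r\in(0,1]$ and check $(1+x)\log_2(1+x)\ge 3x-1$; these are the same inequality under $z=1/(1+x)$, and your explicit treatment of the degenerate case $k=i$ is a small refinement the paper leaves implicit.
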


\begin{proof}
Let~$R = R_{i \ldots k}$ and~$R' = R_{k+1 \ldots j}$ be the children of the run~$\ovR$.
We assume, without loss of generality, that~$r \geqslant r'$.
The case~$r < r'$ is entirely symmetric.
Definition~\ref{def:tree:PeS} then states that
\[|r - r'| = |2 e_k - e_j - e_{i-1}| \leqslant
|2 e_{k-1} - e_j - e_{i-1}| = |r - r' - 2 r_k|,\]
which means that~$r- r' - 2 r_k$ is negative
and that~$r - r' \leqslant 2 r_k + r' - r$.

Then, observe that the function~$f \colon t \mapsto 4t-3-\log_2(t)$ is non-negative on the interval~$[1/2,+\infty)$.
Finally, let~$z = r / \ovr$, so that~$z \geqslant 1/2$,~$\gr(\ovR) = -\log_2(z)$ and~$r' = (1-z) \ovr$:
\[\gr(\ovR) \ovr + 2 \sle(\ovR)
= 2 r_k - \log_2(z) \ovr \geqslant
2 (r - r') - \log_2(z) \ovr =
(f(z)+1) \ovr \geqslant \ovr. \qedhere\]
\end{proof}

\begin{lemma}
\label{lem:split-length}
Let~$\T$ be a merge tree induced by \PeS on an array of length~$n$, and let~$\T^\ast$ be the set of internal nodes of~$\T$.
We have~$\sum_{R \in \T^\ast} \sle(R) \leqslant 2n$.
\end{lemma}

\begin{proof}
Let~$\R$ be the set of leaves of~$\T$.
Each run~$R \in \R$ is a split run of at most two nodes of~$\T$: these are the parents of the least ancestor of~$R$ that is a left run (this least ancestor can be~$R$ itself) and of the least ancestor of~$R$ that is a right run.
It follows that
\[\sum_{R \in \T^\ast} \sle(R) \leqslant 2 \sum_{R \in \R} r = 2 n. \qedhere\]
\end{proof}

\begin{lemma}
\label{lem:small-growth}
Let~$\T$ be a merge tree induced by \PeS on an array with~$\sigma$ dual runs~$S_1,S_2,\ldots,S_\sigma$.
For all~$i \leqslant \sigma$ and all sets~$\X$ of inner nodes of~$\T$, we have
\[\sum_{R \in \X} \gr(R) r_{\rightarrow i} \leqslant \log_2(x) s_i,\]
where~$x$ is the largest length of a run~$R \in \X$.
\end{lemma}

\begin{proof}
Let~$\R$ be the set of leaves of~$\T$.
Without loss of generality, we assume that~$\X$ contains each node $\ovR$ of~$\T$ with length~$\ovr \leqslant x$.
Now, for all runs~$R \in \R$, let~$R^\uparrow$ be the set of those strict ancestors of~$R$ that belong to~$\X$.
Since every child of a node in~$\X$ is either a leaf or a node in~$\X$, the set~$R^\uparrow$ consists of those runs~$\aR{k}$ such that~$1 \leqslant k \leqslant |R^\uparrow|$.
It follows that
\begin{align*}
\sum_{\ovR \in \X} \gr(\ovR) \ovr_{\rightarrow i} &
= \sum_{R \in \R} \sum_{\ovR \in R^\uparrow} \gr(\ovR) r_{\rightarrow i}
= \sum_{R \in \R} \sum_{k=1}^{|R^\uparrow|} \gr(\aR{k}) r_{\rightarrow i} \\
& \leqslant \sum_{R \in \R} \sum_{k=1}^{|R^\uparrow|} \log_2(\ar{k} / \ar{k-1}) r_{\rightarrow i}
= \sum_{R \in \R} \log_2(\ar{|R^\uparrow|} / r) r_{\rightarrow i} \\
& \leqslant \sum_{R \in \R} \log_2(x) r_{\rightarrow i} = \log_2(x) s_i.
\qedhere
\end{align*}
\end{proof}

\begin{theorem}
\label{thm:PeS-constant}
Theorem~\ref{thm:PoS-constant} remains valid if we consider the algorithm \PeS instead of an algorithm with the tight middle-growth property.
\end{theorem}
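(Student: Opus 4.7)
Following the structure of the proof of Theorem~\ref{thm:PoS-constant}, we adapt it to \PeS by using Lemma~\ref{lem:bayer} in place of the tight middle-growth property, which Proposition~\ref{pro:PeS:not-tight} shows \PeS lacks. After the same initial reduction (the $O(n)$ cost of locating runs plus a $+1$ per merge), it suffices to establish for each value $i$ the per-value estimate
\[\sum_{R \in \T,\, R \neq \text{root}} \cost_{\bt}^\ast(r_{\rightarrow i}) \leqslant (1 + 1/(\bt+3))\, s_i \log_2(n/s_i) + \log_2(\bt+1)\, s_i + \O(s_i).\]

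The proof of Theorem~\ref{thm:PoS-constant} partitioned non-root runs into sets $\R_h$ of height-$h$ runs, using (a) pairwise incomparability of $\R_h$ to get $\sum_{R \in \R_h} r_{\rightarrow i} \leqslant s_i$, together with (b) the size bound $|\R_h| \leqslant n/2^{h-\gamma}$ from tight middle-growth. The failure of (b) for \PeS is exactly what Lemma~\ref{lem:bayer} is designed to repair. The plan is to partition non-root runs instead by length classes, for example $\R_\ell = \{R : 2^\ell \leqslant r < 2^{\ell+1}\}$, so that the size bound $|\R_\ell| \leqslant n/2^\ell$ holds trivially. Since runs in $\R_\ell$ may be nested, the substitute for incomparability is an amortised bound: two nested runs $R \subsetneq \ovR$ lying in $\R_\ell$ force $\gr(\ovR) < 1$, whence Lemma~\ref{lem:bayer} yields $2 \sle(\ovR) \geqslant (1-\gr(\ovR)) \ovr > 0$, so every such collision is paid for by a corresponding split length. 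Combining this with the standard observation that each leaf can be the split run of at most two ancestors (hence $\sum_\ovR \sle(\ovR) \leqslant 2n$) limits the overcounting to $\sum_{R \in \R_\ell} r_{\rightarrow i} = \O(s_i)$ uniformly in $\ell$.

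With this surrogate partition in hand, the remainder of the argument transposes directly from the proof of Theorem~\ref{thm:PoS-constant}: for length classes $\ell \leqslant \mu \eqdef \lceil \log_2((\bt+1)n/s_i) \rceil$ the linear bound $\cost_\bt^\ast(m) \leqslant (1 + 1/(\bt+3)) m$ (Lemma~\ref{lem:cost-bound}) sums to $(1 + 1/(\bt+3))(\log_2(n/s_i) + \log_2(\bt+1)) s_i + \O(s_i)$, and for $\ell > \mu$ concavity of $f \colon x \mapsto \bt + 2 + 2 \log_2(x+1)$ combined with $|\R_\ell| \leqslant n/2^\ell$ yields a geometric tail summing to $\O(s_i)$.

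The main obstacle is precisely the amortisation step: proving $\sum_{R \in \R_\ell} r_{\rightarrow i} = \O(s_i)$ uniformly, with a constant small enough that the leading factor $1 + 1/(\bt+3)$ and the offset $\log_2(\bt+1) s_i$ of Theorem~\ref{thm:PoS-constant} are preserved in the final bound rather than multiplied by the chain length of $\R_\ell$. Lemma~\ref{lem:bayer} provides the structural ingredient, but carrying the per-value charging through the tree while keeping the constants sharp is the technical heart of the proof.
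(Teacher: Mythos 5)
Your plan diverges from the paper's proof in its core decomposition, and the step you yourself flag as ``the technical heart'' is indeed a genuine gap that your amortisation scheme does not fill with the required sharpness.

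The paper does not partition runs into length classes $\R_\ell = \{R : 2^\ell \leqslant r < 2^{\ell+1}\}$. Instead it first uses Lemma~\ref{lem:bayer} to absorb the $(1-\gr(\ovR))\ovr$ portion of each merge cost into $4\sle(\ovR)$, replacing $\cost_\bt^\ast(\ovr_{\rightarrow i})$ by a modified cost $\cost_{\bt,\gr(\ovR)}^\heartsuit(\ovr_{\rightarrow i}) = \min\{(1+1/(\bt+3))\gr(\ovR)\,\ovr_{\rightarrow i},\; 2\bt+4+4\log_2(\ovr_{\rightarrow i}+1)\}$, and observes that $\sum_{\ovR} \sle(\ovR) \leqslant 2n$ globally. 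It then splits runs into \emph{small} (length $< (\bt+1)n/s_i$) and \emph{large}, a single threshold rather than a stack of length classes. For small runs it telescopes growth rates along leaf-to-ancestor paths: $\sum_{R\ \text{small}} \gr(R)\, r_{\rightarrow i} \leqslant \sum_j \log_2(\ar{|R_j^\uparrow|}_j/r_j)\,(r_j)_{\rightarrow i} \leqslant \log_2((\bt+1)n/s_i)\, s_i$; the key point is that $\gr(R)$, not $1$, multiplies $r_{\rightarrow i}$, and the $\gr$'s along a nested path sum to the total $\log$-growth of length. For large runs it invokes \PeS's \emph{fast-growth} property (Proposition~\ref{pro:fast-growth-PeS}) to bound $|\R^{\high}_h|$ geometrically and conclude $\O(s_i)$ by concavity.

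The gap in your plan is the uniform bound $\sum_{R\in\R_\ell} r_{\rightarrow i} = \O(s_i)$ with a constant that does not pollute the leading term. Using Lemma~\ref{lem:bayer} and the bound $\sum\sle \leqslant 2n$ one can indeed show that nested chains within a single $\R_\ell$ have bounded length (roughly: for a chain $R_1\subsetneq\cdots\subsetneq R_m$ in $\R_\ell$, $\sum_{k\geqslant 2}\gr(R_k) < 1$ telescopes, so $\sum_{k\geqslant 2}\sle(R_k) \geqslant 2^{\ell-1}(m-2)$, while the split lengths within $R_m$ sum to at most $2r_m < 2^{\ell+2}$, forcing $m = \O(1)$), hence $\sum_{R\in\R_\ell} r_{\rightarrow i} \leqslant C s_i$ for some constant $C > 1$. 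But summing over $\ell \leqslant \mu$ then gives $(1+1/(\bt+3))\,C\,\mu\, s_i$, i.e.\ a leading coefficient $C(1+1/(\bt+3))$ rather than $1+1/(\bt+3)$. To recover the sharp coefficient you would need the total overcount across all levels, $\sum_\ell\bigl(\sum_{R\in\R_\ell} r_{\rightarrow i} - s_i\bigr)$, to be $\O(s_i)$ rather than $\O(\mu s_i)$, which your per-level amortisation does not deliver. The paper's growth-rate telescoping sidesteps this entirely because the multiplier on $r_{\rightarrow i}$ is $\gr(R)$, and the $\gr$'s along each path already sum to $\log_2((\bt+1)n/s_i)$ without any per-level slack.

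Finally, for the tail (your $\ell > \mu$) you propose to reuse $|\R_\ell| \leqslant n/2^\ell$ with concavity, but the paper instead relies on \PeS's fast-growth property to partition the \emph{large} runs by height in the sub-tree they induce; since \PeS fails the tight middle-growth property (Proposition~\ref{pro:PeS:not-tight}), making the height parameter usable at all is precisely why fast-growth enters here, and your length-class decomposition, while valid for the counting bound $|\R_\ell| \leqslant n/2^\ell$, does not by itself ensure pairwise incomparability in $\R_\ell$, which is what the concavity step (Jensen's inequality with $\sum_{R\in\R_\ell} r_{\rightarrow i} \leqslant s_i$) actually needs.
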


\begin{proof}
Let~$\T$ be the merge tree induced by \PeS on an array~$A$ of length~$n$ and dual runs~$S_1,S_2,\ldots,S_\sigma$, and let~$\T^\ast$ be the set of internal nodes of~$\T$.
The algorithm \PeS requires no more than
\[\C = \sum_{R \in \T^\ast} \sum_{i=1}^\sigma \cost_\bt^\ast(r_{\rightarrow i}) + \O(n)\]
comparisons and, for a given run~$R \in \T^\ast$, we have
\begin{align*}
\sum_{i=1}^\sigma \cost_\bt^\ast(r_{\rightarrow i}) - 4 \sle(R)
& {{}\leqslant{}} \sum_{i=1}^\sigma \cost_\bt^\ast(r_{\rightarrow i}) - 2 (1+1/(\bt+3)) \sle(R) \\
& {{}\leqslant{}} \sum_{i=1}^\sigma \min\{(1+1/(\bt+3)) r_{\rightarrow i},\bt+2+2\log_2(r_{\rightarrow i}+1)\} + \phantom{.} \\
& \phantom{{}\leqslant{}} \phantom{\smash{\sum_{i=1}^\sigma}~´} (1+1/(\bt+3)) (\gr(R)-1) r_{\rightarrow i} \\
& {{}\leqslant{}} \sum_{i=1}^\sigma \cost_\bt^{\ast\ast}(\gr(R),r_{\rightarrow i}),
\end{align*}
where we set~$\cost_\bt^{\ast\ast}(\gamma,m) = \min\{(1+1/(\bt+3)) \gamma m,\bt+2+2\log_2(m+1)\}$.

Since \PeS has the fast-growth property, there exist a real number~$\alpha > 1$ and an integer~$\delta \geqslant 0$ such that~$\ar{k} \geqslant \alpha^{k-\delta} r$ for all runs~$R$ of depth at least~$k$ in~$\T$.
Now, given an index~$i \leqslant \sigma$, let~$\X$ be the set of runs~$R$ of length~$r < \alpha^\delta (\bt+1)n/s_i$, and let~$\Y$ be the sub-tree of~$\T$ consisting of runs of length~$r \geqslant \alpha^\delta(\bt+1)n/s_i$.
For all~$h \geqslant 0$, let~$\Y_h$ denote the set of runs of height~$h$ in the tree~$\Y$; by construction,~$\Y_h$ contains only runs of length~$r \geqslant \alpha^h (\bt+1)n/s_i$.

Consequently, Lemma~\ref{lem:small-growth} states that
\[\sum_{R \in \X} \cost_\bt^{\ast\ast}(\gr(R),r_{\rightarrow i}) \leqslant (1+1/(\bt+3)) \sum_{R \in \X} \gr(R) r_{\rightarrow i} \leqslant (1+1/(\bt+3)) \log_2(\alpha^\delta (\bt+1)n/s_i) s_i,\]
whereas Lemma~\ref{lem:slice-1} states that
\[\sum_{R \in \Y} \cost_\bt^{\ast\ast}(\gr(R),r_{\rightarrow i}) \leqslant 
\sum_{R \in \Y} \cost_\bt^\ast(r_{\rightarrow i}) \leqslant 9 \alpha s_i / (\alpha-1).\]
It follows that
\begin{align*}
\C & \leqslant \sum_{R \in \T^\ast} 4 \sle(R) + \sum_{i=1}^\sigma \cost_\bt^{\ast\ast}(\gr(R),r_{\rightarrow i}) + \O(n) \\
& \leqslant \sum_{i=1}^\sigma \Big((1+1/(\bt+3)) \log_2(\alpha^\delta (\bt+1) n/s_i) s_i + \O(s_i)\Big) + \O(n) \\
& \leqslant (1+1/(\bt+3)) n \H^\ast + n \log_2(\bt+1) + \O(n).
\qedhere
\end{align*}
\end{proof}

\begin{theorem}
\label{thm:PeS-log}
Theorem~\ref{thm:PoS-log} remains valid if we consider the algorithm \PeS instead of an algorithm with the tight middle-growth property.
\end{theorem}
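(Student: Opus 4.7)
The plan is to blend the growth-rate and split-run machinery used in the proof of Theorem~\ref{thm:PeS-constant} with the logarithmic-galloping cost bookkeeping of Theorem~\ref{thm:PoS-log}. I would work with the merge tree~$\T$ induced by \PeS on~$A$, with runs~$R_1,\ldots,R_\rho$, dual runs~$S_1,\ldots,S_\sigma$ of lengths~$s_i$, and internal subtree~$\T^\ast$. Using the logarithmic galloping sub-routine exactly as in the proof of Theorem~\ref{thm:PoS-log}, merging two runs into~$\ovR$ costs at most $1 + \sum_i \cost_{\log}^\ast(\ovr, \ovr_{\rightarrow i})$, where $\cost_{\log}^\ast(r, m) = \min\{(1+1/\log_2(r))m,\, 6\log_2(r+1)+6\}$.

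The main new ingredient is a per-$i$ threshold~$T_i = \tau_i \cdot n/s_i$ with $\tau_i = 1 + \log_2(n/s_i)$, splitting~$\T^\ast$ into small runs (those with $r < T_i$, collected in~$\T^\low$) and large runs (those with $r \geqslant T_i$, collected in~$\T^\high$). This enlarged threshold---analogous to the $(\bt+1)n/s_i$ used in the proof of Theorem~\ref{thm:PeS-constant}---produces a $1/\tau_i$ saving in the count of large runs, which will be needed to absorb the extra logarithmic factor contributed by the log arm of the cost.

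For small runs I would bound $\cost_{\log}^\ast(r, r_{\rightarrow i}) \leqslant (1+1/\log_2(r))r_{\rightarrow i}$ and then apply Lemma~\ref{lem:bayer} to replace $(1+1/\log_2(r))r$ by $(1+1/\log_2(r))\gr(R)r + 4\sle(R)$; the $\sle$ error sums over~$\T^\ast$ and over~$i$ to at most~$8n$, just as in the proof of Theorem~\ref{thm:PeS-constant}. The remaining $\gr$-weighted sum would then be evaluated by telescoping along each leaf's chain of small ancestors $\aR{1}_j, \ldots, \aR{K_j^i}_j$, where $K_j^i$ is the number of small ancestors of~$R_j$: setting $x_k = \log_2(\ar{k}_j)$ and using $\gr(\aR{k}_j) \leqslant x_k - x_{k-1}$, one gets
\[\sum_{k=1}^{K_j^i}(1+1/x_k)(x_k - x_{k-1}) \leqslant (x_{K_j^i} - x_0) + \sum_k (x_k - x_{k-1})/x_k.\]
The first part telescopes to at most $\log_2(T_i/r_j)$, and the second is a lower Riemann sum for $1/x$, bounded by $\ln(\log_2 T_i / \log_2 r_j) + O(1)$. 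Summing over~$j$, dropping the helpful $-\log_2 r_j$ term, and then summing over~$i$ via Jensen's inequality applied to the concave function $\log_2(1+\cdot)$ (as in the proof of Theorem~\ref{thm:PoS-log}) gives a small-runs total of $n\H^\ast + O(n\log_2(\H^\ast+1))$.

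For large runs I would instead use the log arm $\cost_{\log}^\ast(r, r_{\rightarrow i}) \leqslant 6\log_2(r+1)+6$ and invoke Proposition~\ref{pro:fast-growth-PeS}, which furnishes a constant~$\kappa$ with $\ar{\kappa} \geqslant 2r$ for every run. Inside the large sub-tree this forces each run at height~$h$ to have length at least $2^{h/\kappa-1} T_i$, so the count of such runs is at most $2^{1-h/\kappa} s_i/\tau_i$; applying Jensen to the concave $\log_2(\cdot+1)$ then bounds the cost at height~$h$ by $O\bigl((2^{-h/\kappa} s_i/\tau_i)(\tau_i + h/\kappa)\bigr)$, which telescopes over~$h$ to $O(s_i)$ and over~$i$ to $O(n)$. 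The two main obstacles I expect are handling the Riemann-sum step when $r_j = 1$ (treated by isolating the $k=1$ term, which contributes an extra $O(1)$ per leaf, hence $O(n)$ overall), and tuning the threshold~$T_i$ precisely enough that the $1/\tau_i$ gain in the large-runs count cancels the $\log_2(r+1)$ growth of the log cost bound. Putting the two contributions together yields the advertised $n\H^\ast + O(n \log_2(\H^\ast+1)) + O(n)$ upper bound, matching the target of Theorem~\ref{thm:PoS-log}.
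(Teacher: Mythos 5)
Your proposal is correct, and it follows a genuinely different route from the paper's. You re-use Theorem~\ref{thm:PeS-constant}'s small/large size threshold, inflated to $T_i = (1+\log_2(n/s_i))\,n/s_i$, and treat the $1/\log_2(r)$ correction to the $\gr$-telescoping by a per-leaf Riemann-sum estimate (isolating the $k=1$ term for $r_j=1$); the large part is handled via heights in the large subtree, using \PeS's fast-growth property (Proposition~\ref{pro:fast-growth-PeS}) to bound $|\R^\high_h|$, and Jensen on the log arm. The paper instead introduces a second use of Lemma~\ref{lem:bayer}: it classifies inner nodes as \emph{biased} when $\gr(R)\leqslant\ln 2$, shows their total length is $\O(n)$ (so they cost $\O(n)$ via the linear arm), and then groups the remaining unbiased runs by $\lfloor\log_\theta r\rfloor$ with $\theta = 2^{\ln 2}$ --- unbiasedness guarantees that runs sharing this index are pairwise incomparable, which is what makes both the harmonic-sum bound on the $1/\log_2(r)$ correction ($\D(h)\leqslant s_i/(h\ln 2)$) and the Jensen step on the log arm ($\E(h)$) go through without invoking fast-growth. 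Each approach has its advantages: the paper's biased/unbiased split avoids fast-growth entirely, giving incomparability for free at every length scale, whereas your height-in-subtree grouping for the large part and your per-leaf Riemann sum for the small part are perhaps more direct and keep the bookkeeping closer to that of Theorem~\ref{thm:PeS-constant}. Both land at $n\H^\ast + (1+\ln 2)\,n\log_2(\H^\ast+1) + \O(n)$, which is within the stated $n\H^\ast + 2\log_2(\H^\ast+1)\,n + \O(n)$ budget.
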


\begin{proof}
Let us reuse the notations and the main lines of the proof of Theorem~\ref{thm:PeS-constant}.
With these notations, \PeS requires no more than
\[\C = \sum_{R \in \T^\ast} \sum_{i=1}^\sigma \cost^\ast_{\log}(r,r_{\rightarrow i}) + \O(n)\]
element comparisons and, for a given run $R \in \T^\ast$, we have
\begin{align*}
\sum_{i=1}^\sigma \cost^\ast_{\log}(r,r_{\rightarrow i}) - 2 \sle(R)
& \leqslant \sum_{i=1}^\sigma \min\{r_{\rightarrow i},6\log_2(r+1)^2+6\} + (1-\gr(R)) r_{\rightarrow i} \\
& \leqslant \sum_{i=1}^\sigma \cost^{\ast\ast}_{\log}(r,\gr(R) r_{\rightarrow i}),
\end{align*}
where we set $\cost^{\ast\ast}_{\log}(r,m) = \min\{m,6\log_2(r+1)^2+6\}$.

Let $\alpha > 1$ and $\delta \geqslant 0$ be a real number and an integer such that $\ar{k} \geqslant \alpha^{k - \delta} r$ for all runs $R$ of depth at least $k$ in $\T$.
Then, given an integer $i \leqslant \sigma$, let~$\X$ be the set of runs~$R$ of length~$r < \alpha^\delta n/s_i \log_2(2 n/s_i)^2$, and let~$\Y$ be the sub-tree of~$\T$ consisting of runs of length~$r \geqslant \alpha^\delta n/s_i \log_2(2 n/s_i)^2$.
For all~$h \geqslant 0$, let~$\Y_h$ denote the set of runs of height~$h$ in the tree~$\Y$; by construction,~$\Y_h$ contains only runs of length~$r \geqslant \alpha^h n/s_i \log_2(2 n/s_i)^2$.

Consequently, Lemma~\ref{lem:small-growth} states that
\[\sum_{R \in \X} \cost_{\log}^{\ast\ast}(r,\gr(R)r_{\rightarrow i}) \leqslant \sum_{R \in \X} \gr(R) r_{\rightarrow i} \leqslant \log_2(\alpha^\delta n/s_i \log_2(2 n/s_i)^2) s_i,\]
whereas Lemma~\ref{lem:slice-1} states that
\[\sum_{R \in \Y} \cost_{\log}^{\ast\ast}(r,\gr(R)r_{\rightarrow i}) \leqslant 
\sum_{R \in \Y} \cost_{\log}^\ast(r,r_{\rightarrow i}) = \O(s_i).\]
It follows that
\begin{align*}
\C & \leqslant \sum_{R \in \T^\ast} 2 \sle(R) + \sum_{i=1}^\sigma \cost_{\log}^{\ast\ast}(r,\gr(R)r_{\rightarrow i}) + \O(n) \\
& \leqslant \sum_{i=1}^\sigma \Big(\log_2(\alpha^\delta n/s_i \log_2(2 n/s_i)^2) s_i + \O(s_i)\Big) + \O(n) \\
& \leqslant n \H^\ast + 2 n \log_2(\H^\ast+1) + \O(n).
\qedhere
\end{align*}
\end{proof}

\bigskip

\end{document}